\newcommand{\E}{\mathbb{E}}
\newcommand{\hu}{\hat{u}}
\newcommand{\Q}{\mathcal{Q}}
\newcommand{\eps}{\varepsilon}
\newcommand{\cost}{\mathrm{Cost}}
\renewcommand{\cost}{\mathrm{Size}}
\newcommand{\A}{\mathcal{A}}
\newcommand{\M}{\mathcal{M}}
\renewcommand{\cite}{\citep}
\newtheorem{theorem}{Theorem}[section]
\newtheorem{lemma}[theorem]{Lemma}
\theoremstyle{definition}
\newtheorem{definition}[theorem]{Definition}
\newtheorem{example}[theorem]{Example}
\title{Fair Multiwinner Elections with Allocation Constraints\thanks{This work is supported by NSF grant  CCF-2113798.}}
\author{Ivan-Aleksandar Mavrov\thanks{Duke University, Durham, NC 27708-0129. Email: \texttt{ivanaleksandar.mavrov@duke.edu}}\and Kamesh Munagala\thanks{Department of Computer Science, Duke University, Durham, NC 27708-0129. Email: \texttt{kamesh@cs.duke.edu}} \and Yiheng Shen\thanks{Department of Computer Science, Duke University, Durham, NC 27708-0129. Email: \texttt{yiheng.shen@duke.edu}}}
\date{}
\begin{document}
\maketitle

\begin{abstract}
We consider the classical multiwinner election problem where the goal is to choose a subset of $k$ unit-sized candidates (called {\em committee}) given utility functions of the voters. We allow {\em arbitrary} additional constraints on the chosen committee, and the utilities of voters to belong to a very general class of set functions called $\beta$-self bounding. When $\beta = 1$, this class  includes XOS (and hence, submodular and additive) utilities as special cases. We define a novel generalization of core stability called {\em restrained core} to handle constraints on the committee, and consider multiplicative approximations on the utility under this notion. 

Our main result is the following: If a smooth version of  Nash Welfare is globally optimized over committees that respect the constraints, then the resulting optimal committee lies in the $e^{\beta}$-approximate restrained core for $\beta$-self bounding utilities and arbitrary  constraints. As a result we obtain the first constant approximation  for stability with arbitrary additional constraints even for additive utilities  (factor of $e$), as well as the first analysis of the stability of Nash Welfare with XOS functions even in the absence of constraints. 

We complement this positive result by showing that the $c$-approximate restrained core can be empty for $c < 16/15$ even for additive utilities and one additional constraint. Furthermore, the exponential dependence on $\beta$ in the approximation is unavoidable for $\beta$-self bounding functions even in the absence of any constraints.

We next present improved and tight approximation results for multiwinner elections with simpler classes of utility functions and simpler types of constraints. We also present an extension of restrained core to extended justified representation with constraints, and show an existence result for the special case of matroid constraints. We finally generalize our results to the setting when candidates have arbitrary sizes (Participatory Budgeting) and there are no additional constraints.    
Our proof techniques are different from previous analyses of Nash Welfare and are of independent interest.

\end{abstract}
\newpage

\section{Introduction}
The multiwinner election problem~\cite{AzizChapter,EndrissBook,VinceBook,CC,thiele1895om,Monroe,Brams2007} is central to social choice, and has attracted attention for over a century. In this problem, there is a set $V$ of $n$ voters and a set $C$ of $m$ candidates, out of which a {\em committee} of $k$ candidates needs to be chosen. Voters express preferences over subsets of candidates. In this paper, we will assume these are specified via cardinal utility functions $\{u_i, i \in V\}$.  A generalization of this problem is called Participatory Budgeting~\cite{cabannes2004participatory,aziz2021participatory,pbstanford,knapsackVoting,DBLP:conf/wine/FainGM16}, where the candidates are public projects, whose size is their monetary cost, and a feasible committee is constrained by a total size (or budget) of $k$. This generalization is motivated by real-world budgeting elections.

\subsection{Background and Motivation}

\paragraph{Utility Functions.} Each voter $i$ is associated with a non-negative function $u_i(\cdot)$, where $u_i(T)$ captures their utility for committee $T \subseteq C$. We assume these functions satisfy two properties:
\begin{itemize}
\item {\em Monotonicity.} $u_i(T) \le u_i(T \cup \{j\})$ for all $T \subseteq C$ and $j \in A$, with $u_i(\emptyset) = 0$.
\item {\em $1$-Lipschitz.} $u_i(T) - u_i(T \setminus \{j\}) \le 1$ for all $T \subseteq C$ and $j \in A$.
\end{itemize} 

If there are no other constraints on the utilities, we call them {\em general}. In this paper, we will consider several natural utility functions in increasing order of generality:
\begin{itemize}
\item {\sc Approval}. Each voter $i$ has an {\em approval set} $A_i \subseteq C$. Their utility for  $T$ is $u_i(T) = |T \cap A_i|$.
\item {\sc Additive}. Each voter $i$ has utility $u_{ij}$ for $j \in C$. For committee $T$, $u_i(T) = \sum_{j \in T} u_{ij}$.
\item {\sc Submodular}.  For any $T_1 \subseteq T_2$ and $j \in T_1$: 
$$u_i(T_1) - u_i(T_1 \setminus \{j\}) \ge u_i(T_2) - u_i(T_2 \setminus \{j\}).$$
\item {\sc XOS} \cite{LehmannLN,Feige-subadditive}: For additive functions $\{u_{ijq}, j \in C, q \in [\ell]\}$,  
$$ u_i(T) = \max_{q = 1}^{\ell} \sum_{j \in T} u_{ijq}.$$
\item $\beta$-{\sc self bounding} \cite{Lugosi}. Given constant $\beta \ge 1$, for each $T \subseteq C$:
$$ \sum_{j \in T} \left( u_i(T) - u_i(T \setminus \{j\}) \right) \le \beta \cdot u_i(T). $$ 
\end{itemize}

We note that approval utilities are a special case of additive, which are a special case of submodular, which are a special case of XOS, which are a special case of $1$-self bounding~\cite{Lugosi}. Note that though XOS functions are sub-additive, in general, $\beta$-self bounding functions {\em need not be sub-additive}, where sub-additivity means that $u_i(A \cup B) \le u_i(A) + u_i(B)$ for all $A, B \subseteq C$.

To motivate these classes, approval utilities capture the classical setting of ``approval ballots'' in elections, and have a rich history in social choice. See the recent book~\cite{LacknerS} for a comprehensive survey of this topic. Submodular functions capture diminishing returns from choosing additional candidates, and have been widely studied as a discrete analog of concavity. 

XOS functions can be motivated in settings where individuals vote on {\em behalf of a family}. Consider Participatory Budgeting, where the projects either pertain to children or adults, and are additive within each group. An individual voting on behalf of themselves and their children may feel their taxes have been well spent if the {\em maximum} utility received by anyone in their family is large. 

Similarly, in graph theory, the maximum size of a subgraph for any hereditary property is XOS (see~\cite{Dubashi}). Such functions can capture {\em diversity or harmony} in the committee. Consider approval utilities with a twist: There is a graph $G$ on candidates, where an edge captures ``too similar'', say in terms of opinion. Given committee $W$ and voter $i$'s approval set $A_i$, their utility is the maximum independent set of the sub-graph induced on $W \cap A_i$.  This captures opinion diversity in the subset of approved candidates that are on the committee, and is XOS since independent set is hereditary. On the other hand, if the graph edges model a social network and are interpreted as ``gets along with'', the voter's utility may be the maximum size of a clique in $W \cap A_i$, which corresponds to the maximum sub-committee among approved candidates that all get along. This captures ``harmony'' in the committee from the voter's perspective, and is XOS as well.

If instead of defining the utility from diversity (resp. harmony) as the size of the maximum independent set (resp. max clique), this is defined as $u_i(W) = \log N(A_i \cap W)$, where $N(A_i\cap W)$ is the number of independent sets (resp. cliques) in the subgraph on $W \cap A_i$, such utilities are called ``combinatorial entropies'' and remain $1$-self bounding~\cite{Lugosi}. 

\paragraph{Fairness via Core Stability.} An important consideration in multiwinner elections is fairness via {\em proportional representation}. In the context of multi-winner elections, one widely studied notion of proportionality is {\em core stability}~\cite{scarf1967core,DBLP:conf/ec/FainMS18,Droop,thiele1895om,lindahl1958just}. Given a committee $W$ of size $k$, a subset $S \subseteq V$ of voters forms a {\em blocking coalition} if there is another committee $T$ of size $\lfloor k \frac{|S|}{n} \rfloor$, such that for all voters $i \in S$, we have $u_i(T) > u_i(W)$. A committee $W$ is said to be core-stable if it does not admit to blocking coalitions. 

The quantity $\lfloor k \frac{|S|}{n} \rfloor$ represents the ``endowment'' of coalition $S$. To interpret it, imagine each candidate costs a unit amount, and this cost is paid for evenly as tax by the population. Therefore, each voter's endowment in terms of tax contribution is $k/n$, so that if coalition $S$ uses its total endowment, it can ``purchase'' a blocking committee of size $\lfloor k \frac{|S|}{n} \rfloor$. The core therefore implies no subset of voters have a justified complaint in terms of how their tax money was spent.

Note that the core is {\em scale-invariant}, so that the definition is robust to scaling utility functions differently for different voters. Therefore, the $1$-Lipschitz condition on the utilities is w.l.o.g.

The core is the most general notion of proportionality, and subsumes Pareto-optimality and proportionality. It is known that when candidates can be {\em fractionally chosen}, then the core exists via a market clearing solution called the {\em Lindahl equilibrium} that admits to a fixed point solution \cite{foley1970lindahl,lindahl1958just}. However, it is easy to construct examples even with additive utility functions where a core stable solution need not exist. Motivated by this impossibility, various restrictions and approximations have been defined. For instance, various notions of justified representation~\cite{JR,Sanchez,PJR2018} restrict the coalitions of voters that can be blocking. In this paper, we consider the following well-studied notion of multiplicative approximate core\cite{DBLP:conf/ec/FainMS18,PetersS20,DBLP:journals/corr/PetersPS20,MunagalaSWW22}:

\begin{definition}[$\gamma$-approximate Core]
\label{def:alphaCore}
A committee $W$ of size $k$ is in the \emph{$\gamma$-approximate} core for $\gamma \ge 1$ if there is no $S \subseteq V$ and $T \subseteq C$ with $|T| \leq \frac{|S|}{n} \cdot k$, such that $u_i(T) \ge \gamma \cdot (u_i(W) + 1)$ $\forall \ i \in S$. 
\end{definition}

Here, the multiplicative guarantee is against $u_i(W)+1$, since no multiplicative approximation is possible against $u_i(W)$ even with additive utilities~\citep{DBLP:conf/ec/FainMS18,DBLP:journals/teac/ChengJMW20}. We use an additive term of $1$ because the utilities are $1$-Lipschitz.

\newcommand{\pav}{\mathbf{pav}}
\newcommand{\gpav}{\mathbf{snw}}
\newcommand{\gpv}{\mathbf{gpav}}

\paragraph{Proportional Approval Voting (PAV).} This is a classical committee selection rule for multiwinner elections with {\sc approval} utilities, dating back a century to Thiele~\cite{thiele1895om}. For integer $x \ge 1$, let $H(x) = \sum_{y=1}^x \frac{1}{y}$ denote the harmonic sum till $x$. We define $H(0) = 0$. The PAV score of a committee $W$ is defined as:
\begin{equation}
\label{eq:pav} 
\pav(W) = \sum_{i=1}^n H(u_i(W)). 
\end{equation}

Consider the following algorithm that we will term {\sc Local}: 

\begin{quote}
{\sc Local.} Given the current committee $W$ of size $k$, if there is a $j_1 \in W$ and $j_2 \notin W$ such that $\pav(W \cup \{j_2\} \setminus \{j_1\}) > \pav(W)$, then replace $W$ by $W \cup \{j_2\} \setminus \{j_1\}$. 
\end{quote}

When this process terminates, we have a local optimum for the $\pav$ score. The work of~\cite{JR,Sanchez} shows that any such local optimum satisfies a special case of the core termed {\em extended justified representation} (EJR), where the blocking coalitions satisfy certain cohesiveness conditions. More recently and more relevant to us, it was shown by~\cite{PetersS20} that any such local optimum also lies in the $2$-approximate core. Further, they show this result is tight -- any rule that maximizes the sum of symmetric concave functions over voters' utilities cannot do better than a $2$-approximation. (As an aside, it is an open question whether a $1$-approximate core exists for this setting via a rule not based on scoring functions.)

\paragraph{Generalizations of PAV.} In this paper, we will consider modifications of the PAV rule to allow for real-valued utility functions. We first define {\em Smooth Nash Welfare}, which has been previously studied in~\cite{FairKnapsack,DBLP:conf/ec/FainMS18}. The score of committee $W$ is defined as:
\begin{equation} \label{eq:gpav}
\gpav(W) = \sum_{i=1}^n \ln(1+u_i(W)). 
\end{equation}
The second generalization is new, and we term it {\em Generalized PAV}. For $x \ge 0$, we define
$$\Phi(x) = H(\lfloor x \rfloor) + \frac{x - \lfloor x \rfloor}{\lceil x \rceil}.$$
Then the score of committee $W$ is defined as:
\begin{equation} \label{eq:gpav2}
\gpv(W) = \sum_{i=1}^n \Phi(u_i(W)). 
\end{equation}
These rules are very similar to each other. The $\gpv$ rule reduces to PAV for approval utilities, and satisfies properties like EJR there. On the other hand, the $\gpav$ rule is analytically simpler and leads to somewhat better approximation bounds in our analysis. 

In~\cref{thm:matroid}, we show that the argument in~\cite{PetersS20} can be extended to show that a local optimum for $\gpav$ lies in the $2$-approximate core with submodular utilities. However, submodular utilities represents the limit to which {\sc Local} lies in the approximate core. Once we consider very simple XOS utilities, the following example that local optima to $\gpav$ or $\gpv$ need not lie in any $\gamma$-approximate core for constant $\gamma$. 

\begin{example}
\label{eg:xos}
There are $m = 2k$ candidates and $n = k$ voters, where $k$ is the committee size. There are two sets of $k$ candidates each: $A = \{a_1, \ldots, a_k\}$, and $B = \{b_1, \ldots, b_k\}$. The utility function of voter $i$ is as follows: For set $T$, $u_i(T) = \max(|T \cap B|, |T \cap \{a_i\}|) $. 

Since the utility function $u_i$ is the maximum of two additive functions, it is XOS. Consider the committee $W = A$. If any $a_i$ is replaced by any $b_j$, the utilities of all voters are unchanged at value $1$. Therefore, $W = A$ is a local optimum to $\gpav$ (resp. $\gpv$). However, all voters can together choose blocking committee $B$, which gives each of them a factor $k$ larger utility. Therefore, the local optimum $A$ does not lie in the $\gamma$-core for any constant $\gamma$.
\end{example}


Indeed,  no fairness analysis of Nash Welfare type objectives is known for multiwinner elections under XOS utilities and beyond. Since these utilities are convex, this begs the question: {\em Is concavity or submodularity the limit to which the Nash Welfare allocation is fair?}


\renewcommand{\P}{\mathcal{P}}

\subsection{Multiwinner Elections with Allocation Constraints: Restrained Core}
In addition to going beyond submodular utilities, our focus in this paper is the practically relevant aspect of having exogenous constraints on a feasible committee. We assume there is a set $\P$ of feasible committees (each of size at most $k$), and the chosen committee $W$ must belong to this set. 

Several types of constraints could arise in practice, and we now give some examples.

\begin{itemize}
    \item {\em  Matroid Constraint.} Multiwinner elections with a single matroid constraint were previously considered in~\cite{DBLP:conf/ec/FainMS18}. Here, $\P$ consists of all independent sets of size at most $k$ in the matroid $\mathcal{M}$. The simplest example of matroids is a partition matroid constraint. The set $C$ of candidates are partitioned into disjoint groups $G_1, G_2, \ldots, G_{\ell}$, and any feasible committee of size $k$ can choose at most $k_i$ candidates from group $G_i$, where the $k_i$ are exogenously specified. As an example, the groups could represent geographic regions the candidates hail from, or the type of project in Participatory Budgeting. 
    \item {\em Packing Constraints.} Here, there are multiple downward-closed constraints, meaning that any sub-committee of a feasible committee is also feasible. For instance, imagine candidates belong to multiple overlapping groups (different races, genders, income levels), and there is a constraint on the number of candidates that can be chosen from any group. 
    \item {\em Independent Set.} This is a special case of packing constraints. We have a graph over the candidates, with the constraint that a feasible committee is an independent set in this graph. This captures pairs of candidates who have conflicts or pairs of projects that cannot simultaneously be funded. These projects cannot be simultaneously put on the committee. 
    \item {\em Rooney Rule. } Going beyond packing constraints, we can have minimum (or covering) requirements. For instance, if we seek diversity in the selected candidates, we could impose minimum numbers on candidates chosen from certain groups. As an example, a committee needs to include at least $x$ female candidates, or a Participatory Budgeting outcome needs to include at least one public safety project and at least two child-friendly projects.
\end{itemize}

In this paper, we consider the most general model where the set $\P$ of feasible committees of size at most $k$ can be an {\em arbitrary subset} of $2^{C}$. Though it is tempting to use \cref{def:alphaCore} while restricting the blocking committee $T$ to also lie within $\P$, the $\gamma$-core may be empty for any constant $\gamma$ even for a single packing (or partition matroid) constraint.

\begin{example}
\label{eg:rest1}
Consider {\sc approval} utilities. There are $q = \sqrt{k}$ groups $V_1, V_2, \ldots, V_q$ of voters each of size $n/q$. The committee size is $k$. Corresponding to each group $V_j$, there is a disjoint set $T_j$ of $q$ candidates, each of which are approved by all voters in $V_j$. There are infinitely many dummy candidates not approved by any voters. The partition matroid constraint insists that at most $q$ candidates from $\cup_{j=1}^q T_j$ and any number of dummy candidates can be chosen in any feasible committee. The instance is illustrated in \cref{fig:comparison_example}. We therefore choose at most one candidate from some group $T_{j'}$. But this group can deviate and choose all of $T_{j'}$ as the blocking committee (as shown in the red part of \cref{fig:comparison_example}), increasing their utility by a factor of $\Omega(\sqrt{k})$.
\end{example}

\begin{figure}[htbp]
    \centering    \includegraphics[scale=0.9]{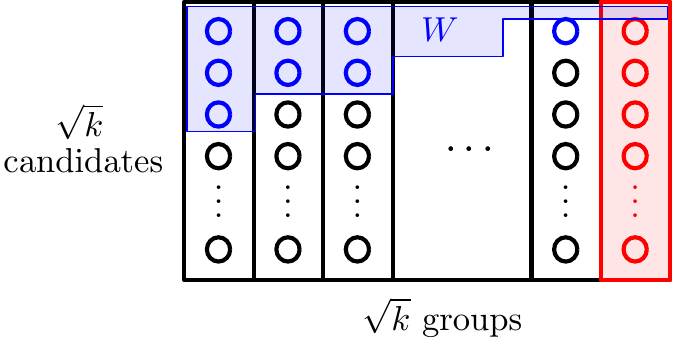}
    \caption{Illustration of \cref{eg:rest1}. Each group of candidates $T_j$ is represented as a column. The current committee $W$ is represented by the blue part. All the groups are sorted in decreasing order of $|T_j\cap W|$. The voters approving the rightmost group $T_{j'}$ can deviate to the red candidates and increase their utility by a factor of $\Omega(\sqrt{k})$.}    \label{fig:comparison_example}
\end{figure}

\paragraph{Restrained Core.} In the above example, the deviating coalition of voters has too much power in the sense that their choice entirely ignores the presence of other voters. We instead use the perspective of social planner protecting the rights of the voters who do not deviate by providing them first their ``fair share'' of the budget. This leads to our first contribution, defining the {\em restrained core}. To understand this definition, given allocation $W \in \P$, suppose subset $S$ of voters deviates with its endowment $k' = \lfloor \frac{|S|}{n} k \rfloor$. Then, $S' = V \setminus S$ is also entitled to $k-k'$ candidates. The social planner  picks at most $k - k'$ candidates from the current allocation $W$ for $S'$. This leaves space for $S$ to pick $k'$ candidates from $C$ subject to the feasibility constraint. See \cref{fig:restrained_core} for an illustration of the deviation process. Formally, we define the restrained core as follows:
\begin{definition} [$\gamma$-approximate restrained core] \label{def:restrained}
Given a set $\P$ of committees of size at most $k$, a committee $\hat{W}$ is said to be {\em $q$-completable} if there exists $W''$ with $|W''| \le q$ such that $W'' \cup \hat{W} \in \P$.

A committee $W \in \P$ lies in the $\gamma$-approximate restrained core if there is no constraint-feasible $\gamma$-blocking coalition $S \subseteq V$ of voters. Such a blocking coalition with endowment $k' = \lfloor \frac{|S|}{n} k \rfloor$ satisfies the following: For all $k'$-completable committees $\hat{W} \subseteq W$ with $|\hat{W}| \le k - k'$,  there exists $W'$ with $|W'| \le k'$ such that 
(1)  $T = W' \cup \hat{W} \in \P$, and 
(2) for all $i \in S$, it holds that $u_i(T) \ge \gamma \cdot (u_i(W) + 1)$.
\end{definition}

\begin{figure}[htbp]
    \centering    \includegraphics[scale=1.0]{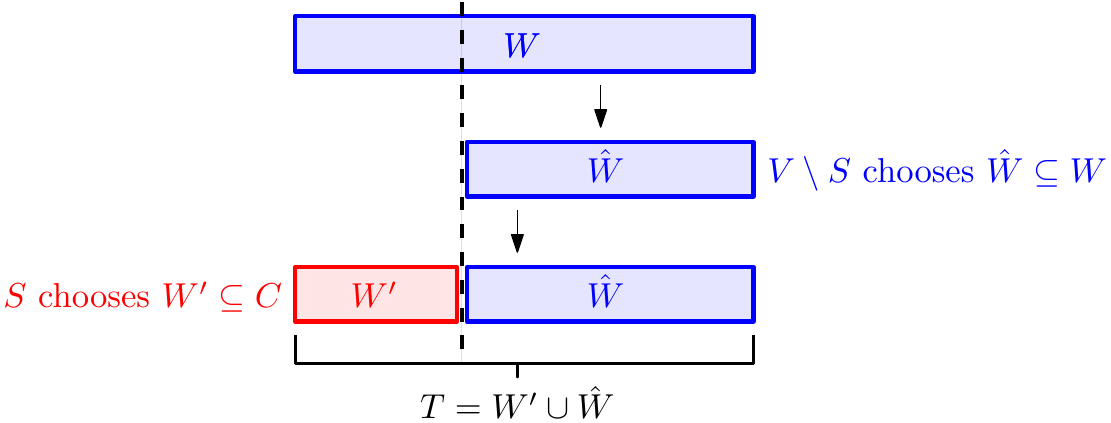}
    \caption{Illustration of the deviation process in the restrained core. The original committee is shown as $W$ in the first row. Assume that subset $S$ of voters are deviating. The rest of the voters $V\setminus S$ chooses $\hat{W}$ from $W$ with size at most $k-k'$ (the second row). Then $S$ chooses $W'$ with at most $k'$ candidates (shown as red in the third row) from $C$. The final deviated committee is $T=W'\cup \hat{W}$.}
    \label{fig:restrained_core}
\end{figure}


We insist $\hat{W}$ is $k'$-completable in order to ensure there is always some choice of $W'$ for Condition (1), which is important to make sure the condition is not vacuously false when $|\hat{W}| < k - k'$.  Further, note that when $\P$ is the set of all committees of size at most $k$, that is, when there are no allocation constraints, then \cref{def:restrained} reduces to \cref{def:alphaCore}. To see this, simply note that the choice of $W'$ in \cref{def:restrained} is now not affected by the choice of $\hat{W}$, so that $\hat{W} = \emptyset$ without loss of generality. Therefore, \cref{def:restrained}  generalizes \cref{def:alphaCore} to constraints.  

\begin{example}
\label{eg:constraint_rest}
Continuing \cref{eg:rest1} (recall that the instance is shown in \cref{fig:comparison_example}), suppose $W$ includes $q$ candidates from $\cup_{\ell} T_{\ell}$. Then, if group $V_j$ attempts to deviate, the complement can simply choose any committee $\hat{W}$ of size $k-q$ that includes all of $W \cap (\cup_{\ell} T_{\ell})$. Since this committee already includes $q$ candidates from $\cup_{\ell} T_{\ell}$, this means $V_j$ can only choose dummy candidates and hence cannot increase its utility. Our definition therefore circumvents the impossibility in \cref{eg:rest1} on this instance. 
\end{example} 

\subsection{Our Contributions}
\paragraph{Restrained Core.} Our first main contribution is the definition of the restrained core (\cref{def:restrained}). Building on this definition, our main technical contribution is the following theorem.

\begin{theorem} [Proved in \cref{sec:ub1}] \label{thm:main1}
For multiwinner elections with {\em arbitrary} allocation constraints $\P$ and $\beta$-self bounding utility functions for $\beta \ge 1$, an $e^{\beta}$-approximate restrained core is always non-empty. As a consequence, the $e^{\beta}$-approximate core is non-empty without allocation constraints.
\end{theorem}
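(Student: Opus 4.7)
The plan is to analyze the global maximizer $W^* \in \P$ of smooth Nash welfare $\gpav(W) = \sum_{i=1}^n \ln(1+u_i(W))$ and show that it lies in the $e^\beta$-approximate restrained core; the unconstrained corollary is then immediate, because \cref{def:restrained} specializes to \cref{def:alphaCore} when $\P = 2^C$. I proceed by contradiction: suppose $S \subseteq V$ is an $e^\beta$-blocking coalition with endowment $k' = \lfloor |S|k/n \rfloor$. The case $k' = 0$ can be dismissed immediately, since the definition then forces $T = \hat{W} = W^*$ and the blocking inequality $u_i(W^*) \ge e^\beta(u_i(W^*)+1)$ is impossible.

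The key structural observation is that every $(k-k')$-subset $\hat{W} \subseteq W^*$ is automatically $k'$-completable, witnessed by $W'' = W^* \setminus \hat{W}$ since $W^* \in \P$. The blocking condition therefore supplies, for every such $\hat{W}$, a feasible $T_{\hat{W}} = W'_{\hat{W}} \cup \hat{W} \in \P$ with $u_i(T_{\hat{W}}) \ge e^\beta(u_i(W^*)+1)$ for all $i \in S$. For $i \in S$ this yields $\ln(1+u_i(T_{\hat{W}})) \ge \beta + \ln(1+u_i(W^*))$, whereas for $i \notin S$, monotonicity and $\hat{W} \subseteq T_{\hat{W}}$ give $\ln(1+u_i(T_{\hat{W}})) \ge \ln(1+u_i(\hat{W}))$. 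Combining these with the global optimality $\gpav(W^*) \ge \gpav(T_{\hat{W}})$ and rearranging yields, for every $(k-k')$-subset $\hat{W}$ of $W^*$,
\[
\sum_{i \notin S} \bigl[\ln(1+u_i(W^*)) - \ln(1+u_i(\hat{W}))\bigr] \ge \beta |S|.
\]

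I then average this inequality over $\hat{W}$ drawn uniformly from the $(k-k')$-subsets of $W^*$, realized as a sequential filtration $W^* = W_0 \supseteq W_1 \supseteq \cdots \supseteq W_{k'} = \hat{W}$ that removes one uniformly random element per step. The one-step conditional log-drop for voter $i \notin S$ equals
\[
\E\bigl[\ln(1+u_i(W_t)) - \ln(1+u_i(W_{t+1})) \mid W_t\bigr] = \frac{1}{k-t} \sum_{j \in W_t} \ln\frac{1+u_i(W_t)}{1+u_i(W_t \setminus \{j\})},
\]
which I would bound via $\ln((1+a)/(1+b)) \le (a-b)/(1+b)$ together with the $\beta$-self bounding hypothesis $\sum_{j \in W_t}[u_i(W_t) - u_i(W_t \setminus \{j\})] \le \beta u_i(W_t)$ applied inside the filtration.

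The main obstacle is this last step: controlling the telescoped expectation $\E[\ln(1+u_i(W^*)) - \ln(1+u_i(\hat{W}))]$ tightly enough that summing over $i \notin S$ gives a total strictly less than $\beta |S|$. The difficulty is that Jensen's inequality for $\ln$ points the wrong direction, so the bound must propagate through the marginal self-bounding inequality at each step rather than through the concavity of $\ln$. The target bound is of the form $\E[\ln(1+u_i(W^*)) - \ln(1+u_i(\hat{W}))] \le \beta k'/k$ (up to corrections that must be absorbed), after which using $k' \le |S|k/n$ gives a total expected loss at most $(n-|S|) \cdot \beta|S|/n < \beta|S|$, contradicting the inequality above. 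The factor $e^\beta$ in the approximation ratio emerges precisely from the log-transfer $\ln(1 + e^\beta(y+1)) - \ln(1+y) \ge \beta$, which is the tightest constant this type of argument can support.
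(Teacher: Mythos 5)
Your proposal is essentially the paper's own proof with a cosmetic change of presentation: same $\gpav$ objective, same contradiction structure, and the averaging over uniformly random $(k-k')$-subsets that you propose is precisely how the paper locates its greedily-constructed $\hat{W}$ (at each step the self-bounding property bounds the \emph{average} log-drop over removable candidates by $\beta(1-\alpha)n/|W_t|$, so the minimizing candidate is at least as good; telescoping gives the same harmonic-sum bound either way). Two clarifications on the step you flagged as the main obstacle: (i) the per-voter expected drop the filtration actually yields is $\beta\sum_{k_0=k-k'+1}^{k}1/k_0 \le \beta(\ln k - \ln(k-k')) \le -\beta\ln(1-\alpha)$, not $\beta k'/k$; the argument still closes because after multiplying by $(1-\alpha)n$ voters what you need is $(1-\alpha)(-\ln(1-\alpha)) < \alpha$, i.e.\ $\alpha + (1-\alpha)\ln(1-\alpha) > 0$ for $\alpha\in(0,1]$, which is exactly the inequality the paper invokes; (ii) when using $\ln\bigl((1+a)/(1+b)\bigr) \le (a-b)/(1+b)$ with $b=u_i(W_t\setminus\{j\})$, the denominator varies with $j$ and does not factor out of the self-bounding sum, so replace it with the common lower bound $u_i(W_t)$ (valid by $1$-Lipschitzness), which is what \cref{lem:smoothed_log} does directly.
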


As we mention in \cref{sec:related}, though there has been prior work on core with constraints, these either require scaling down the constraints on deviation often rendering them meaningless, or work in very limited settings. Our \cref{def:restrained} and the associated \cref{thm:main1} are the first results that achieve a constant approximate core for {\em arbitrary} constraints even for approval utilities.

Since XOS utilities are $1$-self bounding~\cite{vondrak2010note}, \cref{thm:main1} implies an $e$-approximate restrained core for XOS utilities (and hence, for approval, additive, and submodular utilities) with any allocation constraints, or an $e$-approximate core without allocation constraints (\cref{def:alphaCore}).  

One choice of $\hat{W}$ in \cref{def:restrained} that yields \cref{thm:main1} is to maximize the $\gpav$ score for voters not in the deviating coalition. Therefore, the social planner takes care of the complement in the best possible fashion for any deviation, which itself can be viewed as a form of fairness.  

Finally, the exponential dependence of the approximation on $\beta$ is unavoidable; see \cref{thm:lb00}.


\paragraph{Algorithm.} The algorithm that yields the above result is surprisingly simple: 

\begin{quote}
{\sc Global}: Find $W \in \P$ such that $\gpav(W)$ is maximized. 
\end{quote}

Note that we are not finding a local optimum, but instead computing the {\em global optimum} of $\gpav$; indeed, when $\P$ is arbitrary, the {\sc Local} algorithm may get stuck simply for lack of swaps that preserve membership in $\P$. Further, \cref{eg:xos} shows {\sc Local} is insufficient for XOS functions even without any additional constraints. Our use of the global optimum necessitates an entirely new analysis compared to prior work, and this analysis forms a key contribution. 

We have therefore presented the first fairness analysis of Nash Welfare for multiwinner elections with XOS utilities even without additional constraints. We note that compared to prior work on welfare maximization with XOS utilities~\cite{LehmannLN,Feige-subadditive} that were based on linear programming, our proof for $\gpav$ is entirely combinatorial. This is because we only use the self-bounding property of these functions, while welfare maximization uses the stronger property of fractional subadditivity of XOS functions. To highlight the difference, our results hold for arbitrary self-bounding  functions, while welfare maximization results extend to sub-additive functions. These classes are incomparable, and we do not know how to extend our results to  sub-additive functions.

Finally, we note that for one voter, core stability reduces to utility maximization, which cannot be  approximated in polynomial time within sub-polynomial factors for either XOS functions (value oracle model;~\cite{MSV}) or  independent set constraints ({\sc NP-Hardness};~\cite{FGLSS}). Our results therefore show fairness properties for Nash Welfare even in settings where there are no computationally efficient and fair algorithms possible via any method.

\paragraph{Lower Bound for Restrained Core.} One may wonder if~\cref{def:restrained} makes the problem ``too easy'' so that there is always a $1$-approximate (exact) restrained core. 
We show this is not the case even in the presence of very simple constraints and approval utilities,  via the following theorem:
\begin{theorem} [Proved in \cref{sec:lb1}] \label{thm:lb1}
For $c = 16/15 - o(1)$, a $c$-approximate restrained core can be empty even for approval utilities and a single packing or partition matroid constraint. 
\end{theorem}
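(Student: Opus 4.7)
The plan is to exhibit a concrete approval-utility instance with a single partition matroid constraint in which every feasible committee admits a blocking coalition in the sense of \cref{def:restrained} with ratio approaching $16/15$. The construction will be highly symmetric so that the best feasible committees are determined (up to relabeling) and we can analyze deviations in closed form.

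First I would set up the instance as follows. Partition the voters into $g$ equal groups $V_1,\ldots,V_g$ of size $n/g$ and introduce $g$ disjoint \emph{valuable blocks} of candidates $B_1,\ldots,B_g$; voter $i\in V_j$ approves exactly the candidates in $B_j$. Add a large supply of \emph{dummy} candidates approved by no one. The partition matroid constraint caps the number of chosen candidates from $\bigcup_j B_j$ at some budget $b<k$, while dummy candidates are unconstrained. The parameters $g$, $b$, $k$, and $|B_j|$ will be tuned so that any feasible committee $W$ must satisfy $\sum_j |W\cap B_j|\le b$, forcing at least one group $V_{j^\star}$ to receive utility at most $b/g$ (by averaging), while each block $B_j$ is large enough (say $|B_j|\ge k/g$) that the coalition $V_{j^\star}$ with endowment $k'=\lfloor k/g\rfloor$ could in principle grab $k'$ approved candidates from $B_{j^\star}$.

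Second, I would analyze the deviation of the under-represented coalition $S=V_{j^\star}$ under \cref{def:restrained}. The adversarial social planner picks $\hat W\subseteq W$ of size $k-k'$ to be $k'$-completable, and wants to keep as many candidates from $B_{j^\star}$ inside $\hat W$ as possible so as to reduce the remaining ``room'' for $S$ in the matroid. I would show that $k'$-completability forces the planner to retain at least $b-k'$ non-$B_{j^\star}$ valuable candidates in $\hat W$ (otherwise $W'$ cannot be chosen to respect the matroid), so the number of $B_{j^\star}$ candidates the planner is allowed to preserve is at most $\min(|W\cap B_{j^\star}|,\,k-k'-(b-k'))=\min(|W\cap B_{j^\star}|,\,k-b)$. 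The deviating coalition can then spend its $k'$ slots on $B_{j^\star}$ candidates outside $\hat W$, producing a committee $T$ with $u_i(T)\ge k'+\min(|W\cap B_{j^\star}|,k-b)$ for every $i\in S$, while $u_i(W)\le b/g$.

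Third, I would optimize $g$, $b$, $k$ to push the ratio
\[
\frac{u_i(T)}{u_i(W)+1}\;\ge\;\frac{k'+\min(|W\cap B_{j^\star}|,\,k-b)}{b/g+1}
\]
as close to $16/15$ as possible while respecting the matroid cap and the averaging bound $|W\cap B_{j^\star}|\le b/g$. A natural balance is to take $g$ large, $b=\alpha k$ for a constant $\alpha\in(0,1)$, and let $k,n\to\infty$ so that the floors $\lfloor k|S|/n\rfloor$ round to $k/g$ exactly and the additive $+1$ slack becomes negligible; the remaining discrete optimization over $\alpha$ (and possibly a mild generalization allowing $V_j$ to also approve a small ``shared'' candidate pool) is what produces precisely the constant $16/15$.

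The main obstacle I anticipate is exactly this parameter optimization: the social planner's protective choice of $\hat W$ in the restrained core can neutralize naive constructions that work for the unconstrained core, so the construction must introduce just enough asymmetry between the endowment $k'$ and the matroid cap $b$ so that the planner cannot simultaneously save both a large fraction of $B_{j^\star}$ and the non-$B_{j^\star}$ slack required for completability. Verifying this tight interplay, together with checking that \emph{every} feasible $W$ (not just the symmetric ones) has such an under-represented group, is where the work lies; I would handle the non-symmetric case by a convexity/averaging argument over the group assignments, reducing to the symmetric optimum.
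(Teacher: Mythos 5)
The central difficulty with your proposal is that the disjoint-block construction you describe is, up to parameters, exactly \cref{eg:rest1} in the paper, and the paper shows in \cref{eg:constraint_rest} that the restrained core \emph{already handles} this instance: the non-deviating voters can pick $\hat{W}$ to contain every valuable candidate in $W$ (i.e., all of $W\cap\bigcup_j B_j$, which costs at most $b$ slots), padding with dummies up to size $k-k'$. Since the packing/matroid budget is then exhausted inside $\hat W$, the deviating group $V_{j^\star}$ can only add dummies, so it gains nothing. The "mild generalization allowing a shared candidate pool" that you gesture at in passing is not a tweak; it is the whole construction.

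Your attempt to salvage this via completability is based on a misreading of \cref{def:restrained}. For a single packing or partition-matroid constraint, $\P$ is downward closed, so every $\hat W\subseteq W$ of size at most $k-k'$ is already in $\P$ and hence $0$-completable, in particular $k'$-completable. Completability therefore imposes \emph{no} restriction on the planner here and cannot be used to force $\hat W$ to retain a prescribed number of non-$B_{j^\star}$ valuable candidates. The only remaining escape, setting $b>k-k'$ so that valuable candidates do not all fit inside $\hat W$, leaves the coalition with room for at most $b-(k-k')$ extra valuable candidates, which is small exactly when the cap is tight; you never get a multiplicative gain bounded away from $1$ against the additive $+1$ slack with disjoint single-owner blocks.

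The paper's construction is structurally different in a way that is essential. There are only four voters $\{a,b,c,d\}$ and six parties, each approved by exactly \emph{two} voters ($g_{ab},g_{bc},g_{ca},g_{ad},g_{bd},g_{cd}$), with a single cap of $6r$ on the total number of non-dummy candidates when $k=6.4r$. The blocking coalition is the triple $\{a,b,c\}$: after $d$ spends its $1.6r$-slot endowment on $\hat W$, the triple has $4.4r$ slots of the cap left, and because each of $g_{ab},g_{bc},g_{ca}$ delivers utility to two coalition members simultaneously, a linear-programming/case analysis shows a feasible split $(x_{ab},x_{bc},x_{ca})$ always exists that multiplies everyone's utility by at least $16/15$. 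That overlap (double-coverage of parties) is what makes the ratio strictly above $1$; it has no analogue in your disjoint-block instance, and the averaging/convexity reduction you propose for non-symmetric $W$ would not reproduce it.
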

This lower bound complements the upper bound of $e$ for additive utilities ($\beta = 1$) in \cref{thm:main1}. Note that in the absence of constraints, it is a long-standing open question whether a $1$-approximate (exact) core exists for approval utilities. The above theorem shows that surprisingly, even with a single constraint, the exact (restrained) core for this setting is empty. 
Indeed, the theorem holds even for a weaker version of \cref{def:restrained}, where $\hat{W}$ could be any committee of size $k-k'$ (and not necessarily a subset of $W$) such that there exists $W'$ making $\hat{W} \cup W' \in \P$.  

\paragraph{Restrained Core and EJR under Matroid Constraint via {\sc Local}.} In \cref{sec:ejr,sec:omitted}, we consider the special case where $\P$ is the set of independent sets of a matroid. 

We first consider the notion of {\em extended justified representation (EJR)}~\cite{JR}, which is a weakening of the core for approval utilities. This is exactly satisfied by {\sc Local} applied to $\pav$ rule in the absence of constraints.  In \cref{sec:ejr}, we define a generalization to constraints, called {\em restrained EJR} (\cref{def:ejr}), and show that $\pav$ satisfies exact restrained EJR for approval utilities when the constraints form the independent sets of a matroid (\cref{thm:ejr}). In contrast, the exact restrained core for this setting can be empty from \cref{thm:lb1}. In this setting, the {\sc Local} rule swaps a candidate $j \notin W$ for $\ell \in W$ as long as the committee remains a basis of the matroid  and  the $\pav$ score strictly improves. Note that unlike {\sc Global}, this algorithm is computationally efficient.

In \cref{sec:omitted}, we go back to the restrained core, and show that the {\sc Local} rule applied to $\gpav$ lies in the $2$-approximate restrained core for a matroid constraint and submodular utilities (\cref{thm:matroid}).   The proof builds on \cite{PetersS20}, who show a $2$-approximate core for the special case of $\pav$ with approval utilities and no constraints. 
They also show that the factor of $2$ is tight for the $\pav$ rule without constraints, and the same tightness will hold for our setting.



\paragraph{Improved Analysis of $\pav$ for Large Coalitions.} We next consider the multiwinner election problem without allocation constraints and with additive utilities. We consider the {\sc Local} rule with the $\gpv$ score. This reduces to  classical $\pav$ for approval utilities, where \cite{PetersS20} show an approximation factor of $2$, which is tight. However, this tightness holds only for {\em small coalitions} of voters. This begs the question: {\em Is there an improved analysis of the {\sc Local} rule for any coalition size?} We answer this in the affirmative: We show that  as the coalition size increases, the approximation factor of the local optimum to $\gpv$ approaches $1$. In particular, this shows {\sc Local} is weakly Pareto-optimal. In addition, our analysis holds for general additive utilities (and not just approval), which shows the desirability of $\gpv$ as a scoring rule. 
The proof  is in \cref{sec:additive}.

\begin{theorem} \label{thm:tight}
For multiwinner elections with additive utilities (and no allocation constraints), suppose only coalitions of size at least $\alpha n$ are allowed to deviate, where $\alpha \in [0,1]$. Then any local optimum to $\gpv$ lies in the $2 - \alpha$ approximate core. Further, this bound is tight for such local optima.
\end{theorem}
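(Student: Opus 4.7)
The plan is to adapt the Peters--Skowron analysis of local optima of $\pav$ to the $\gpv$ rule under additive utilities, and then refine the constant from $2$ to $2-\alpha$ by exploiting the coalition-size restriction $|S| \ge \alpha n$. Let $W$ be a local optimum of $\gpv$ and assume for contradiction that a blocking coalition $S$ with $|S| = \alpha' n \ge \alpha n$ and committee $T$ with $|T| \le \lfloor k|S|/n\rfloor$ satisfy $u_i(T) \ge \gamma(u_i(W)+1)$ for all $i \in S$. I can assume without loss of generality that $T \cap W = \emptyset$. The central tool is the family of $k|T|$ single-swap inequalities $\sum_i[\Phi(u_i(W)+u_{ij}-u_{i\ell}) - \Phi(u_i(W))] \le 0$ for each $(\ell,j) \in W \times T$, where $\Phi$ is concave with piecewise-linear derivative $\Phi'(x) = 1/\lceil x\rceil$.

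Applying the concavity identity $\Phi(a-c+b)-\Phi(a-c) \ge \Phi(a+b)-\Phi(a)$ (valid for $b,c \ge 0$) voter-by-voter and summing, I rearrange each swap inequality into the pairwise bound $A_j \le C_\ell$, where
\[
A_j := \sum_i\bigl[\Phi(u_i(W)+u_{ij}) - \Phi(u_i(W))\bigr], \qquad C_\ell := \sum_i\bigl[\Phi(u_i(W)) - \Phi(u_i(W) - u_{i\ell})\bigr].
\]
Averaging over $\ell \in W$ and using the telescoping identity $\sum_{\ell\in W}[\Phi(u_i(W)) - \Phi(u_i(W)-u_{i\ell})] \le \Phi(u_i(W))$ (since by concavity the marginal of an element is smallest when added last in any ordering of $W$) yields $A_j \le \tfrac{1}{k}\sum_i \Phi(u_i(W))$. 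Summing over $j \in T$ and comparing with a lower bound $\sum_j A_j \gtrsim \gamma |S|$ derived from the deviation condition produces the raw bound $\gamma \le 2$. The refinement to $2-\alpha$ comes from localizing the source of the ``factor of 2'' slack to voters with small $u_i(W)$ (where the gap between $\Phi'(u_i(W))$ and $\Phi'(u_i(W)+u_{ij})$ is largest). Since at most $(1-\alpha')n$ voters lie outside $S$, only they can contribute the worst-case doubling without already being constrained by the deviation condition; splitting the voter sum by membership in $S$ and using $|T|\le \alpha' k$ gives $\gamma \le 2 - \alpha' \le 2-\alpha$.

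For tightness, I would construct an instance parameterized by $\alpha$ that generalizes the standard $\pav$ tightness example (the $\alpha = 0$ case of \citet{PetersS20}). The construction has $\alpha n$ voters with approvals concentrated on a candidate pool $T^*$ that can serve as the blocking committee, and $(1-\alpha)n$ background voters whose preferences make a specific committee $W$ (not containing all of $T^*$) a local optimum of $\gpv$ -- any single swap from $W$ toward $T^*$ is exactly offset by the marginal loss of the background voters. Tuning the approvals carefully, the blocking ratio $u_i(T^*)/(u_i(W)+1)$ for $i \in S$ approaches $2-\alpha$ as $n$ and $k$ grow.

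The main obstacle will be the precise accounting that extracts exactly the constant $2-\alpha$. The pairwise bound $A_j \le C_\ell$ is too loose on its own (it would yield $\gamma \le 1$ in simple instances where every voter has $u_i(W) \ge 1$), so the refined analysis must distinguish voters by their $u_i(W)$ value, use the full strength of the deviation condition for voters in $S$ with $u_i(W) = 0$, and carefully quantify how many ``contributors to the factor of $2$'' can lie outside $S$. Extending from approval to general additive utilities requires further care in handling non-integer values of $u_i(W)$ via the piecewise-linear structure of $\Phi$, since the clean integer-marginal identities $\Phi(m+1)-\Phi(m)=1/(m+1)$ in the approval case must be replaced by integral arguments over $\Phi'$.
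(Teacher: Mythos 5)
Your proposal contains a critical gap: the assumption ``I can assume without loss of generality that $T \cap W = \emptyset$'' is false, and it is not an incidental simplification — it is precisely the case $T \cap W \neq \emptyset$ that produces the factor $2-\alpha$. To see why, push your own argument through under that WLOG: with $A_j \le C_\ell$ for all $j \in T, \ell \in W$, the paper's Lemma~\ref{lem:nabla} gives $\sum_\ell C_\ell \le n$, so $A_j \le n/k$, hence $\sum_{j\in T}A_j \le |T| n/k \le |S|$. Combining with $\sum_j A_j \ge \gamma|S|$ would yield $\gamma \le 1$, which is false (the theorem's own lower bound, and the Peters--Skowron construction, exhibit blocking coalitions with ratio $\to 2-\alpha > 1$). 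The resolution is that in the tight instance the blocking committee $T$ shares a large intersection with $W$, and candidates in $T\cap W$ contribute to the deviation gain $\sum_{i\in S}u_i(T)/(u_i(W)+1)$ yet have $\Delta_j(W) = 0$, so the bound $A_j \le C_\ell$ is vacuous for them. The paper's proof handles this by splitting $\sum_{c\in T}\Delta^*_{S,c}(W) = M_1^* + M_2^*$ over $T\cap W$ and $T\setminus W$, bounding $M_1^*$ via the separate inequality $\Delta^*_{i,c}(W)\le\nabla_{i,c}(W)$ for $c\in W$ (Lemma~\ref{lem:2_abc}(b)) together with $M_1^* < |S|$, and only applying the swap inequalities to $T\setminus W$ against $W\setminus T$. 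The $2-\alpha$ then falls out from $M_1^* + \frac{\alpha-\beta}{1-\beta}(n-M_1^*)$ with $M_1^* < \alpha n$; your attribution of the factor to ``voters with small $u_i(W)$ outside $S$'' is a misdirection.

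Two further issues. First, your telescoping bound $\sum_{\ell\in W}\nabla_{i,\ell}(W) \le \Phi(u_i(W))$ is correct but materially weaker than what is needed: summed over $i$ it gives $\gpv(W)$, not $n$, and $\gpv(W)/n$ is unbounded, so the step ``produces the raw bound $\gamma\le 2$'' does not follow. You need the sharper inequality $\sum_{\ell\in W}\nabla_{i,\ell}(W)\le 1$ for every $i$, which is Lemma~\ref{lem:nabla} of the paper and requires a genuine case analysis on the piecewise-linear structure of $\Phi$ for non-integer additive utilities. Second, you yourself note that the pairwise bound ``would yield $\gamma\le 1$'' and call it ``too loose,'' but a bound that is too loose would give $\gamma$ too large, not too small; what you have actually observed is that the bound is too strong under the invalid WLOG, which is the symptom of the missing $T\cap W$ case. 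Your tightness sketch (partition into $\alpha n$ coalition voters with a shared pool plus private candidates, and $(1-\alpha)n$ background voters) does match the paper's construction; note, revealingly, that in that construction the blocking committee necessarily overlaps $W$, which is exactly the case your upper-bound argument excludes.
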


\paragraph{Participatory Budgeting without Constraints.} We finally consider the generalization of multiwinner elections to Participatory Budgeting. Recall that the Participatory Budgeting problem, candidates can have arbitrary sizes. Let $s_j$ denote the size of candidate $j$, and let Size$(W) = \sum_{j \in W} s_j$ denote the total size of committee $W$. Any feasible committee $W$ should satisfy Size$(W) \le b$. 

We consider the setting {\em without allocation constraints} and with $\beta$-self bounding utilities. \cref{def:alphaCore} extends naturally if a deviating coalition $S$ can choose committee $T$ so that Size$(T) \le \frac{|S|\cdot b}{n}$. See \cref{def:alphaCore2} in \cref{sec:pb_upper}. In this setting, we cannot hope to achieve an analog of \cref{thm:main1} via optimizing $\gpav$ (resp. $\gpv$), since it was shown by \cite{DBLP:journals/corr/PetersPS20} that this outcome cannot lie in any constant approximate core. Nevertheless, we show the following theorem.

\begin{theorem}[Proved in \cref{sec:pb_upper}] \label{thm:main2}
For the Participatory Budgeting problem with $\beta$-self bounding utilities (where $\beta \ge 1$ is an integer) and no allocation constraints, a $c$-approximate core is always non-empty, where 
$c = e^{O(\beta)}$.
\end{theorem}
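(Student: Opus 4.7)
My plan is to adapt the global Nash Welfare argument of \cref{thm:main1} to Participatory Budgeting. Direct $\gpav$-maximization over all budget-feasible committees is ruled out by the impossibility of \cite{DBLP:journals/corr/PetersPS20}, so the strategy is to handle large and small candidates separately. Fix a threshold $\tau = b/B$ with $B = \Theta(\beta)$, and call a candidate $j$ \emph{large} if $s_j \ge \tau$; otherwise call it \emph{small}.

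For each subset $L^*$ of large candidates with $\cost(L^*) \le b$, consider the sub-problem of filling the remaining budget $b - \cost(L^*)$ with small candidates so as to maximize $\gpav$; let $W^{L^*}$ denote its solution. The algorithm returns $W^*$ achieving the largest $\gpav$ across all $W^{L^*}$. Since we only need existence, the enumeration need not be polynomial in $|C|$.

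For the analysis, consider a deviating coalition $S$ with deviation $T$ of size at most $|S|b/n$. Split $T = T_L \sqcup T_S$ into large and small candidates, and note that $|T_L| \le B|S|/n$ by the threshold definition. Now compare against $W^{T_L}$: both this committee and $T$ contain $T_L$, and both fill the remainder with small candidates. A $\beta$-self-bounding argument following the structure of \cref{thm:main1}'s proof, but weighted by candidate sizes, shows that the coalition cannot multiplicatively improve on $W^{T_L}$ by more than a factor of $e^\beta$ on a per-voter basis. Since $\gpav(W^*) \ge \gpav(W^{T_L})$, a potential-function step transfers the per-voter bound from $W^{T_L}$ to $W^*$ at the cost of an additional constant factor, giving $c = e^{O(\beta)}$.

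The main obstacle is that $\beta$-self-boundedness is a \emph{counting} inequality, $\sum_{j \in T} (u_i(T) - u_i(T \setminus \{j\})) \le \beta \, u_i(T)$, whereas in PB we must account for candidate sizes in the budget. Charging each small candidate in the self-bounding inequality by its size $s_j$ (so that the total charge is at most $b$, up to a $\tau$-proportional slack from the small-size cap) is the crucial new step; the threshold $\tau = b/\Theta(\beta)$ is what keeps this slack under control. A secondary difficulty is ensuring that the enumeration over $L^*$ truly decouples large from small in the analysis---one must argue that the specific $\gpav(W^{T_L}) \le \gpav(W^*)$ inequality can be localized to per-voter bounds for the coalition $S$, analogously to the complement-choice step used in the restrained core analysis of \cref{thm:main1}.
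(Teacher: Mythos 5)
Your proposed algorithm is, despite appearances, exactly the global $\gpav$-maximizer over budget-feasible committees: any budget-feasible $W$ splits into its large part $W_L$ and small part $W_S$, and by the definition of $W^{W_L}$ you get $\gpav(W) \le \gpav(W^{W_L}) \le \gpav(W^*)$, while conversely $\gpav(W^*) \le \max_{W:\,\cost(W)\le b}\gpav(W)$ since each $W^{L^*}$ is budget-feasible. The two maxima therefore coincide, and you are back at the very rule you correctly note is ruled out by \cite{DBLP:journals/corr/PetersPS20}; the large/small enumeration does not escape the obstruction, it just re-derives the same output. Independently, the ``potential-function step'' hides a real gap: the inequality $\gpav(W^*) \ge \gpav(W^{T_L})$ compares \emph{sums} $\sum_i \Phi(u_i(\cdot))$ and gives no lower bound on any individual $u_i(W^*)$ in terms of $u_i(W^{T_L})$. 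So even if the weighted self-bounding argument produced some $i \in S$ with $u_i(T) < e^\beta(u_i(W^{T_L})+1)$, that same voter can have $u_i(W^*)$ arbitrarily smaller than $u_i(W^{T_L})$ while the aggregate $\gpav$ is larger because other voters are better served; the per-voter certificate does not transfer. This sum-versus-per-voter mismatch is precisely the delicate point in core-stability arguments.

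The paper proves \cref{thm:main2} by a completely different route that avoids optimizing $\gpav$ altogether: it starts from the $32$-approximate \emph{endowment} core of Jiang, Munagala and Wang (\cref{def:endow}, \cref{thm:endow}), where the deviating coalition's budget is shrunk rather than its required gain scaled up, and then proves a generic reduction (\cref{thm:endow2}) from utility approximation to endowment approximation for $\beta$-self-bounding functions. The reduction is probabilistic: after discarding the at most $\gamma$ candidates in $T$ of size exceeding $\phi b/\gamma$, it subsamples the rest at rate $1/\gamma$; the new sampling lemma (\cref{lem:XOS_2}) gives $\E[u_i(O)] \ge \gamma^{-\beta}u_i(T')$, and together with the Chernoff-style lower-tail bound for self-bounding functions (\cref{lem:tail}) and a cost Chernoff bound, with constant probability a constant fraction of $S$ retains utility above $u_i(W)$ on a sample of cost at most $\kappa\phi b/\gamma$, contradicting the $32$-endowment-core guarantee. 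If you want a $\gpav$-style argument for Participatory Budgeting, you would have to modify the rule itself (not just the enumeration) and would still need to resolve the sum-to-per-voter transfer above.
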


Note that \cref{def:alphaCore} approximates the utilities coalitions must receive for them to be blocking. We show the above theorem via a generic reduction from a different notion of approximation from~\cite{DBLP:journals/teac/ChengJMW20,DBLP:conf/stoc/JiangMW20}, where the endowment of a coalition is scaled down when they deviate; this scaling factor represents the approximation ratio. This is formally defined in \cref{def:endow} in \cref{sec:pb_upper}, and here, \cite{DBLP:conf/stoc/JiangMW20} show that a $32$-approximate core exists for all monotone utilities and arbitrary candidate sizes (with no allocation constraints). 

We show that this result for endowment approximation implies \cref{thm:main2} for $\beta$-self bounding utilities. The key ingredient in the reduction is a sampling lemma (\cref{lem:XOS_2}) that lower bounds the expected utility of a random sample of candidates. This lemma may be of independent interest. We combine this with Chernoff-style lower tail bounds of such sampling for self-bounding functions~\cref{lem:tail} from \cite{Lugosi} to complete the reduction.

In contrast to \cref{thm:main1}, \cref{thm:main2} holds only in the absence of constraints. Further, for the case of multiwinner elections with XOS utilities, the constant factor in \cref{thm:main2}  is much worse than the factor of $e$ for $\gpav$ in \cref{thm:main1}, showing the superiority of $\gpav$ in this setting. 

\paragraph{Lower Bound for Self-bounding Functions.} We complement this by showing that the exponential dependence of the approximation factor on $\beta$ in \cref{thm:main1,thm:main2} is unavoidable even in the absence of constraints (\cref{def:alphaCore}). This extends a result of in ~\cite{MunagalaSWW22}, who show that a $1.015$-approximate core can be empty for submodular utilities (where $\beta = 1$).

\begin{theorem} [Proved in \cref{app:lb00}]
\label{thm:lb00}
For multiwinner elections with $\beta$-self bounding functions ($\beta \ge 5$) and no allocation constraints, the $c$-approximate core can be empty for $c = \frac{1}{2} \left(\frac{4}{3} \right)^{\beta/2} - o(1)$.
\end{theorem}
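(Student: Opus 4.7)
The plan is to lift a base impossibility for $1$-self-bounding (e.g., approval) utilities to $\beta$-self-bounding utilities by raising utilities to the $\beta$-th power. The key algebraic lemma is that if $g : 2^{C} \to \mathbb{R}_{\ge 0}$ is monotone and $1$-self bounding, then $f(T) := g(T)^{\beta}$ is $\beta$-self bounding. Using the convexity inequality $a^{\beta} - b^{\beta} \le \beta\, a^{\beta-1}(a - b)$ for $a \ge b \ge 0$, applied with $a = g(T)$ and $b = g(T \setminus \{j\})$ (ordered by monotonicity), one obtains
\[
\sum_{j \in T}\bigl(f(T) - f(T \setminus \{j\})\bigr) \;\le\; \beta\, g(T)^{\beta-1}\sum_{j \in T}\bigl(g(T) - g(T \setminus \{j\})\bigr) \;\le\; \beta\, g(T)^{\beta} \;=\; \beta f(T).
\]
Crucially, this transformation also converts any multiplicative blocker-to-incumbent utility ratio $r$ in the base instance into a ratio $r^{\beta}$ in the lifted instance.

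Step one is to produce a symmetric base instance with $1$-Lipschitz approval utilities $\{g_i\}_{i \in V}$ and committee size $k$ such that every committee $W$ of size $k$ admits a blocking coalition with multiplicative utility gap $r = (4/3)^{1/2}$ in the limit $k \to \infty$, so that the $+1$ additive slack in \cref{def:alphaCore} is asymptotically negligible. A natural candidate is a block/overlap design in which voter approval sets and candidate groups are organised so that the preferred sub-committee sizes yield the precise ratio $4/3$ when squared; the analysis should reduce to a single combinatorial inequality isolating the constant $4/3$. Note that one cannot simply plug in the $1.015$-approximate construction of \cite{MunagalaSWW22}, since raising $1.015$ to the $\beta$-th power does not match the stated formula; a purpose-built base gadget is needed whose ratio survives clean exponentiation.

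Step two is to define lifted utilities $\tilde u_i(T) := g_i(T)^{\beta}$, verify $\beta$-self boundedness via the lemma above, and rescale by $1/(\beta K^{\beta-1})$ with $K$ an upper bound on $g_i$ to restore the $1$-Lipschitz property. The lifted blocking ratio is then at least $r^{\beta} = (4/3)^{\beta/2}$. Finally, one must account for the additive $+1$ slack in \cref{def:alphaCore}: after rescaling, this slack corresponds to an additive loss of order $K^{\beta-1}$ on the original scale, which by routine bookkeeping costs at most a factor of $2$ in the effective approximation ratio, yielding the claimed $\frac{1}{2}(4/3)^{\beta/2} - o(1)$ once $k$ is taken large. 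The hypothesis $\beta \ge 5$ is exactly the threshold at which this quantity first exceeds $1$.

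The main obstacle is step one: producing a base gadget whose blocking ratio is exactly $(4/3)^{1/2}$ rather than the numerically optimised $1.015$ from \cite{MunagalaSWW22}. This will likely require a small, highly symmetric combinatorial design, and the worst-case analysis over all committees $W$ must reduce to one clean inequality distilling $4/3$. A secondary obstacle is pinning down the leading constant $\tfrac{1}{2}$: the additive $+1$ term after exponentiation must be shown to cost exactly a factor of two, which depends on how tightly the base gadget's blocking structure interacts with the scaled additive slack.
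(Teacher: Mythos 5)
Your plan is to lift a base impossibility for $1$-self-bounding utilities by raising utilities to the $\beta$-th power; the algebraic lemma you state (that $f=g^{\beta}$ is $\beta$-self-bounding whenever $g$ is monotone and $1$-self-bounding) is correct, via $a^{\beta}-b^{\beta}\le \beta a^{\beta-1}(a-b)$. But this is not the paper's route, and as you present it the plan has a gap that is not a matter of ``routine bookkeeping'' --- it sits on a base gadget that, to the best of current knowledge, does not exist.

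Concretely, your Step 1 requires an instance with monotone $1$-self-bounding (you even say approval) utilities in which \emph{every} committee of size $k$ admits a blocking coalition with multiplicative utility gap approaching $(4/3)^{1/2}\approx 1.155$. The paper cites \cite{MunagalaSWW22} as the state of the art for exactly this class, and the bound there is $1.015$. A $1.155$ lower bound for $1$-self-bounding (let alone approval) utilities would substantially improve the known core lower bound for submodular/XOS utilities and would be a stand-alone result; it cannot be waved away as a ``small, highly symmetric combinatorial design.'' The paper's proof avoids this obstruction entirely: it does \emph{not} reduce to a $\beta=1$ gadget. It defines, directly, $\beta$-dependent utilities of the form
\[
u_{ab}(W)= \frac{r}{\beta}\Bigl(x_a^{\beta}+z\,(1-x_a^{\beta})\,x_b^{\beta}\Bigr),\qquad z=(3/4)^{\beta/2},
\]
over a $6$-party, $6$-voter cyclic gadget, checks $\beta$-self-boundedness and $1$-Lipschitzness of \emph{this} function by hand, and then runs a pigeonhole argument (some pair of parties in one triple both have fraction $\le 3/4$). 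Crucially, these utilities are \emph{not} of the form $g(T)^{\beta}$ for any approval or $1$-self-bounding $g$; the cross-term $z(1-x_a^{\beta})x_b^{\beta}$ is what creates the Condorcet-style cyclic deviation, and it has no $\beta=1$ ancestor. Indeed, plugging $\beta=1$ into the paper's bound gives $\frac{1}{2}(4/3)^{1/2}<1$, i.e., the construction is vacuous for small $\beta$ --- which is consistent with the hypothesis $\beta\ge 5$ and with the fact that no $1.155$ base gadget is known.

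A secondary error: you attribute the leading factor $\tfrac12$ to the additive $+1$ slack after rescaling. In the paper it has nothing to do with that slack (which is killed by $r\to\infty$); it comes from the upper bound $u_{bc}(W)\le \frac{r}{\beta}(1+z)(3/4)^{\beta}\le \frac{2r}{\beta}(3/4)^{\beta}$, i.e., from bounding $(1+z)$ by $2$. So even if you had the base gadget, the constant would not emerge the way you describe.
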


\paragraph{Summary of Results.} In \cref{tab0}, we present a summary of the results for approximate core under various utility functions, candidate sizes (unit vs. general), and allocation constraints. We have omitted \cref{thm:tight}, and the restrained EJR result in \cref{sec:ejr}.

\begin{table}[htbp]
\begin{center}
\begin{tabular}{||c || c |c | c | c |c ||} 
 \hline
 Utility & Sizes & Constraints & Approx. & Lower Bd. & Run Time \\
 \hline\hline
$\beta$-self bounding & Unit & General & $e^{\beta}$ (Sec.~\ref{sec:ub1}) & $\frac{1}{2} \left(\frac{4}{3}\right)^{\beta/2}$ (Sec.~\ref{app:lb00}) & -- \\
 \hline
 XOS & Unit & General & $e$ (Sec.~\ref{sec:ub1}) & \multirow{3}{*}{$16/15$ (Sec.~\ref{sec:lb1})} & -- \\
 \cline{1-4}\cline{6-6}
 Submodular & Unit & Matroid & \multirow{2}{*}{$2$ (Sec.~\ref{sec:submod}) } &  & Poly.  \\
 \cline{1-3}\cline{6-6}
 Approval & Unit & Matroid & &  & Poly.  \\
 \hline
$\beta$-self bounding & General & None & $e^{O(\beta)}$ (Sec.~\ref{sec:pb}) &  $\frac{1}{2} \left(\frac{4}{3}\right)^{\beta/2}$ (Sec.~\ref{sec:pb}) & --  \\
\hline
\hline
\end{tabular}
\caption{\label{tab0} Summary of results for approximate core. The upper bound of $2$ for submodular utilities also holds for approval utilities, while the lower bound of $16/15$ for approval utilities also holds for submodular and XOS utilities. An empty box in ``Run Time'' implies  an existence result.}
\end{center}
\end{table}

\subsection{Other Related Work}
\label{sec:related}
A long line of recent literature has studied voting rules that achieve proportionality; see \citep{AzizChapter,EndrissBook,VinceBook, LacknerS} for recent surveys.  The core represents the ultimate form of proportionality, since the guarantee holds for any demographic, whether explicitly specified or based on cohesiveness of opinions.

\paragraph{Nash Social Welfare.} The $\gpav$ objective is closely related to Nash Social Welfare \cite{Fisher,Nash,ArrowD}. This has been widely studied in the allocation of {\em private goods}, where each participant has an additive utility over the bundle of goods they receive. When goods are divisible, Nash Welfare is the solution to the Fisher market equilibrium \cite{EG}. When goods are indivisible, \cite{unreasonable} show that a {\em local optimum} to this objective (where pairs of goods can be swapped between individuals) satisfies approximate envy-freeness (EF1). The global optimum of the Nash Welfare objective satisfies Pareto-optimality as well. We note that this setting, there are pseudo-polynomial time algorithms achieving both properties~\cite{EF1-poly}. In contrast, our paper shows fairness properties for Nash Welfare in settings where no computationally efficient approximations to stability are even possible.  

\paragraph{Core with Constraints.} Prior work has tried addressing the aspect of constraints via either changing the definition of the core or what an approximation means. We now contrast these with the present work. As mentioned before, the work of~\cite{DBLP:journals/teac/ChengJMW20,DBLP:conf/stoc/JiangMW20} considers a different approximation notion where the endowment of a coalition is scaled down when they deviate.  
Their results extend to packing constraints of the form $A \vec{x} \le b$, where $\vec{x}$ is a binary  vector representing which candidates are present in the committee. However, for a coalition of size $\alpha n$, they require the deviating committee $\vec{y}$ satisfy $A \vec{y} \le \alpha b$, that is, they change the constraint set to make it more strict. This may make the constraint on deviation impossible to satisfy -- for instance, an independent set constraint is of the form $x_j + x_{\ell} \le 1$, where $j, \ell \in C$. If we replace the RHS by $\alpha < 1$, this forces both of $j,\ell$ to not be chosen, so that the only feasible committee for any deviation is empty. In contrast, \cref{def:restrained} does not change the constraint set, and further, works not just for packing constraints, but for other constraints such as the Rooney Rule.

A different notion of core for multiwinner elections, defined in~\cite{DBLP:conf/ec/FainMS18}, is the following: When a coalition $S$ deviates, they are allowed to choose a committee of size $k$; however, they need to obtain a factor $\gamma \cdot n/|S|$ factor larger utility on deviation for it to be a $\gamma$-approximate core. Like our notion, their notion also allows for constraints. Indeed, they consider the same setting as \cref{thm:matroid} except with additive utilities and show that the same {\sc Local} algorithm yields an approximate core solution in their notion as well. However, the approximation factor becomes super-constant for multiple matroid constraints or for general packing constraints, even with approval utilities. Indeed, they show that the core does not exist to any non-trivial approximation for independent set constraints with approval utilities. In contrast, \cref{def:restrained} extends smoothly to arbitrary constraints, yielding a $e^{\beta}$-approximate core for very general $\beta$-self bounding utilities.

\paragraph{Participatory Budgeting.} In the absence of constraints, the proof of \cref{thm:main2} shows a generic reduction of approximation on utility (\cref{def:alphaCore2}) to approximation on endowment (\cref{def:endow}) from \cite{DBLP:conf/stoc/JiangMW20}. However, the algorithm in \cite{DBLP:conf/stoc/JiangMW20} does not run in polynomial time even for approval utilities and unit size candidates. For Participatory Budgeting with general sizes, \cite{DBLP:journals/corr/PetersPS20} present a polynomial time logarithmic approximation (under \cref{def:alphaCore2}) for approval utilities, which is improved to a polynomial time constant approximation for submodular utilities by \cite{MunagalaSWW22}. 
\section{Restrained Core for Multiwinner Elections}
\label{sec:restrained}
We will prove \cref{thm:main1,thm:lb1}, respectively upper and lower bounding the approximation to the restrained core (\cref{def:restrained}) for $\beta$-self-bounding utilities and arbitrary constraints. 

\subsection{Proof of Theorem~\ref{thm:main1}}
\label{sec:ub1}
We will first show that the $\gamma$-approximate restrained core is non-empty for $\gamma = e^{\beta}$, for $\beta$-self bounding functions, when the chosen committee of size at most $k$ needs to belong to some $\P \subseteq 2^{C}$. 

Define $\Phi(x) = \ln (1+x)$. Note that the {\sc Global} rule optimizes $\gpav(W) = \sum_i \Phi(u_i(W))$ over $W \in \P$. We will need the following analytic lemma.

\begin{lemma}\label{lem:smoothed_log}
    If $u_i(W)>0$ then $\Phi(u_i(W))-\Phi(u_i(W\setminus\{j\}))\le \frac{u_i(W)-u_i(W\setminus \{j\})}{u_i(W)}.$
\end{lemma}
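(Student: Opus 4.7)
The plan is to reduce this to a one-variable inequality about $\ln$. Set $a = u_i(W \setminus \{j\})$ and $b = u_i(W)$, so monotonicity gives $0 \le a \le b$ and the hypothesis is $b > 0$. Since $\Phi(x) = \ln(1+x)$, the inequality to prove is
\[
\ln\!\left(\frac{1+b}{1+a}\right) \;\le\; \frac{b-a}{b}.
\]

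First I would rewrite the left side as $\ln\!\bigl(1 + \tfrac{b-a}{1+a}\bigr)$ and apply the elementary bound $\ln(1+x) \le x$ for $x \ge 0$ (valid since $a \le b$), obtaining
\[
\Phi(b) - \Phi(a) \;\le\; \frac{b-a}{1+a}.
\]
Second, I would invoke the $1$-Lipschitz property of $u_i$, which is stated as a standing assumption on the utility functions: it gives $b - a = u_i(W) - u_i(W \setminus \{j\}) \le 1$, hence $1 + a \ge b$. Combined with $b > 0$, this yields $\frac{1}{1+a} \le \frac{1}{b}$, and multiplying by $b - a \ge 0$ finishes the proof:
\[
\frac{b-a}{1+a} \;\le\; \frac{b-a}{b}.
\]

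There is no real obstacle here; the only subtlety is recognizing that the bound $\frac{b-a}{1+a}$ from concavity of $\ln$ is not by itself enough, and that one must use $1$-Lipschitzness to upgrade $1+a$ in the denominator to $b$. Since $b>0$ is explicitly assumed, the final division is legitimate, and the chain of inequalities is tight enough that no further refinement is required.
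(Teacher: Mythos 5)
Your proof is correct and is essentially the same elementary argument as the paper's, just with a cleaner factoring: you write the log difference as $\ln\bigl(1+\tfrac{b-a}{1+a}\bigr)$ and apply $\ln(1+x)\le x$ followed by the $1$-Lipschitz bound $b\le 1+a$, whereas the paper factors it as $-\ln\bigl(1-\tfrac{b-a}{1+b}\bigr)$ and cites the one-variable inequality $-\ln\bigl(1-\tfrac{x}{1+u}\bigr)\le\tfrac{x}{u}$ for $x\in[0,1]$, $u>0$ (which is itself proved using $e^{x/u}\ge 1+x/u$ and $x\le 1$, so $1$-Lipschitzness enters both arguments at the same point). Your version has the minor advantage of reducing to the completely standard bound $\ln(1+x)\le x$ and making the use of $1$-Lipschitzness explicit rather than folding it into an ad hoc inequality.
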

\begin{proof}
\begin{align*}
\Phi(u_i(W))-\Phi(u_i(W\setminus\{j\}))&=\ln(u_i(W)+1)-\ln(u_i(W\setminus\{j\})+1)\\
&=-\ln\left(1-\frac{u_i(W)-u_i(W\setminus\{j\})}{u_i(W)+1}\right) \le \frac{u_i(W)-u_i(W\setminus \{j\})}{u_i(W)}. 
\end{align*}
The last inequality follows since $-\ln\left(1-\frac{x}{1+u_i}\right)\le \frac{x}{u_i}$ for all $x\in[0,1]$ and $u_i > 0$. 
\end{proof}

We will now show that the {\sc Global} rule lies in the $e^{\beta}$-approximate restrained core, for arbitrary constraints $\P \subseteq C$ on the committee. 


\begin{proof} (of \cref{thm:main1})
    Suppose {\sc Global} finds a committee $W\in \mathcal{P}$. If $W$ does not lie in the $\gamma$-approximate restrained core, there exists $S\subset V$ of voters that deviate. Let $\alpha = |S|/n$ and let $k'= \lfloor \alpha \cdot k \rfloor$. Then for any completable $\tilde{W}\subseteq W$ with size at most $k-k'$, there exists $W'$ with $|W'| \le k'$ such that 
    (1) $W' \cup \tilde{W} \in \P$ and 
    (2) $u_i(W'\cup \tilde{W})\ge \gamma (u_i(W) + 1)\ \forall i\in S$.
    We will show a contradiction for $\gamma = e^{\beta}$.

    We consider the sets $S$ and $\overline{S}$ separately. We first consider the latter set. By the $\beta$-self-bounding property of the utilities, we have for all $i\notin S$,  
    $$ \sum_{j\in W} u_i(W) - u_i(W\setminus \{j\})\le \beta \cdot u_i(W).$$  
    Since $|\overline{S}| = (1-\alpha)n$, summing these inequalities over $i \notin S$, there exists $j\in W$ such that 
    $$
    \sum_{i\notin S:\, u_i(W)>0} \left(1-\frac{u_i(W\setminus \{j\})}{u_i(W)}\right)\le \beta\cdot \frac{|\{i\notin S: u_i(W)>0\}|}{k}\le \frac{\beta\cdot (1-\alpha)\cdot n}{k}.
    $$
    By combining this inequality with \cref{lem:smoothed_log}, we have 
    \begin{align*}
    \sum_{i\notin S} \Phi(u_i(W))-\Phi(u_i(W\setminus \{j\})) &= \sum_{i\notin S:\,u_i(W)>0} \Phi(u_i(W))-\Phi(u_i(W\setminus \{j\}))\\
    &\le\sum_{i\notin S:\,u_i(W)>0} \frac{u_i(W)-u_i(W\setminus\{j\})}{u_i(W)} \le \frac{\beta\cdot (1-\alpha)\cdot n}{k}.
    \end{align*}
    We continue removing such a candidate $j$ from the current committee until we have removed $k'$ candidates. Set $\hat{W}$ to the set of all remaining candidates. This set is $k'$-completable since $\hat{W} \cup (W \setminus \hat{W}) \in \P$ and $|W \setminus \hat{W}| = k'$. Iteratively using the previous inequality, we have 
    \begin{align*}
    \sum_{i\notin S} \Phi(u_i(W))-\Phi(u_i(\hat{W}))&\le \sum_{k_0=k-\lfloor \alpha k\rfloor+1}^k \frac{\beta\cdot (1-\alpha)\cdot n}{k_0}\\
    &\le\beta\cdot (1-\alpha)\cdot n \cdot (\ln k-\ln \lceil (1-\alpha)k \rceil)\\
    &\le-\beta\cdot (1-\alpha)\cdot n \cdot \ln (1-\alpha).
    \end{align*}
    Now consider the set $S$. By assumption, there exists $W'\subseteq C$ with $|W'| \le k'$ such that for any $i\in S$, $u_i(\hat{W}\cup W')\ge \gamma \cdot (u_i(W)+1)$, and further $\hat{W} \cup W' \in \P$. 
    
    Consider adding $W'$ to $\hat{W}$. This cannot decrease the $\gpav$ score for voters $i \notin S$ since we assume the utilities are monotone. Therefore,
    $$  \sum_{i\notin S} \Phi(u_i(\hat{W}\cup W'))-\Phi(u_i(W)) \ge 0 \ge \beta\cdot (1-\alpha)\cdot n \cdot \ln (1-\alpha).$$
    For voters $i \in S$, since $u_i(\hat{W}\cup W')\ge \gamma \cdot (u_i(W)+1)$, we have:
    \begin{align*}
        \sum_{i \in S} \Phi(u_i(\hat{W}\cup W'))-\Phi(u_i(W))& \ge \sum_{i\in S} \ln (\gamma \cdot u_i(W)+\gamma +1)-\ln (u_i(W)+1)\\
        &> \alpha\cdot n\cdot \ln \gamma.
    \end{align*}
    Adding the previous two inequalities, when $\gamma \ge e^\beta$, we have 
    $$\gpav(\hat{W}\cup W')-\gpav(W)> n\cdot\beta\cdot\left(\alpha + (1-\alpha)\cdot \ln (1-\alpha)\right)\ge 0,$$
    where the final inequality holds for any $\alpha \in (0,1]$. Since $\hat{W}\cup W' \in \P$, this contradicts the assumption that $W$ had the largest $\gpav$ score. Therefore, $W$ lies in the restrained $e^{\beta}$-core. 
\end{proof}




\paragraph{Remark.} Note that the social planner can choose $\hat{W} \subseteq W$ of size $k-k'$ to maximize $\sum_{i \notin S} \Phi(u_i(\hat{W}))$, that is, the $\gpav$ score for voters in $\overline{S}$. Therefore, the social planner can be viewed as giving $\overline{S}$ a good solution from their perspective before giving $S$ their choice.

\subsection{Lower Bound: Proof of Theorem~\ref{thm:lb1}}
\label{sec:lb1}
We now show that for $c = 16/15 - o(1)$, the $c$-approximate restrained core can be empty even for approval utilities and a single packing (resp. partition matroid) constraint. We show the theorem for a stronger version of \cref{def:restrained}, where $\hat{W}$ could be any committee of size at most $k-k'$ (and not necessarily a subset of $W$) such that there exists $W'$ making $\hat{W} \cup W' \in \P$. A lower bound for this setting will also imply a lower bound for the setting where we require $\hat{W} \subseteq W$.

The voters have approval utilities, that is, the utility is additive across candidates, and the utility for any candidate is in $\{0,1\}$. There are 6 parties and 4 voters $\{a, b, c, d\}$. Each party has an infinite number of candidates and each voter's utility for all candidates in a single party is identical, that is, either a voter approves all candidates in a party (gets utility one from any of them) or disapproves all of them (gets utility zero from any of them). Each party is approved by two voters. For the set of voters $\{a,b\}$, denote their jointly approved party by $g_{ab}$. Similarly, define $g_{ca}$, $g_{ad}$, $g_{bc}$, $g_{bd}$, and $g_{cd}$. Note that voter $a$ approves all candidates in parties $g_{ab}, g_{ca}$, and $g_{ad}$. Set $k=6.4r$, where $r$ is a large number. Denote by $u_a$, $u_b$, $u_c$ and $u_d$ as the voters' utility functions. 

There is a single packing constraint on the entire candidate set, saying that any solution can choose at most $6r$ candidates. We can equivalently make this a partition matroid constraint by placing a dummy party that no voter approves, and having no bound on the number of these candidates that can be chosen.

We begin with a feasible committee $W$ that lies in the $(16/15 - \epsilon)$-approximate restrained core for any $\epsilon > 0$, and derive a contradiction.

Without loss of generality, assume 
\begin{equation}
\label{eq:greater}
u_a(W) \le u_b(W) \le u_c(W) \le u_d(W).
\end{equation}

\begin{lemma}\label{lem:uti_lower_bound}
If $W$ lies in the $16/15$-approximate restrained core, $u_a(W)\ge \frac98 \cdot r$ and $u_b(W)\ge \frac{21} 8 \cdot r$. 
\end{lemma}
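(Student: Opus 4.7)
The plan is to derive each bound by exhibiting an explicit blocking coalition whose deviation succeeds whenever the claimed inequality fails. For $u_a(W) \ge 9r/8$ I take the singleton coalition $S=\{a\}$ with endowment $k'=\lfloor k/4\rfloor = 1.6r$, and for $u_b(W) \ge 21r/8$ I take the two-voter coalition $S=\{a,b\}$ with endowment $k'=k/2=3.2r$. In each case, to verify the universal quantifier over $\hat{W}$ in \cref{def:restrained}, I must show that for \emph{every} completable $\hat{W}$ the planner picks for $V\setminus S$, the coalition admits a $W'$ achieving the utility threshold $\gamma(u_i(W)+1)$ with $\gamma=16/15$.

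For the singleton, fix any $\hat{W}$ with $|\hat{W}|\le 4.8r$ and let $A$ denote the number of approved candidates in $\hat{W}$, with $x_{ab},x_{ac},\dots,x_{cd}$ the per-party counts. Voter $a$'s response places $W'$ entirely in her approved party $g_{ab}$ (padding with dummies up to size $1.6r$), so her new utility equals $u_a(\hat{W})+y_{ab}$ where $y_{ab}=\min(1.6r,\,6r-A)$, the second argument coming from the $6r$ packing cap on approved candidates. If $A\le 4.4r$ then $y_{ab}=1.6r$ and $u_a(T)\ge 1.6r$; if $A>4.4r$ then $y_{ab}=6r-A$ and direct substitution gives $u_a(T)=6r-(x_{bc}+x_{bd}+x_{cd})\ge 6r-A\ge 1.2r$. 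Hence $u_a(T)\ge 1.2r$ in every case, so the blocking inequality $u_a(T)\ge (16/15)(u_a(W)+1)$ succeeds unless $u_a(W)\ge 1.2r\cdot(15/16)-1 = 9r/8-1$; integrality (taking $r$ divisible by $8$) gives $u_a(W)\ge 9r/8$.

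For the pair $\{a,b\}$, the natural move is to concentrate $W'$ on the jointly approved party $g_{ab}$, since each candidate there lifts both $u_a$ and $u_b$. Setting $y_{ab}=\min(3.2r,\,6r-A)$ and using $A\le |\hat{W}|\le 3.2r$, which forces $6r-A\ge 2.8r$, I split on $A\le 2.8r$ (so $y_{ab}=3.2r$ and both voters' utilities pick up $3.2r$ from $W'$ alone) versus $2.8r<A\le 3.2r$ (so $y_{ab}=6r-A$ and a symmetric calculation yields $u_a(T)=6r-(x_{bc}+x_{bd}+x_{cd})\ge 2.8r$ and $u_b(T)=6r-(x_{ac}+x_{ad}+x_{cd})\ge 2.8r$). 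Thus both voters obtain utility at least $2.8r$ regardless of $\hat{W}$. Invoking the WLOG ordering $u_a(W)\le u_b(W)$, the binding simultaneous threshold is $2.8r\ge (16/15)(u_b(W)+1)$, which rearranges to $u_b(W)\ge 21r/8$.

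The main subtlety is verifying that no adversarial $\hat{W}$ can push the coalition's utility below the required floor; the planner's natural ploy is to load $\hat{W}$ with candidates from parties the coalition disapproves of ($g_{bc},g_{bd},g_{cd}$ for $\{a\}$, and only $g_{cd}$ for $\{a,b\}$), but the global packing cap of $6r$ on approved candidates limits how much this crowding-out can accomplish. Making this tradeoff explicit via the two subcases is what produces the clean thresholds $1.2r$ and $2.8r$, which divide by $\gamma=16/15$ to give the claimed $9r/8$ and $21r/8$ lower bounds.
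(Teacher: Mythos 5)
Your proposal is correct and takes essentially the same approach as the paper: for each bound you exhibit the same coalition ($\{a\}$ with endowment $1.6r$, then $\{a,b\}$ with endowment $3.2r$), respond entirely within the jointly approved party $g_{ab}$, and argue that the $6r$ packing cap limits how much the planner's choice of $\hat{W}$ can crowd out the coalition, yielding the floors $1.2r$ and $2.8r$ that divide by $16/15$ to give $9r/8$ and $21r/8$. You simply make explicit the case split on the number of approved candidates in $\hat{W}$ where the paper instead asserts the worst-case $\hat{W}$ directly.
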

\begin{proof}
Suppose $u_a(W)<\frac98 \cdot r$. Consider $\{a\}$ as the deviating group. Since $\{b,c,d\}$ are entitled to $4.8r$ candidates, the worst selection of a committee $W'$ of this size includes candidates from parties not approved by $a$. Now voter $a$ has $1.6r$ endowment and the packing constraint implies it can choose $1.2r$ more candidates given $W'$. Since $1.2r\ge \frac {16}{15}\cdot \left(\frac 98 \cdot r\right)$, $a$ can deviate and choose at least $1.2r$ voters in $g_{ab}$ to make $W$ fail the $16/15$-restrained core. This is a contradiction. 

Similarly, if $u_b< \frac{21}8\cdot r$, consider $\{a, b\}$ as the deviating group. The set $\{c,d\}$ has an endowment of $3.2r$, and the worst selection of $W'$ by them includes only candidates from $g_{dc}$.  The packing constraint now implies $\{a, b\}$ can select $2.8r$ candidates from $g_{ab}$. Since $2.8r\ge\frac{16}{15} \left(\frac{21}8\cdot r\right)$, $W$ again fails the $16/15$-restrained core. This completes the proof.
\end{proof}

Consider the total utility of the voters in a feasible committee $W$ of size at most $6r$. Since each candidate  contributes exactly $2$ to the total utility, we have 
\begin{equation}
\label{eq:ut1}
u_a(W) + u_b(W) + u_c(W) + u_d(W)\le 12r.
\end{equation}
Suppose $W$ lies in the $(16/15-\epsilon)$-restrained core. By \cref{lem:uti_lower_bound}, we have 
\begin{equation}
\label{eq:ut2}
u_c(W)\le \frac{12r-u_a(W)-u_b(W)}2\le \frac{33}8 \cdot r.
\end{equation}

Now consider the deviating group $\{a,b,c\}$. Voter $d$ can deviate to $W'$ with size at most its endowment, $1.6r$. Consider $d$'s choice of $W'$. Since selecting any candidate from $g_{ab}, g_{ca}, g_{bc}$ gives utility $2$ to some voter $i \in \{a,b,c\}$, we can always switch this out to a candidate that gives utility one to voter $i$, and zero to other voters in $\{a,b,c\}$. Therefore, without loss of generality, $W'$ only contains candidates from $g_{ad}, g_{bd}, g_{cd}$. Denote the number of candidates selected in $W'$ from these three groups as $t_a, t_b, t_c$ respectively. Therefore $W'$ satisfies the constraint set:
$$ \Q =  \{ t_a + t_b + t_c \le 1.6r; \ \ \  t_a, t_b, t_c \ge 0\}$$
As mentioned before, though \cref{def:restrained} insists $W' \subseteq W$, we will not enforce this, but instead show that for any choice $W' \in \Q$, the set $\{a,b,c\}$ has a deviation that increases their utility by at least $16/15$. 

Fix some $W' \in \Q$. Suppose $\{a,b,c\}$ selects $x_{ab}, x_{bc}, x_{ca}$ candidates from the groups $g_{ab}$, $g_{bc}$ and $g_{ca}$ respectively as their deviating committee $T$. If the following constraints are simultaneously satisfied, then $W$ will not lie in the $(16/15-\epsilon)$-core for any $\epsilon>0$:
\begin{gather}
\label{eq:c1} x_{ab}+x_{bc}+ x_{ca}+t_a+t_b+t_c\le 6r,\\
\label{eq:c2} x_{ab}+x_{ca}+t_a\ge \frac{16}{15} \cdot u_a(W),\\
\label{eq:c3} x_{ab}+x_{bc}+t_b\ge \frac{16}{15} \cdot u_b(W),\\
\label{eq:c4} x_{ca}+x_{bc}+t_c\ge \frac{16}{15} \cdot u_c(W), \\
\label{eq:c5} x_{ab}, x_{bc}, x_{ca}\ge 0,
\end{gather}

We now show that this system has a feasible solution for any $\vec{t} \in \Q$ and for any setting of utilities that satisfy \cref{lem:uti_lower_bound}, \cref{eq:ut1}, and \cref{eq:ut2}. This will complete the proof. 

We analyze  the following cases:
\begin{itemize}
\item [1.] $t_a+t_b+t_c\le 1.2r$ and $u_a(W)+u_b(W)\ge u_c(W)$. In this case, we set
\begin{gather*}
 \begin{bmatrix} x_{ab} \\ x_{ca} \\ x
_{bc} \end{bmatrix}
 =
 \frac{8}{15}
  \begin{bmatrix}
   u_a(W)+u_b(W)-u_c(W)\\
   u_a(W)+u_c(W)-u_b(W)\\
   u_b(W)+u_c(W)-u_a(W)
   \end{bmatrix}.
\end{gather*}
Clearly, \cref{eq:c1,eq:c2,eq:c3}  hold. For instance, 
$$ x_{ab}+x_{ca}+t_a = \frac{16}{15} u_a(W) + t_a \ge \frac{16}{15} u_a(W).$$
Further, by \cref{eq:greater} and our assumption for this case, we have $u_a(W) + u_b(W) \ge u_c(W) \ge \max\{u_a(W), u_b(W)\}$. Therefore, \cref{eq:c5} holds. Since $u_d(W)\ge 3r$ from \cref{eq:greater,eq:ut1},  $$u_a(W)+u_b(W)+u_c(W)\le 9r.$$ 
Combining this with the choice of $x$ above, we have 
$$x_{ab}+x_{ca}+x_{bc}\le 9r \cdot \frac8{15}=4.8r.$$
Therefore, \cref{eq:c1} also holds.
\item [2.] $t_a+t_b+t_c\le 1.2r$ and $u_a(W)+u_b(W)< u_c(W)$. In this case, we set
\begin{gather*}
 \begin{bmatrix} x_{ab} \\ x_{ca} \\ x
_{bc} \end{bmatrix}
 = \frac{16}{15} \begin{bmatrix}
   0\\
   u_a(W)\\
   u_c(W)-u_a(W)
   \end{bmatrix}.
\end{gather*}
It is easy to check that \cref{eq:c1,eq:c2,eq:c3}  hold. For instance,
$$ x_{ab} + x_{bc} + t_b = \frac{16}{15}(u_c(W) - u_a(W)) +t_b \ge  \frac{16}{15} u_b(W) + t_b \ge  \frac{16}{15} u_b(W),$$
where we have used the assumption that $u_c(W) - u_a(W) \ge u_b(W)$. Further, since $u_c(W)\le \frac{33}{8} \cdot r$ by \cref{eq:ut2}, we have 
$$x_{ab}+x_{bc}+x_{ca}\le \frac{16}{15}\cdot \frac{33}{8} \cdot r = 4.4r < 4.8r.$$ 
Therefore,  \cref{eq:c1} holds. Finally, since $u_c(W) \ge u_a(W)$ by \cref{eq:greater}, all $x \ge 0$.
\item [3.] $t_a+t_b+t_c\ge 1.2r$. Since any $\vec{t} \in \Q$ satisfies $t_a + t_b + t_c \le 1.6 r$, \cref{eq:c1} implies $x_{ab}+x_{bc}+x_{ca}\le \theta$ for some $\theta \ge 4.4r$.  We will show a $\vec{x}$ feasible for \cref{eq:c2,eq:c3,eq:c4} when $t_a+t_b+t_c=1.2r$ and $x_{ab}+x_{bc}+x_{ca}\le 4.4r$. This will imply a feasible solution for any $\vec{t} \in \Q$ by simply increasing the $\vec{t}$ appropriately. 

Denote $\hat{u}_i\triangleq\frac{16}{15} \cdot u_i(W)-t_i$, for each $i\in\{a,b,c\}$. Note that $t_a+t_b+t_c=1.2r$. Further by \cref{lem:uti_lower_bound} and since $u_c(W) \ge u_a(W)$ by \cref{eq:greater}, we have $\frac{16}{15}\cdot u_i(W)\ge 1.2r$ for all $i \in \{a,b,c\}$. Therefore, all $\hu_i \ge 0$.
\begin{itemize}
\item [a)] If $\hu_a+\hu_b\ge \hu_c$, we set
\begin{gather*}
 \begin{bmatrix} x_{ab} \\ x_{ca} \\ x
_{bc} \end{bmatrix}
 =
 \frac{1}{2}
  \begin{bmatrix}
   \hu_a+\hu_b-\hu_c\\
   \hu_a+\hu_c-\hu_b\\
   \hu_b+\hu_c-\hu_a
   \end{bmatrix}.
\end{gather*}
It can be checked that this satisfies \cref{eq:c1,eq:c2,eq:c3}. For instance, 
$$ x_{ab} + x_{ca} + t_a =  \hu_a + t_a = \frac{16}{15} u_a(W),$$
since $\hu_a = \frac{16}{15} \cdot u_a(W) - t_a \ge 0$. We also have 
\begin{align*}
x_{ab}+x_{bc}+x_{ca}  = &\frac{1}{2} (\hu_a + \hu_b + \hu_c) \\
 \le &\frac{8}{15} (u_a(W) + u_b(W) + u_c(W)) - \frac{1}{2}(t_a + t_b + t_c) \\
 \le &\frac{8}{15} \cdot  9r - \frac{t_a + t_b + t_c}{2} <4.4r.
\end{align*}
Therefore, \cref{eq:c1} holds. The following inequalities show that $x_{bc},x_{ca}\ge 0$. 
\begin{gather*}
 \hu_c+\hu_a-\hu_b\ge \frac{16}{15}\left(u_a(W) + (u_c(W) - u_b(W))\right) - (t_a + t_b + t_c) \ge\frac{16}{15} u_a(W)-1.2r\ge 0,\\
 \hu_b+\hu_c-\hu_a\ge  u_b(W) + (u_c(W) - u_a(W)) - (t_a + t_b + t_c) \ge u_b(W)-1.2r\ge 0,
\end{gather*}
where we have used \cref{eq:greater}. Further, $x_{ab} \ge 0$ by assumption. Therefore, all constraints hold.
\item [b)] If $\hu_a+\hu_b\le \hu_c$, this implies $\hu_c-\hu_a\ge \hu_b \ge 0$. We set 
\begin{gather*}
 \begin{bmatrix} x_{ab} \\ x_{ca} \\ x
_{bc} \end{bmatrix}
 =
  \begin{bmatrix}
   0\\
   \hu_a\\
   \hu_c-\hu_a
   \end{bmatrix}.
\end{gather*}
As before, it is easy to check that \cref{eq:c1,eq:c2,eq:c3} hold; further, the $x$ variables are non-negative. To verify \cref{eq:c1}, we have 
$$x_{ab}+x_{bc}+x_{ca} = \hu_c \le \frac{16}{15}\cdot u_c(W) \le  \frac{16}{15}\cdot\frac{33}8\cdot r=4.4r,$$
where the final inequality follows from \cref{eq:ut2}.
\end{itemize}
\end{itemize}
Therefore, no matter which $W'$ the remaining voter $d$ selects with size limit $1.6r$, there is always a deviation profile $(x_{ab},x_{ca},x_{bc})$ for $\{a,b,c\}$ to expand their utility by a factor of $16/15$. Therefore, any feasible committee $W$ fails the $(16/15-\epsilon)$-restrained core for any $\epsilon > 0$.

\subsection{Matroid Constraint and Submodular Utilities}
\label{sec:submod}
\label{sec:omitted}
In this section, we present an improved bound of $2$ on the restrained core for a single matroid constraint and submodular utilities. We achieve this via the {\sc Local} rule instead of {\sc Global}.

We start with some terminology. A basis is a maximum independent set of a matroid. Formally, for a matroid $\M$ on candidates, a committee $W\in \M$ is a \emph{basis} iff there does not exist $W'$ such that $W\subsetneq W'$.  All bases of $\M$ have the same size, in this case, the size of the committee, $k$. We therefore assume $\P$ is the set of all independent sets of the matroid of size at most $k$. 

Recall also that the {\sc Local} algorithm swaps a pair of candidates as long as the committee remains a basis of $\M$ and the $\gpav$ score improves. We assume utilities of voters are submodular. We will show the following theorem.

\begin{theorem}
\label{thm:matroid}
The {\sc Local} rule for $\gpav$ yields a $2$-approximate restrained core for a single matroid constraint with submodular utilities.
\end{theorem}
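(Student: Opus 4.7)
The plan is to adapt the argument of Peters and Skowron~\cite{PetersS20}---which proves the tight factor $2$ for a $\pav$ local optimum in the unconstrained approval setting---to $\gpav$ with submodular utilities under a single matroid constraint. We proceed by contradiction: assume $W$ is a local optimum of $\gpav$ among bases of $\M$ and that $S \subseteq V$ with $|S|/n = \alpha$ and endowment $k' = \lfloor \alpha k \rfloor$ is a $2$-blocking coalition. By \cref{def:restrained}, for any $(k-k')$-sized $\hat{W} \subseteq W$ chosen by the planner---and every such $\hat{W}$ is automatically $k'$-completable since $\M$ extends independent sets to bases---there exists $W'$ with $|W'| \le k'$, $T := \hat{W} \cup W' \in \P$, and $u_i(T) \ge 2(u_i(W)+1)$ for every $i \in S$. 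The goal is to exhibit a matroid-feasible single-element swap in $W$ that strictly improves $\gpav(W)$, contradicting local optimality.

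The key new ingredient compared to the unconstrained case is the \emph{strong basis exchange} property of matroids: for any two bases $B_1, B_2$ of $\M$, there is a bijection $\pi \colon B_1 \setminus B_2 \to B_2 \setminus B_1$ such that $B_1 \setminus \{j\} \cup \{\pi(j)\}$ is a basis for each $j \in B_1 \setminus B_2$. Applied to $W$ and (an extension to a basis of) the coalition's $T$, this yields, for each $j \in W \setminus T$, a legal swap $W \setminus \{j\} \cup \{\pi(j)\} \in \P$; local optimality then gives $\sum_i \bigl[\Phi(u_i(W \setminus \{j\} \cup \{\pi(j)\})) - \Phi(u_i(W))\bigr] \le 0$, where $\Phi(x) = \ln(1+x)$. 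Summing these $|W \setminus T|$ inequalities and lower-bounding the left-hand side strictly by a positive quantity will yield the desired contradiction. The lower bound uses submodularity of $u_i$ on both ends of each matched swap: the positive side telescopes via $\sum_{j' \in T \setminus W}[u_i(W \cup \{j'\}) - u_i(W)] \ge u_i(T) - u_i(W)$, while the negative side is controlled by $\sum_{j \in W \setminus T}[u_i(W) - u_i(W \setminus \{j\})] \le u_i(W)$. The smoothness estimate $\Phi(u_i(W)) - \Phi(u_i(W \setminus \{j\})) \le (u_i(W) - u_i(W \setminus \{j\}))/u_i(W)$ from \cref{lem:smoothed_log} and its concavity-based analogue $\Phi(x+\delta) - \Phi(x) \ge \delta/(1+x+\delta)$ convert these marginal bounds into the corresponding $\gpav$ changes. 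The $2$-blocking hypothesis $u_i(T) \ge 2(u_i(W)+1)$ is exactly the threshold at which the aggregate $\gpav$ change becomes strictly positive.

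The main technical obstacle is that matroid feasibility restricts us to the $|W \setminus T|$ bijectively matched single-element swaps from strong basis exchange, rather than the $|W|\cdot|T|$ unrestricted pairs available in \cite{PetersS20}; the paired submodular chain above is designed precisely to absorb this restriction, with each matched pair $(j,\pi(j))$ contributing a clean increment-minus-decrement term. A secondary subtlety is that the planner must commit to $\hat{W}$ before the coalition responds with $W'$, but since the argument derives a contradiction from an \emph{arbitrary} response $W'$, the planner may simply fix any convenient $(k-k')$-subset of $W$ and then apply the above analysis to the induced $T$. Finally, tightness of the factor $2$ transfers directly from the $\pav$ lower bound of \cite{PetersS20}, since approval utilities are a special case of submodular and the no-constraint setting is a special case of the free matroid on $C$.
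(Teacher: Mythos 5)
Your high-level plan --- matroid basis exchange to obtain feasible single-element swaps, decomposing each swap into the gain from adding $\pi(j)$ minus the loss from removing $j$, telescoping via submodularity, and converting to $\gpav$ changes via \cref{lem:smoothed_log} --- correctly mirrors the paper's strategy, and the tightness transfer from approval utilities is fine. However, there is a genuine gap in the claim that ``the planner may simply fix any convenient $(k-k')$-subset of $W$.'' The choice of $\hat{W}$ is not free: the paper's proof chooses $\hat{W}$ to be the $k-k'$ candidates $c\in W$ with the \emph{highest} values of $\nabla_c(W) = \gpav(W) - \gpav(W\setminus\{c\})$, and this specific choice is precisely what makes the accounting close.

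To see why an arbitrary $\hat{W}$ fails: with your decomposition, a voter $i \notin S$ contributes to the aggregate $\gpav$ change at worst $-\sum_{j}\bigl(\Phi(u_i(W)) - \Phi(u_i(W\setminus\{j\}))\bigr)$, where $j$ ranges over the swapped-out candidates. Your submodular telescope plus \cref{lem:smoothed_log} bound this only by $\ge -1$ per voter, and there is no compensating gain to offer voters outside $S$, so summed over the $n-|S|$ such voters this term can be as small as $-(n-|S|)$; meanwhile the per-voter net inside $S$ is strictly positive but need not exceed one. The total is therefore not provably positive and no contradiction follows. The paper escapes this by noting that the swapped-out set $W\setminus(\hat{W}\cup T)$ is contained in $W\setminus\hat{W}$, which by its specific construction holds the \emph{lowest}-$\nabla$ candidates; this yields $\E_{c\in W\setminus(\hat{W}\cup T)}[\nabla_c(W)] \le \frac{1}{\eta}\E_{c\in W}[\nabla_c(W)] \le \frac{n}{\eta k}$ with $\eta=|T\setminus W|/|T|$ (the second inequality is \cref{lem:mat_nabla}), exactly canceling the lower bound $\E_{c\in T\setminus W}[\Delta_c(W)] > \frac{n}{\eta k}$ from \cref{lem:mat_delta} and the $2$-blocking condition. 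Your basis-exchange, telescoping, and $\Phi$-smoothness ingredients are correct, but absent the right $\hat{W}$ the numbers simply do not close.
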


Our proof uses the following result from matroid theory:

\begin{theorem}[Basis Exchange Property \citep{brualdi1969comments}]\label{thm:BEP}
For two bases $W_1\neq W_2$ of the matroid $\M$, there exists a bijection $f:W_1\setminus W_2 \rightarrow W_2\setminus W_1$ such that $\forall\, e\in W_1\setminus W_2$, $W_1\setminus \{e\}\cup\{f(e)\}\in \M$.
\end{theorem}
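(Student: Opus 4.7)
The plan is to set up a bipartite graph $G$ with parts $X = W_1 \setminus W_2$ and $Y = W_2 \setminus W_1$, placing an edge between $e \in X$ and $f \in Y$ whenever $W_1 \setminus \{e\} \cup \{f\} \in \M$. The desired bijection is then precisely a perfect matching in $G$. Since $W_1$ and $W_2$ are bases of the same matroid, they have the same cardinality, so $|X| = |Y|$. Therefore, by Hall's marriage theorem it will suffice to verify $|N_G(S)| \ge |S|$ for every $S \subseteq X$, and this is the one nontrivial step.

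To verify Hall's condition, I would fix $S \subseteq X$ and set $T = Y \setminus N_G(S)$, aiming to show $|T| \le |Y| - |S|$. The key tool is the fundamental circuit: for any $f \in Y$, since $W_1$ is a basis, $W_1 \cup \{f\}$ contains a unique circuit $C(f, W_1)$, and for every $e' \in C(f, W_1) \setminus \{f\}$ the swap $W_1 \setminus \{e'\} \cup \{f\}$ is a basis. Consequently $C(f, W_1) \setminus \{f\} \subseteq N_G^{-1}(f)$, so if $f \in T$ then $C(f, W_1)$ must be disjoint from $S$, i.e., $C(f, W_1) \subseteq (W_1 \setminus S) \cup \{f\}$. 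This means each $f \in T$ lies in the matroid closure of $W_1 \setminus S$, and therefore $(W_1 \setminus S) \cup T$ has rank equal to $\mathrm{rank}(W_1 \setminus S) = k - |S|$.

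The remaining step is a size/rank count. Because $S \subseteq W_1 \setminus W_2$, we have $S \cap W_2 = \emptyset$ and hence $W_1 \cap W_2 \subseteq W_1 \setminus S$. Thus $(W_1 \cap W_2) \cup T \subseteq (W_1 \setminus S) \cup T$, so its rank is at most $k - |S|$. On the other hand, $(W_1 \cap W_2) \cup T \subseteq W_2$ is independent, so its rank equals its size $|W_1 \cap W_2| + |T|$. Combining these inequalities and using $|Y| = k - |W_1 \cap W_2|$ gives $|T| \le k - |S| - |W_1 \cap W_2| = |Y| - |S|$, i.e., $|N_G(S)| \ge |S|$, completing the proof via Hall's theorem.

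The main obstacle, such as it is, is the fundamental-circuit step: one must argue carefully that if $C(f, W_1)$ met $S$ then $f$ would have a neighbor in $S$ (contradicting $f \in T$), and conversely that disjointness forces $f \in \mathrm{cl}(W_1 \setminus S)$. Once this closure containment is in hand, the rest is a routine rank comparison, and Hall's theorem immediately yields the bijection.
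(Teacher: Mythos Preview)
The paper does not prove this theorem; it simply cites it from \cite{brualdi1969comments} as a known result in matroid theory and uses it as a black box in the proof of \cref{thm:matroid}. Your proposed argument is correct and is essentially the standard proof: build the bipartite exchange graph on $W_1\setminus W_2$ and $W_2\setminus W_1$, use fundamental circuits to show that every $f\in W_2\setminus W_1$ with no neighbor in $S$ lies in $\mathrm{cl}(W_1\setminus S)$, and deduce Hall's condition from the rank bound. There is nothing to compare against in the paper itself.
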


For any committee $W$ and a candidate $c\notin W$, we define $\Delta_c(W) = \gpav(W\cup \{c\})-\gpav(W)$. For a candidate $c\in W$, we define $\nabla_c(W) = \gpav(W)-\gpav(W\setminus\{c\})$. We need the following technical lemmas. In the sequel, by $\E_{c\in A}[\nabla_c(W)]$, we will mean $\frac{1}{|A|}\sum_{c \in A} \nabla_c(W)$.

\begin{lemma}
Given a committee $W$, if there exists an $S\subseteq V$ and $T\subseteq C$ where $T\cap W=\varnothing$, such that $u_i(T\cup W)\ge 2 \cdot(u_i(W)+1)$, we have $\E_{c\in T\setminus W}[\Delta_c(W)]>\frac{|S|}{|T|}.$
\label{lem:mat_delta}
\end{lemma}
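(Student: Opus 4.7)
Since $T\cap W=\varnothing$, the sum $\sum_{c\in T\setminus W}\Delta_c(W)$ equals $\sum_{c\in T}\Delta_c(W)$, so the desired inequality is equivalent to $\sum_{c\in T}\Delta_c(W)>|S|$. The plan is to decompose each $\Delta_c(W)$ voter-by-voter as $\Delta_c(W)=\sum_i[\Phi(u_i(W\cup\{c\}))-\Phi(u_i(W))]$ with $\Phi(x)=\ln(1+x)$, bound the contribution from $S$ strictly below by $1$ per voter, and discard the contribution from $\overline{S}$ using nonnegativity.

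First I would dispense with the easy piece: for any $i\notin S$ and any $c$, monotonicity of $u_i$ and $\Phi$ gives $\Phi(u_i(W\cup\{c\}))-\Phi(u_i(W))\ge 0$, so these voters cannot hurt the sum. The heart of the argument is showing that for every $i\in S$, $\sum_{c\in T}[\Phi(u_i(W\cup\{c\}))-\Phi(u_i(W))]>1$. Write $u_i(c\mid W)=u_i(W\cup\{c\})-u_i(W)$. By submodularity of $u_i$ and the hypothesis $u_i(T\cup W)\ge 2(u_i(W)+1)$, I get
\[
\sum_{c\in T} u_i(c\mid W)\;\ge\; u_i(T\cup W)-u_i(W)\;\ge\; u_i(W)+2.
\]
Combining the identity $\Phi(u_i(W\cup\{c\}))-\Phi(u_i(W))=\ln\bigl(1+u_i(c\mid W)/(1+u_i(W))\bigr)$ with the elementary bound $\ln(1+x)\ge x/(1+x)$ (strict for $x>0$) and with the $1$-Lipschitz bound $u_i(c\mid W)\le 1$ yields
\[
\Phi(u_i(W\cup\{c\}))-\Phi(u_i(W))\;\ge\;\frac{u_i(c\mid W)}{1+u_i(W)+u_i(c\mid W)}\;\ge\;\frac{u_i(c\mid W)}{2+u_i(W)}.
\]
Summing over $c\in T$ produces a lower bound of $\frac{\sum_c u_i(c\mid W)}{2+u_i(W)}\ge \frac{u_i(W)+2}{u_i(W)+2}=1$.

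The final step is to upgrade this to a strict inequality for each $i\in S$. Since $u_i(W)+2\ge 2>0$, the marginals cannot all vanish, so there is at least one $c\in T$ with $u_i(c\mid W)>0$; for that term the bound $\ln(1+x)>x/(1+x)$ is strict, making the per-voter sum strictly greater than $1$. Summing over $i\in S$ and adding the nonnegative contributions from $i\notin S$ gives $\sum_{c\in T}\Delta_c(W)>|S|$, which is the claim. The main subtlety I anticipate is precisely this strictness upgrade—if the two inequalities I chain were both merely $\ge$, the conclusion would only be $\ge$; I expect to handle this cleanly by isolating the strict case $u_i(c\mid W)>0$ which the hypothesis $u_i(T\cup W)\ge 2(u_i(W)+1)$ guarantees.
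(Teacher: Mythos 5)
Your proof is correct and follows essentially the same route as the paper's: decompose $\Delta_c(W)$ by voter, use the logarithmic bound $\ln\bigl(1+\tfrac{x}{u+1}\bigr)\ge\tfrac{x}{u+2}$ for $x\in[0,1]$ (you derive it via $\ln(1+y)\ge y/(1+y)$ plus the $1$-Lipschitz cap; the paper asserts it directly), then submodularity to lower-bound the sum of marginals by $u_i(T\cup W)-u_i(W)\ge u_i(W)+2$. The one place you are more careful than the paper is the strictness: the paper labels its log inequality as strict for all $x\in[0,1]$, which fails at $x=0$, whereas you correctly isolate the fact that some $c\in T$ must have a positive marginal for each $i\in S$, which is what actually makes the per-voter bound strict.
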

\begin{proof}
\begin{align*}
\E_{c\in T\setminus W}[\Delta_c(W)]&\ge\frac{\sum_{c\in T\setminus W}\left(\sum_{i\in S} \left(\ln\left(u_i(W
\cup\{c\})+1\right)-\ln\left(u_i(W)+1\right)\right)\right)}{|T|}\\
&> \frac{\sum_{c\in T\setminus W}\left(\sum_{i\in S} \frac{u_i(W\cup\{c\})-u_i(W)}{u_i(W)+2}\right)}{|T|}
\tag{Since $\ln(1+\frac{x}{u+1})>\frac{x}{u+2},\,\forall x\in [0,1]$ and $u>0$}\\
& 
 \ge \frac{\sum_{i\in S}\frac{ u_i(W\cup T)-u_i(W)}{u_i(W)+2}}{|T|}  \tag{By submodularity}\\
& \ge  \frac{\sum_{i\in S}\frac{ u_i(W)+2}{u_i(W)+2}}{|T|}=\frac{|S|}{|T|}.
\end{align*}
\end{proof}

\begin{lemma} 
Given $W \subseteq C$ with size $k$, we have $\E_{c\in W}[\nabla_c(W)]\le n/k.$
\label{lem:mat_nabla}
\end{lemma}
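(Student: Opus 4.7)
The plan is to expand the definition of $\nabla_c(W)$, push the bound through the logarithm, interchange the two summations, and then exploit the fact that a monotone submodular utility is $1$-self-bounding.

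First I would write
\[
\sum_{c \in W} \nabla_c(W) \;=\; \sum_{c \in W} \sum_{i=1}^{n} \bigl[\ln(1+u_i(W)) - \ln(1+u_i(W\setminus\{c\}))\bigr],
\]
and apply \cref{lem:smoothed_log} term-by-term to bound each inner difference by $(u_i(W)-u_i(W\setminus\{c\}))/u_i(W)$ whenever $u_i(W) > 0$ (and drop the zero terms). After swapping the order of summation this becomes
\[
\sum_{c \in W} \nabla_c(W) \;\le\; \sum_{i:\,u_i(W)>0} \frac{1}{u_i(W)} \sum_{c \in W} \bigl(u_i(W) - u_i(W\setminus\{c\})\bigr).
\]

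Next I would invoke the key property: for a monotone submodular function $u_i$ with $u_i(\emptyset)=0$, the total marginal contribution satisfies $\sum_{c\in W}(u_i(W)-u_i(W\setminus\{c\})) \le u_i(W)$, i.e., submodular functions are $1$-self-bounding. This is the standard telescoping argument, ordering $W=\{c_1,\ldots,c_k\}$ and using submodularity to write $u_i(W)-u_i(W\setminus\{c_j\}) \le u_i(\{c_1,\ldots,c_j\})-u_i(\{c_1,\ldots,c_{j-1}\})$, so that the sum telescopes to $u_i(W)$.

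Plugging this in, the inner sum cancels the $1/u_i(W)$ factor and we obtain $\sum_{c\in W}\nabla_c(W) \le |\{i : u_i(W)>0\}| \le n$. Dividing by $|W|=k$ yields $\E_{c\in W}[\nabla_c(W)] \le n/k$, as required. I do not foresee any serious obstacle: the only subtlety is remembering that the sum over $i$ should be restricted to voters with $u_i(W)>0$ so that \cref{lem:smoothed_log} applies, but voters with $u_i(W)=0$ contribute zero to $\nabla_c(W)$ by monotonicity, so the restriction is cost-free.
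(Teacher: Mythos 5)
Your proof is correct and follows essentially the same route as the paper: expand the double sum, bound each log difference by $(u_i(W)-u_i(W\setminus\{c\}))/u_i(W)$ via \cref{lem:smoothed_log}, swap the order of summation, and use that $\sum_{c\in W}\bigl(u_i(W)-u_i(W\setminus\{c\})\bigr)\le u_i(W)$ for monotone submodular $u_i$. The only difference is expository --- the paper cites ``submodularity'' for that last step without elaboration, whereas you spell out the telescoping argument that justifies it.
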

\begin{proof}
\begin{align*}
\E_{c\in W}[\nabla_c(W)]&=\frac{\sum_{c\in W}\left(\sum_{i\in V} \left(\ln\left(u_i(W)+1\right)-\ln\left(u_i(W\setminus\{c\})+1\right)\right)\right)}{k}
\\
&=\frac{\sum_{c\in W}\sum_{i\in V} -\ln\left(1-\frac{u_i(W)-u_i(W\setminus\{c\})}{u_i(W)+1}\right)}{k} \\
&\le  \frac{\sum_{c\in W}\sum_{i:u_i(W)>0} \frac{u_i(W)-u_i(W\setminus\{c\})}{u_i(W)}}{k}
\tag{Since $-\ln\left(1-\frac{x}{1+u}\right)\le \frac{x}{u},\ \forall x\in[0,1]$ and $u>0$}\\
&=\frac{\sum_{i:u_i(W)>0} \left(\sum_{c\in W}\frac{u_i(W)-u_i(W\setminus\{c\})}{u_i(W)}\right)}{k}\\
&\le \frac{\sum_{i:u_i(W)>0} 1}{k} \le \frac nk.
\tag{Since $\sum_{c\in W}\frac{u_i(W)-u_i(W\setminus\{c\})}{u_i(W)}\le 1$ by submodularity}
\end{align*}
\end{proof}

\paragraph{Proof of \cref{thm:matroid}}
Suppose the \textsc{Local} outputs a basis $W$ and it does not lie in the 2-approximate restrained core. 
Assume $S\subseteq V$ is deviating. Denote $\alpha=\lfloor\frac{|S|\cdot k}{n}\rfloor/k$. Let $\hat{W}$ be $(1-\alpha)\cdot k$ candidates with the highest $\nabla_c(W)$'s. Since $W$ fails the 2-approximate restrained core, there exists $T$ such that $u_i(T\cup \hat{W})\ge 2\cdot(u_i(W)+1)$, for any $i\in S$, where $T \cup \hat{W} \in \P$. 

Let $T'=T\setminus W$ and $W'= (W \setminus \hat{W}) \setminus T$. Denote $\eta=\frac{|T'|}{|T|}\le 1$. Assume $|\hat{W}\cup T|=k$, otherwise add candidates to $T$ until $\hat{W}\cup T$ becomes a basis of $\M$. Therefore, $|W'| = |T'| = \eta\cdot \alpha\cdot k$. 

Since $W$ and $\hat{W}\cup T$ are both bases of $\M$, by using \cref{thm:BEP} on $W$ and $\hat{W}\cup T$, there exists a bijection $f$ from $W'$ to $T'$, such that $\forall\, c\in W'$, $W\setminus c\cup \{f(c)\}\in \M$. Since none of these swaps can improve the objective by the local optimality of $W$,
\begin{equation}
\E_{c\in W'}[\nabla_c(W)]\ge \E_{c\in T'}[\Delta_c(W)].\label{eqn:mat_1}
\end{equation}
Since $W \setminus \hat{W}$ contains  candidates with the lowest $\nabla_c(W)$'s, we have $\E_{c\in W}[\nabla_c(W)] \ge\E_{c\in W \setminus \hat{W}}[\nabla_c(W)] $. Since $W' \subseteq W \setminus \hat{W}$ and $|W'| = \eta \cdot |W \setminus \hat{W}|$, we have $\E_{c\in W \setminus \hat{W}}[\nabla_c(W)] \ge \eta \cdot \E_{c\in W'}[\nabla_c(W)]$. Therefore, 
\begin{equation}
\label{eqn:mat_2}
\E_{c\in W}[\nabla_c(W)] \ge {\eta} \cdot \E_{c\in W'}[\nabla_c(W)]. 
\end{equation}


Combining \cref{eqn:mat_1} with \cref{eqn:mat_2}, we have 
\begin{equation}
\E_{c\in T'}[\Delta_c(W)]\le \E_{c\in W'}[\nabla_c(W)]\le 
\frac{1}{\eta} \cdot \E_{c\in W}[\nabla_c(W)].\label{eqn:mat_upper}
\end{equation}

By \cref{lem:mat_nabla}, we have $\E_{c\in W}[\nabla_c(W)]\le \frac{n}{k}.$
Since $u_i(T'\cup W)\ge 2(u_i(W)+1)$, by applying \cref{lem:mat_delta} on $T'$, we also have 
\begin{equation}
\E_{c\in T'}[\Delta_c(W)]> \frac{|S|}{|T'|}\ge \frac{\alpha \cdot n}{\eta\cdot \alpha\cdot k}=\frac{n}{\eta\cdot k}\ge \frac{1}{\eta}\cdot \E_{c\in W}[\nabla_c(W)].\label{eqn:mat_lower}
\end{equation}

Since \cref{eqn:mat_lower} contradicts \cref{eqn:mat_upper}, $W$ must lie in the 2-approximate restrained core.

\section{Restrained EJR for Approval Utilities and Matroid Constraint}
\label{sec:ejr}
One weakening of the core for approval elections is {\em Extended Justified Representation (EJR)}~\cite{JR}. In the absence of constraints, it is known that any local optimum of $\pav$ satisfies this notion. We now define a restrained version of this notion when there are constraints. 

\subsection{Restrained EJR for Approval Utilities}
We first define restrained EJR for arbitrary constraints $\P$ and approval utilities. Recall that in approval utilities, each voter $i$ has an {\em approval set} $A_i$ of candidates, and the utility of this voter for subset $T$ of candidates is simply $u_i(T) = |A_i \cap T|$. Further recall the notion of $q$-completable from \cref{def:restrained}. Finally, given a set $S$ of voters and $T$ of candidates, let $\A_S(T) = (\cap_{i \in S} A_i) \cap T$ denote the candidates from $T$ that are commonly approved by $S$. Note that $|\A_S(T)|\le u_i(T)$ for all $i \in S$. 

\begin{definition} [Restrained EJR for Approval Utilities]
\label{def:ejr}
We are given a set $\P$ of feasible committees of size at most $k$. A committee $W \in \P$ satisfied restrained-EJR if there is no constraint-feasible blocking coalition $S \subseteq V$ of voters. Such a blocking coalition with endowment $k' = \lfloor \frac{|S|}{n} k \rfloor$ satisfies the following: For all $k'$-completable committees $\hat{W} \subseteq W$ with $|\hat{W}| \le k - k'$,  there exists $W'$ with $|W'| \le k'$ such that 
\begin{enumerate}
\item $T = W' \cup \hat{W} \in \P$, and 
\item For all $i \in S$, $\left| \A_S(T) \right|\ge \max_{i \in S} u_i(W)+1$.
\end{enumerate}
\end{definition}

To interpret this definition, given coalition $S$, suppose for every $k'$-completable $\hat{W}$, there was a deviation $T = \hat{W} \cup W'$ where at least $q' = |\A_S(T)|$ commonly approved candidates are chosen. Then restrained EJR implies some voter in $S$ obtains utility at least $q'$ in the committee $W$.

Note that this is a specialization of~\cref{def:restrained} where in Condition (2), $u_i(T)$ is replaced by $\left|\A_S(T)\right|$, which is at most as large. Further, as with \cref{def:restrained}, in the absence of constraints, we can set $\hat{W} = \emptyset$ and $T$ to be an arbitrary subset of $\cap_{i \in S} A_i$ of size $k'$, so that $\left|\A_S(T)\right| = \min(k', \theta)$, where $\theta=  |\cap_{i \in S} A_i|$. In this case, restrained EJR is equivalent to classic EJR~\cite{JR}. 

\subsection{{\sc Local}  Achieves Restrained EJR under a Matroid Constraint}
We now show that when $\P$ form the independent sets of size at most $k$ of a matroid $\M$, then any local optimum of $\pav$ (\cref{eq:pav}) satisfies restrained EJR. 

Recall the terminology from \cref{sec:omitted}.  Formally, for a matroid $\M$ on candidates, a committee $W\in \M$ is a \emph{basis} iff there does not exist $W'$ such that $W\subsetneq W'$.  All bases of $\M$ have the same size, in this case, the size of the committee, $k$. We therefore assume $\P$ is the set of all independent sets of the matroid of size at most $k$. 

The {\sc Local} algorithm swaps a pair of candidates as long as the committee remains a basis of $\M$ and the $\pav$ score  strictly increases.   We will prove the following theorem.

\begin{theorem}
\label{thm:ejr}
When $\P$ form the independent sets of size at most $k$ of a matroid $\M$, the {\sc Local} algorithm applied to the $\pav$ score finds a committee satisfying restrained EJR (\cref{def:ejr}).
\end{theorem}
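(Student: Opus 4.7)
The plan is a proof by contradiction, extending the classical argument that a local optimum of $\pav$ satisfies EJR to the matroid-constrained setting via the basis-exchange theorem. Suppose $W \in \P$ is a local optimum of $\pav$ that fails restrained EJR, witnessed by a coalition $S$ with endowment $k' = \lfloor |S|k/n\rfloor$; set $M = \max_{i \in S} u_i(W)$ and $B = \bigcap_{i \in S} A_i$. The first step is to observe that $|W \cap B| \le k'$. Indeed, if $|W \cap B| > k'$, then any $\hat W \subseteq W$ with $|\hat W| = k-k'$ minimizing $|\hat W \cap B|$ still satisfies $|\hat W \cap B| \ge |W \cap B| - k'$, and since $|W'| \le k'$ we would get $|\A_S(T)| \le |\hat W \cap B| + |W'| \le |W \cap B| \le M < M+1$, contradicting the blocking condition for this choice of $\hat W$. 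So I may select $\hat W \subseteq W \setminus B$ with $|\hat W| = k-k'$; the blocking condition then supplies $W'$ with $|W' \cap B| \ge M+1$, which forces $k' \ge M+1$ and hence $|S| \ge (M+1)\,n/k$, the analog of the standard EJR cohesiveness bound.

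Given such a deviation $T = \hat W \cup W'$, pad $T$ (using matroid extensions from $C \setminus (W \cup B)$ whenever independence allows) until it is a basis of $\M$ while keeping $|T \cap B| \ge M+1$ intact. Then \cref{thm:BEP} applied to $W$ and $T$ yields a bijection $f : W \setminus T \to T \setminus W$ such that $W \setminus \{c\} \cup \{f(c)\} \in \M$ for every $c \in W \setminus T$. Local optimality of $W$ gives $\pav(W \setminus \{c\} \cup \{f(c)\}) \le \pav(W)$ for each such pair; expanding and summing these inequalities and then grouping by voter yields
$$\sum_{i} \frac{X_i}{u_i(W)+1} \;\le\; \sum_{i:\,u_i(W)>0} \frac{Y_i}{u_i(W)},$$
where $X_i = |\{c \in W \setminus T : c \notin A_i,\ f(c) \in A_i\}|$ and $Y_i = |\{c \in W \setminus T : c \in A_i,\ f(c) \notin A_i\}|$. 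The cancellation identity $X_i - Y_i = u_i(T) - u_i(W)$, combined with $u_i(T) \ge M+1$ and $u_i(W) \le M$ for every $i \in S$, forces $X_i - Y_i \ge M+1 - u_i(W) \ge 1$, so the $S$-contribution to the left-hand side is at least $\sum_{i \in S}(M+1 - u_i(W))/(u_i(W)+1) \ge |S|/(M+1) \ge n/k$.

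To finish, the right-hand side must be shown to be strictly smaller. The main tool is the load identity $\sum_{c \in W} \nabla_c(W) = |\{i : u_i(W) > 0\}| \le n$, where $\nabla_c(W) = \sum_{i:\,c \in A_i} 1/u_i(W)$. I would refine the choice of $\hat W$ so that, among $(k-k')$-subsets of $W \setminus B$, it \emph{maximizes} $\sum_{c \in \hat W} \nabla_c(W)$; since $W \setminus T \subseteq (W \cap B) \cup (W \setminus B \setminus \hat W)$, this leaves $W \setminus T$ with the smaller $\nabla$-values among $W \setminus B$, together with the $W \cap B$ candidates (whose total contribution can be controlled using $|W \cap B| \le M$). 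The extra commonly-approved candidate in $T \setminus W$ beyond the at most $|W \cap B|$ already in $W$ then supplies the strict gap needed to contradict the summed local-optimality inequality.

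The main obstacle is that, unlike the unconstrained EJR proof where one freely pairs a commonly-approved candidate with any low-load $c' \in W$, the bijection $f$ here is forced on us by the matroid, so voters outside $S$ may incur losses that in principle could swamp the $S$-side gain. The delicate step is therefore the load-averaging argument in the previous paragraph: it must combine the matroid-wide bound $\sum_c \nabla_c(W) \le n$ with the strategic choice of $\hat W$ and with the strict inequality $|T \cap B| \ge M+1 > |W \cap B|$ to produce a strict contradiction, and in particular needs to account carefully for the interaction between the (possibly high-$\nabla$) candidates in $W \cap B$ — which our choice of $\hat W$ pushes into $W \setminus T$ — and the external $\nabla$-loads of voters outside $S$.
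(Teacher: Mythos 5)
Your proposal shares several ingredients with the paper's proof — the contradiction setup, the bound $k' \ge M+1$, choosing $\hat W$ from $W \setminus B$ with the largest $\nabla$-loads, and the basis-exchange bijection from \cref{thm:BEP} — but it then diverges in a way that leaves a genuine gap. The paper does \emph{not} sum the local-optimality inequalities over all pairs $(c, f(c))$; it isolates a single improving swap. Concretely, it first replaces $T$ by $T \cap (A \cup W)$ and pads \emph{with candidates from $W$} so that the resulting basis $T_0$ satisfies $T_0 \setminus W \subseteq A$ (every added candidate is commonly approved), defines $W_e = (W \setminus T_0) \setminus A$, shows $W_e \neq \emptyset$, and then takes $c_1 \in W_e$ to be the candidate of smallest $\nabla_{c}(W)$ in $\hat W \cup W_e$ and $c_2 \in T_0 \setminus W$ feasible for exchange with $c_1$. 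The whole argument then amounts to a careful bound $\nabla_{c_1}(W) < \Delta_{c_2}(W \setminus \{c_1\})$ for this one pair, with a case analysis to extract strictness.

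Your summed inequality $\sum_i X_i/(u_i(W)+1) \le \sum_{i:u_i(W)>0} Y_i/u_i(W)$ is correct, as is the identity $X_i - Y_i = u_i(T)-u_i(W)$, but the resulting lower bound of $n/k$ on the $S$-side of the left-hand side is nowhere near enough. Rewriting the difference as $\sum_i D_i/(u_i(W)+1) - \sum_{i:u_i(W)>0} Y_i/\bigl(u_i(W)(u_i(W)+1)\bigr)$ with $D_i = u_i(T)-u_i(W)$ shows what you are up against: voters $i \notin S$ may have $D_i$ strongly negative (the bijection forces them to lose approved candidates in $W\setminus T$ in exchange for candidates they do not approve), and there is no upper bound of order $n/k$ on $\sum_{i\notin S} |D_i|/(u_i(W)+1)$ or on $\sum_{c\in W\setminus T}\nabla_c(W)$; the latter can be as large as roughly $(k'/k)\,n$, which dwarfs the $n/k$ gain from $S$. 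Maximizing $\sum_{c\in\hat W}\nabla_c(W)$ over $(k-k')$-subsets of $W\setminus B$ does not control this, because $W\setminus T$ still contains all of $W\cap B$ (which can carry large $\nabla$-mass) and because $|W\setminus T|$ can be as large as $k'$. You flag this as the ``delicate step'' but do not supply the estimate, and I do not see how to make the aggregate route go through; the single-swap argument exists precisely to avoid it.

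Two smaller issues. First, your padding rule — extend $T$ with candidates from $C \setminus (W \cup B)$ ``whenever independence allows'' — is underspecified and may be impossible (the only elements that can extend $T$ to a basis might lie in $W$ or $B$), and, unlike the paper's padding, it destroys the useful invariant $T_0 \setminus W \subseteq A$ that the single-swap argument relies on (it guarantees any $c_2 \in T_0\setminus W$ is commonly approved, hence $\Delta_{c_2}(W)\ge |S|/(\ell+1)$). Second, in your preliminary observation you write $|\hat W\cap B| \ge |W\cap B| - k'$ and then use it as if it were an equality; the minimizing $\hat W$ achieves $|\hat W\cap B| = |W\cap B|-k'$ exactly, which is what the rest of the chain needs, so this is a typo rather than a substantive error, but it should be fixed.
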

\subsubsection{Proof of \cref{thm:ejr}}
We prove this by contradiction. Suppose {\sc local} outputs a committee $W$ of size $k$ that fails restrained EJR. Then there exists a blocking coalition $S$ with endowment $k'=\lfloor \frac{|S|}{n} k \rfloor$, such that both conditions in \cref{def:ejr} hold for any $\hat{W}$. We will show a feasible {\sc Local} swap that strictly increases the $\pav$ score.

Let $A =\bigcap_{i\in S} A_i$.  For $c \in W$, let $\nabla_c(W) = \pav(W) - \pav(W \setminus \{c\})$. For $c \notin W$, let $\Delta_c(W)=\pav(W\cup \{c\})-\pav(W)$. 


Let $\ell= \max_{i\in S} u_i(W)$.  If $k'\le \ell$, we can set $\hat{W} = \emptyset$, which is trivially $k'$-completable. Then any selection of $W'$ will make $u_{i_1}(\hat{W}\cup W')\le \ell,\ \forall\,i_1\in S$. This violates the second condition $\left|\A_S(\hat{W}\cup W')\right| \ge u_i(W)+1$ in \cref{def:ejr}. Therefore,  $k'\ge \ell+1$.

Next, we set $\hat{W}$ to be the $k-k'$ candidates in $W\setminus A$ with the highest $\nabla_c(W)$'s. If there are ties, we first include candidates outside $\bigcup_{i\in S} A_i$ first. 

Since $S$ forms a blocking coalition, there exists $W'$ (where $W'\cup \hat{W}\in \P$) such that $|\A_S(W'\cup \hat{W})|\ge \ell+1$. Let $T=W'\cup \hat{W}$. Since $T \in \mathcal{P}$, by the matroid property, we have $T\cap (A\cup W)$ is also in $\P$. Since $W \in \P$ is a basis, we can augment $T\cap (A\cup W)$ with candidates in $W$ until it contains $k$ candidates and is also a basis. Denote this new committee as $T_0$.  Since all added candidates are from $W$, we have $T_0\subseteq A\cup W$ and $T_0\in \mathcal{P}$. Further, note that the new candidates added do not belong to $A$, so that $\left|\A_S(T) \right| = \left| \A_S(T_0) \right| \ge \ell + 1$. 

We now define the following sets:
 \begin{gather*}
  W_a = \A_S(T_0\cap W), \ \  W_b = T_0\setminus W, \ \ W_d = \A_S(W\setminus T_0), \ \  W_e = (W\setminus T_0)\setminus A, \\ W_c = W \setminus (\hat{W} \cup W_a \cup W_d \cup W_e).
 \end{gather*}
These sets are illustrated in \cref{fig1}.  Denote the number of candidates in $W_q$ as $\eta_q\cdot k$ for $q\in \{a,b,c,d,e\}$. 

Since all voters in $S$ have utility at most $\ell$, we have $|\A_S(W)|\le \ell$. Therefore, $|W\setminus A|\ge k-\ell\ge k-k'+1$.  Note that 
$$|W_d|+|W_e| = |W \setminus T_0| = |T_0 \setminus W| =|W_b|.$$ 
Further, since  $T_0 \subseteq A\cup W$, we have $W_b\subseteq A\setminus W$, so that $W_b = \A_S(W_b)$. Therefore,
$$|W_a|+|W_b| = \left| \A_S(T_0) \right| = \left| \A_S(T) \right| \ge \ell + 1 \ge \left| \A_S(W) \right| + 1 = |W_a|+|W_d|+1.$$
Therefore, we have $|W_e|\ge 1$, so that $W_e \neq \emptyset$.

\begin{figure}[htbp]
\resizebox{\textwidth}{1.2in}{
\begin{tikzpicture}
    \node[rectangle,
    draw = lightgray,
    text = black,
    fill = blue!50,
    minimum width = 5cm, 
    minimum height = 0.8cm] (hatW) at (2.5,0) {$\hat{W}$};
    \node[rectangle,
    draw = lightgray,
    text = black,
    fill = blue!50,
    minimum width = 2.5cm, 
    minimum height = 0.8cm] (Wa) at (6.35,0) {$W_a$};
    \node[rectangle,
    draw = lightgray,
    text = black,
    fill = blue!30,
    minimum width = 2cm, 
    minimum height = 0.8cm] (Wd) at (8.7,0) {$W_d$};
    \node[rectangle,
    draw = lightgray,
    text = black,
    fill = blue!30,
    minimum width = 2.5cm, 
    minimum height = 0.8cm] (We) at (11.05,0) {$W_e$};
    \node[rectangle,
    draw = lightgray,
    text = black,
    fill = blue!50,
    minimum width = 4cm, 
    minimum height = 0.8cm] (Wc) at (14.4,0) {$W_c$};
    \node[rectangle,
    draw = lightgray,
    text = black,
    fill = blue!50,
    minimum width = 4.6cm, 
    minimum height = 0.8cm] (Wb) at (10,-1.6) {$W_b$};
    \draw [dotted] (Wd.south west) -- (Wb.north west);
     \draw [dotted] (We.south east) -- (Wb.north east);

    \node[rectangle,
    draw = red, line width =0.3mm,
    text = red,
    minimum width = 6.4cm, 
    minimum height = 0.8cm, label = {[text = red] below:$A\setminus W$}] (WmA) at (10.5,-1.6) {};
    \node[rectangle,
    draw = blue, line width =0.4mm,
    text = blue,
    minimum width = 16.4cm, 
    minimum height = 0.8cm, label = {[text = blue] below:$\mathbf{W}$}] (W) at (8.2,0) {};
        \node[rectangle,
    draw = red, line width =0.3mm,
    text = red,
    minimum width = 4.6cm, 
    minimum height = 0.9cm, label = {[text = red]:$W\cap A$}] (WcapA) at (7.4,0) {};
\end{tikzpicture}
}
\caption{\label{fig1}Illustration of the candidate groups. The first row illustrate the five candidate sets which compose $W$. The deeper blue boxes are candidates in $T_0$. The red boxes represent candidates in $A$, which are candidates approved by all voters in $S$. }
\end{figure}
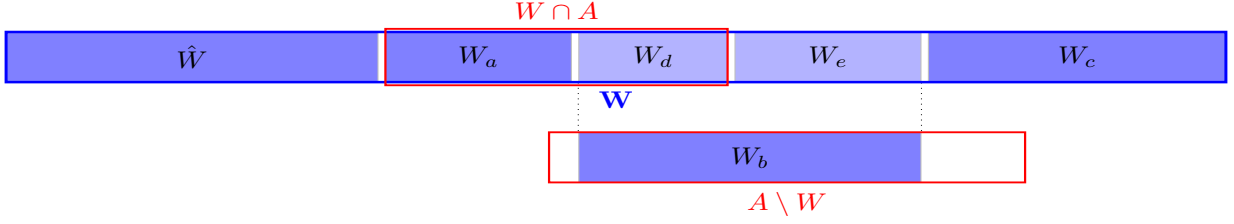

Denote the number of voters with strictly positive utility as $n^+$. We have 
$$\sum_{c \in W} \nabla_c(W) = \sum_{c \in W} \sum_{i : c \in A_i} \frac{1}{|A_i \cap W|} = \sum_{i : u_i(W) > 0} \frac{|A_i \cap W|}{|A_i \cap W|} = n^+.$$ 
Now select candidate $c_1 \in \hat{W} \cap W_e$ with the smallest $\nabla_c(W)$. Note that by definition of $\hat{W}$ and since $W_e \subseteq W$ is non-empty, this candidate lies in $W_e$. If there are ties, select a candidate approved by the maximum number of voters in $S$.  We now consider two cases and show the same bound for $\nabla_{c_1}(W)$ in either case.

\begin{lemma}
$\Delta_{c_1}(W) \le \frac{n-(\eta_a+\eta_d)\cdot k \cdot \frac{|S|}{\ell+1}}{k-(\eta_a+\eta_d+\eta_c)\cdot k}$.
\end{lemma}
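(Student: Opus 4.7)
The plan is to bound $\nabla_{c_1}(W)$ by an averaging argument over the set $\hat{W}\cup W_e$. First I would verify that $\hat{W}$ and $W_e$ are disjoint: since $\hat{W}\subseteq T\cap W \subseteq T_0$, while $W_e=(W\setminus T_0)\setminus A$ lies outside $T_0$, the two sets cannot overlap. Combined with the disjoint partition $W=\hat{W}\cup W_a\cup W_c\cup W_d\cup W_e$ depicted in \cref{fig1}, this yields
$$|\hat{W}\cup W_e| \;=\; k-|W_a|-|W_c|-|W_d| \;=\; \bigl(1-\eta_a-\eta_c-\eta_d\bigr)k,$$
which is precisely the denominator $D$ appearing in the lemma. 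Because $c_1$ is defined as the element of $\hat{W}\cup W_e$ of minimum $\nabla$-value, the minimum is at most the average, so
$$\nabla_{c_1}(W) \;\le\; \frac{1}{D}\sum_{c\in \hat{W}\cup W_e}\nabla_c(W).$$

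Next I would upper-bound this numerator by a global double count. Swapping orders of summation,
$$\sum_{c\in W}\nabla_c(W) \;=\; \sum_{c\in W}\sum_{i:\,c\in A_i}\frac{1}{|A_i\cap W|} \;=\; \sum_{i:\,u_i(W)>0}\frac{|A_i\cap W|}{|A_i\cap W|} \;\le\; n.$$
Since every $\nabla_c(W)\ge 0$ and $W\setminus(\hat{W}\cup W_e)=W_a\cup W_c\cup W_d$, I may discard the (non-negative) $W_c$ contribution to obtain
$$\sum_{c\in \hat{W}\cup W_e}\nabla_c(W) \;\le\; n-\sum_{c\in W_a\cup W_d}\nabla_c(W).$$

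The key step is then the lower bound on $\sum_{c\in W_a\cup W_d}\nabla_c(W)$. Because $W_a\cup W_d=W\cap A$ consists of candidates approved by every voter in $S$, and each such $i$ has $|A_i\cap W|=u_i(W)\le \ell$, each $c\in W_a\cup W_d$ satisfies
$$\nabla_c(W) \;\ge\; \sum_{i\in S}\frac{1}{u_i(W)} \;\ge\; \frac{|S|}{\ell} \;\ge\; \frac{|S|}{\ell+1},$$
and summing gives $\sum_{c\in W_a\cup W_d}\nabla_c(W)\ge (\eta_a+\eta_d)\,k\cdot\frac{|S|}{\ell+1}$. Chaining the three inequalities produces exactly the claimed bound.

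The only subtlety I foresee is verifying that $c_1$ really does sit in $W_e$, so that the averaging set has the intended size $D$: this holds because $\hat{W}$ was defined to contain the top $|\hat{W}|$ candidates of $W\setminus A$ by $\nabla$-value, so every $c\in W_e\subseteq (W\setminus A)\setminus\hat{W}$ satisfies $\nabla_c(W)\le \min_{c'\in\hat{W}}\nabla_{c'}(W)$, and after tie-breaking $c_1$ can always be chosen inside $W_e$. The ``two cases'' mentioned in the excerpt presumably refines this tie-breaking for use later in the proof of \cref{thm:ejr}, but the single averaging argument above already delivers the stated bound uniformly.
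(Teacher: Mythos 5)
Your proof is correct and follows essentially the same averaging argument as the paper: minimum $\le$ average over $\hat W\cup W_e$, the global identity $\sum_{c\in W}\nabla_c(W)=n^+\le n$, discarding the nonnegative $W_c$ contribution, and a per-candidate lower bound $\nabla_c(W)\ge \frac{|S|}{\ell+1}$ on $W_a\cup W_d=W\cap A$. You also correctly read through two typos in the paper (the lemma should state $\nabla_{c_1}$, not $\Delta_{c_1}$, and $c_1$ is selected from $\hat W\cup W_e$, not $\hat W\cap W_e$, which is empty). The one genuine difference is that you avoid the paper's case split on $\ell=0$ versus $\ell\ge 1$: the paper introduces $S^+$, $S_0$, $n_0$ and manipulates $\frac{|S^+|}{\ell}$ into $\frac{|S|}{\ell+1}$, whereas you observe that if $W_a\cup W_d\neq\emptyset$ then every $i\in S$ approves some candidate in $W$, so $u_i(W)\ge 1$ for all $i\in S$ and $\ell\ge 1$ automatically (making $S^+=S$), while if $W_a\cup W_d=\emptyset$ the lower bound is trivially $0$. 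This is a modest but real simplification. One minor correction to your closing remark: the paper's two cases are not a refinement of tie-breaking but a handling of the $\ell=0$ degeneracy (where $\frac{1}{\ell}$ is undefined); the tie-breaking rules for $\hat W$ and $c_1$ are separate and only matter downstream in the proof of \cref{thm:ejr} to rule out the all-equalities scenario, which the lemma as stated does not require.
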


\begin{proof}
We analyze the upper bound of $\Delta_{c_1}(w)$ in the following two cases:
\begin{itemize}
\item \textbf{Case 1: $\ell=0$.} In this case, $W \cap A = \emptyset$, and thus $\eta_a=\eta_d=0$. Further, since $\max_{i \in S} u_i(W) = 0$, we have $n > n^+$. Since  $c_1$ has the lowest $\nabla_c(W)$ in $\hat{W}\cup W_e$, we have
\begin{align*}
\nabla_{c_1}(W)\le \frac{n^+}{|\hat{W}|+|W_e|}&< \frac{n-(\eta_a+\eta_d)\cdot k \cdot \frac{|S|}{\ell+1}}{|\hat{W}|+|W_e|}\tag{Since $u_i(W)=0, \forall i\in S$ and $n>n^+$}\\&= \frac{n-(\eta_a+\eta_d)\cdot k \cdot \frac{|S|}{\ell+1}}{k-(\eta_a+\eta_b+\eta_c)\cdot k+\eta_e\cdot k}
=\frac{n-(\eta_a+\eta_d)\cdot k \cdot \frac{|S|}{\ell+1}}{k-(\eta_a+\eta_d+\eta_c)\cdot k}. \tag{Since $\eta_b=\eta_d+\eta_e$} 
\end{align*}

\item \textbf{Case 2: $\ell\ge 1$.} In this case let $S^+=\{i\in S:u_i(W)\ge 1\}$. Let $n_0=n-n^+$ denote the total number of voters with zero utility, and let $S_0=S\setminus S^+$ be the subset from $S$ with zero utility. Since each $i \in S^+$ has utility at most $\ell$, their individual contribution to $\nabla_c(W)$ is at least $\frac{1}{\ell}$.  Since $\sum_{c \in W} \nabla_c(W) = n^+$,  
$$\sum_{c\in \hat{W}\cup W_e\cup W_c}\nabla_c(W)\le n^+ - |W_a \cup W_d| \cdot \frac{|S^+|}{\ell} = n^+-(\eta_a+\eta_d)\cdot k \cdot \frac{|S^+|}{\ell}.$$ 
Therefore, we have
\begin{align*}
\nabla_{c_1}(W)&\le \frac{n^+-(\eta_a+\eta_d)\cdot k \cdot \frac{|S^+|}{\ell}}{|\hat{W}|+|W_e|} \tag{Equality holds only when $\nabla_c(W)$'s are all equal within $\hat{W}\cup W_e$}\\
&= \frac{n-n_0-(\eta_a+\eta_d)\cdot k \cdot \frac{|S|-|S_0|}{\ell}}{k-(\eta_a+\eta_d+\eta_c)\cdot k} 
=\frac{n-(\eta_a+\eta_d)\cdot k \cdot \frac{|S|}{\ell}-(n_0-|S_0|\cdot \frac{(\eta_a+\eta_d)\cdot k}{\ell})}{k-(\eta_a+\eta_d+\eta_c)\cdot k} \\
& \le  \frac{n-(\eta_a+\eta_d)\cdot k \cdot \frac{|S|}{\ell}}{k-(\eta_a+\eta_d+\eta_c)\cdot k}\tag{Since $n_0\ge |S_0|$ and $(\eta_a+\eta_d)\cdot k=|\A_S(W)|\le \ell$} \\
&\le \frac{n-(\eta_a+\eta_d)\cdot k \cdot \frac{|S|}{\ell+1}}{k-(\eta_a+\eta_d+\eta_c)\cdot k} \tag{Equality holds only when  $\eta_a=\eta_d=0$ and $n=n^+$}.
\end{align*}
\end{itemize}
\end{proof}
Since $W$ and $T_0$ are bases of $\M$, and $T_0 \setminus W = W_b$, for any $c_2\in W_b$ we have $W\setminus \{c_1\}\cup \{c_2\}\in \mathcal{P}$. Since $c_2\in W_b$, it is approved by all voters in $S$, so that $\Delta_{c_2}(W)\ge \frac{|S|}{\ell+1}$. Combining this with the bound on $\nabla_{c_1}(W)$ from above, we have 
\begin{align*}
\nabla_{c_1}(W)
&\le \frac{n-(\eta_a+\eta_d)\cdot k \cdot \frac{|S|}{\ell+1}}{k-(\eta_a+\eta_d+\eta_c)\cdot k}
= \frac{|S|}{\ell+1}\cdot\frac{\frac{n\cdot (\ell+1)}{|S|}-(\eta_a+\eta_d)\cdot k}{k-(\eta_a+\eta_d+\eta_c)\cdot k}\\
&\le \frac{|S|}{\ell+1}\cdot\frac{k\cdot \frac{\eta_a+\eta_b}{\eta_a+\eta_b+\eta_c}-(\eta_a+\eta_d)\cdot k}{k-(\eta_a+\eta_d+\eta_c)\cdot k}\tag{Since $(\eta_a+\eta_b)\cdot k = |W_a| + |W_b| \ge \ell+1$ and $\frac{|S|}{n} = \frac{k'}{k} = 1 - \frac{|\hat{W}|}{|W|} =\eta_a+\eta_b+\eta_c $}\\
&= \frac{|S|}{\ell+1}\cdot\frac{ \frac{\eta_a+\eta_b}{\eta_a+\eta_b+\eta_c}-(\eta_a+\eta_d)}{
(1-\eta_c)-(\eta_a+\eta_d)}.  
\end{align*}
Since $\frac{\eta_a+\eta_b}{\eta_a+\eta_b+\eta_c}\le  1-\eta_c$ (equality holds only when $\eta_c=0$), we further have
\begin{align*}
\nabla_{c_1}(W) &\le \frac{|S|}{\ell+1} \\
&\le \Delta_{c_2}(W) \tag{Equality holds only when $u_i(W)=\ell$ for all $i\in S$}\\
&\le \Delta_{c_2}(W\setminus\{c_1\}). \tag{Equality holds only when $c_1\notin A_i$ for all $i\in S$}
\end{align*}

We finally argue that some inequality above must be strict. We have $\nabla_{c_1}(W)=\Delta_{c_2}(W\setminus \{c_1\})$ only if all the following conditions are met: (a) $\eta_a=\eta_d=\eta_c=0$; (b) all $\nabla_c(W)$'s are equal within $\hat{W}\cup W_e$; (c) $\forall i\in S$, $c_1\notin A_i$; and (d) $n=n^+$. By (a) and (b), we have $\hat{W}\cup W_e=W$ and thus $\nabla_c(W)=\frac{n}{k}$ for all $c\in W$. Recall that in choosing $\hat{W}$, we break ties in favor of candidates which are not approved by any voter in $S$, and we select $c_1$ from $W_e$ to be the one approved by the maximum number of voters in $S$. Since by (d), we have $u_i(W)\ge 1$ for all $i\in S$, there exists $i^*\in S$ such that $c_1\in A_{i^*}$. This contradicts (c).

Therefore, all the equalities cannot hold simultaneously and $\nabla_{c_1}(W)<\Delta_{c_2}(W\setminus \{c_1\})$. Since $W\setminus \{c_1\}\cup\{c_2\}\in \mathcal{P}$, switching $c_1$ to $c_2$ strictly increases the $\pav$ score of $W$. This contradicts the fact that $W$ is a local optimum for $\pav$.

\section{Tight Analysis for Additive Utilities: Proof of Theorem~\ref{thm:tight}}
\label{sec:additive}
We assume the voter utility functions $\{u_i(\cdot)\}_{i \in V}$ are additive. Let $u_i(c)$ denote the utility of voter $i$ for candidate $c$. The committee size is $k$ and there are no additional constraints. 

Recall $\Phi(x) = H(\lfloor x \rfloor) + \frac{x - \lfloor x \rfloor}{\lceil x \rceil}$ and the $\gpv$ rule from \cref{eq:gpav2} is $\gpv(T) = \sum_{i \in V} \Phi(u_i(T))$.  The {\sc Local} rule keeps swapping a candidate in $T \subseteq C$ (of size $k$) with one not in $T$ as long as the $\gpv$ score strictly improves. Note that for the special case of approval utilities, $\gpv$ reduces to $\pav$, which was shown to be $2$-approximately stable in~\cite{PetersS20}.

\subsection{Upper Bound} 
We first prove the upper bound. Let $W$ denote any local optimum found, with size $k$. Consider any subset $S\subseteq V$ of size $\alpha \cdot n$, where $n = |V|$. Assume it does not lie in the $(2-\alpha)$ core.

Analogous to~\cite{PetersS20}, we first define the marginal change in $\gpv$ on adding or deleting a candidate.
\begin{itemize}
\item $\forall\,c\notin W,\Delta_{i,c}(W) = \Phi(u_i(W\cup\{c\}))-\Phi(u_i(W))$ and $\Delta_c(W) = \sum_{i\in V} \Delta_{i,c}(W)$.
\item $\forall\,c\in W,\nabla_{i,c}(W) = \Phi(u_i(W))-\Phi(u_i(W\setminus \{c\}))$ and  $\nabla_c(W)= \sum_{i\in V} \nabla_{i,c}(W)$. 
\end{itemize}

Note that $\Delta_c(W) = \gpv(W\cup \{c\})-\gpv(W)$ and $\nabla_c(W) = \gpv(W)-\gpv(W\setminus\{c\})$. A key lemma that extends the results in~\cite{PetersS20} to additive utilities is the following:

\begin{lemma} \label{lem:nabla} $\forall\, i\in V,\sum_{c\in W}{\nabla_{i,c}(W)}\le 1$.
\end{lemma}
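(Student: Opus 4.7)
The plan is to reduce the lemma to a sharp analytic bound on $\Phi$ and verify that bound by exploiting its piecewise-linear structure together with a short convexity argument. Since $u_i$ is additive, setting $a_c := u_i(c) \in [0,1]$ gives $u_i(W \setminus \{c\}) = U - a_c$ where $U := u_i(W) = \sum_{c \in W} a_c$. The claim is therefore equivalent to
$$\sum_{c \in W} \bigl[\Phi(U) - \Phi(U - a_c)\bigr] \;\le\; 1$$
for any decomposition $U = \sum_c a_c$ into values $a_c \in [0,1]$.

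Next, I would unpack $\Phi$: it is concave and piecewise linear with constant slope $1/(k+1)$ on each segment $[k, k+1]$. Write $m := \lfloor U \rfloor$ and $\delta := U - m$. If $m = 0$ every $a_c \le \delta = U$ lies in the top segment, so each difference equals $a_c$ and the sum equals $U < 1$. If $m \ge 1$ I split at the breakpoint $m$: for $c$ with $a_c \le \delta$ the interval $[U - a_c, U]$ lies inside $[m, m+1]$ and contributes $a_c/(m+1)$; for $c$ with $a_c > \delta$, the constraint $a_c \le 1$ ensures $U - a_c \in [m-1, m]$, so the interval straddles $m$ and contributes $\delta/(m+1) + (a_c - \delta)/m$. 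Summing and using $\sum_c a_c = U$ to cancel cross terms, the total collapses to
$$\sum_{c \in W}\bigl[\Phi(U) - \Phi(U - a_c)\bigr] \;=\; \frac{U}{m+1} + \frac{E}{m(m+1)}, \qquad E := \sum_{c \in W}(a_c - \delta)^+.$$

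The hard part is then to show $E \le m(1-\delta)$; once this is in hand, substitution gives $(m+\delta)/(m+1) + (1-\delta)/(m+1) = 1$ and we are done. I would establish the excess bound via convexity: each $(a_c - \delta)^+$ is a convex function of $a_c$, so $E$ is convex on the compact convex polytope $\{a \in [0,1]^{|W|} : \sum_c a_c = U\}$ and therefore attains its maximum at an extreme point. Because the only equality constraint is $\sum_c a_c = U$, every extreme point has at most one fractional coordinate, which is forced to equal $\delta$, while the remaining coordinates split into $m$ ones and zeros. At any such vertex $E$ evaluates to $m(1-\delta)$, giving the desired bound. The extremal configuration---$m$ candidates of value $1$ plus one residual of value $\delta$---is precisely the approval-style worst case underlying the tightness construction of \cite{PetersS20}, which provides a reassuring consistency check on the bound.
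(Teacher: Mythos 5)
Your proposal is correct and follows the same skeleton as the paper's proof: both exploit the piecewise-linear structure of $\Phi$, split the candidates at $a_c \le \delta$ versus $a_c > \delta$ (your $S_1,S_2$ are the paper's $W_s,W_\ell$), and arrive at the identical closed form $\frac{U}{m+1}+\frac{E}{m(m+1)}$ with $E=\sum_c(a_c-\delta)^+$, reducing the lemma to the excess bound $E\le m(1-\delta)$. Where you differ is in proving that last inequality: the paper argues by an exhaustive case split on whether $|W_\ell|\ge\lceil U\rceil$ or $|W_\ell|\le\lfloor U\rfloor$, bounding $E$ directly in each case, whereas you observe that $E$ is a convex function of $(a_c)$ on the polytope $\{a\in[0,1]^{|W|}:\sum_c a_c=U\}$, hence maximized at a vertex, and that every vertex has $m$ ones, one coordinate equal to $\delta$, and the rest zeros, giving $E=m(1-\delta)$ exactly. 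The extreme-point route is a bit more conceptual and makes the tightness of the bound transparent (the worst case is the approval-style instance, as you note), while the paper's case analysis is more elementary and self-contained; both are equally valid. One small point worth stating explicitly in your version: the vertex characterization relies on $0<\delta<1$ to force exactly $m$ ones, and the degenerate case $\delta=0$ (all coordinates in $\{0,1\}$) still gives $E=m=m(1-\delta)$, so the bound holds there too.
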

\begin{proof}
If $u_i(W)\le 1$, we get 
$$\sum_{c\in W}\frac{u_i(c)}{\lceil u_i(W)\rceil}=\sum_{c\in W}u_i(c) = u_i(W) \le 1.$$
Therefore, assume $u_i(W)>1$. Let $\delta= u_i(W)-\lfloor u_i(W)\rfloor$. We partition candidates in $W$ into two groups: $W_{s}=\{c:u_i(c)\le \delta\}$ and $W_{\ell}=\{c:u_i(c)>\delta\}$.  We have 
\begin{align}
\sum_{c\in W}\nabla_{i,c}(W)&=\sum_{c\in W_s}\nabla_{i,c}(W)+\sum_{c\in W_\ell}\nabla_{i,c}(W)\notag\\
&=\sum_{c\in W_s}\frac{u_i(c)}{\lceil u_i(W)\rceil}+\sum_{c\in W_\ell} \left(\frac{\delta}{\lceil u_i(W)\rceil}+\frac{u_i(c)-\delta}{\lfloor u_i(W)\rfloor}\right)\tag{Since $u_i(c)\le 1$}\\
&=\frac{u_i(W)}{\lceil u_i(W)\rceil}+\left(\frac 1{\lfloor u_i(W)\rfloor }-\frac 1{\lceil u_i(W)\rceil}\right)\cdot\sum_{c\in W_\ell} (u_i(c)-\delta).\label{eqn:sum_nabla_upper}
\end{align}
We now need to maximize $\sum_{c\in W_\ell} (u_i(c)-\delta)$. Consider these two cases:

\noindent {\sc Case 1.} If $|W_\ell|\ge \lceil u_i(W)\rceil$, we have 
$$\sum_{c\in W_\ell} (u_i(c)-\delta) = \sum_{c\in W} u_i(c)-\delta\cdot |W_\ell| \le 
u_i(W)-\delta\cdot \lceil u_i(W)\rceil.
$$

\noindent {\sc Case 2.} If $|W_\ell|\le \lfloor u_i(W) \rfloor$, we have 
$$
\sum_{c\in W_\ell} (u_i(c)-\delta)\le |W_\ell|\cdot (1-\delta) \le \lfloor u_i(W)\rfloor \cdot (1-\delta)
\le \lfloor u_i(W)\rfloor +\delta-\delta\cdot \lceil u_i(W)\rceil\\
\le u_i(W)-\delta\cdot \lceil u_i(W)\rceil.
$$
Combining with \cref{eqn:sum_nabla_upper}, we have 
\begin{align*}
\sum_{c\in W}\nabla_{i,c}(W) & \le \frac{u_i(W)}{\lceil u_i(W)\rceil}+\left(\frac 1{\lfloor u_i(W)\rfloor }-\frac 1{\lceil u_i(W)\rceil}\right)\cdot\left(u_i(W)-\delta\cdot \lceil u_i(W)\rceil\right) \\
 &= 1+\frac{\delta(\lfloor u_i(W)\rfloor+1-\lceil u_i(W)\rceil)}{\lfloor u_i(W)\rfloor}=1.
\end{align*}
The last equality holds since either $\lfloor u_i(W)\rfloor +1=\lceil u_i(W)\rceil$ or $\delta=0$ is true. 
\end{proof}

For any $c \in C$, let $\Delta^*_{i,c}(w)=\frac{u_i(c)}{u_i(W)+1}$ and let $\Delta^*_{S,c}(W)=\sum_{i\in S} \Delta^*_{i,c}(w)$. We show upper and lower bounds on $\sum_{c \in T} \Delta^*_{S,c}(W)$, where $T$ is the deviating committee. Crucially, we bound the sum over $T \cap W$ and $T \setminus W$ separately. We need the following technical lemma.

\begin{lemma}
\label{lem:2_abc}
Given a committee $W\subseteq C$, for all $i\in V$, we have the following properties:  (a) $\forall\, c\notin W,  \Delta_{i,c}^*(W)\le \Delta_{i,c}(W)$, and (b) $\forall\, c\in W,  \Delta_{i,c}^*(W)\le \nabla_{i,c}(W)$.
\end{lemma}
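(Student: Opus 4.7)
\textbf{Proof plan for \cref{lem:2_abc}.} By additivity of $u_i$, we have $u_i(W\cup\{c\}) = u_i(W)+u_i(c)$ for $c\notin W$ and $u_i(W\setminus\{c\})=u_i(W)-u_i(c)$ for $c\in W$. Writing $u = u_i(W)$ and $a = u_i(c)\in[0,1]$, part (a) is the inequality $\Phi(u+a)-\Phi(u)\ge a/(u+1)$, and part (b) is $\Phi(u)-\Phi(u-a)\ge a/(u+1)$. Since $\Phi$ is piecewise linear with slope $1/(n+1)$ on each interval $[n,n+1]$ and these slopes are nonincreasing in $n$, $\Phi$ is concave. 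Concavity immediately gives $\Phi(u)-\Phi(u-a)\ge \Phi(u+a)-\Phi(u)$ when $u\ge a$, so \emph{part (b) reduces to part (a)}. The rest of the plan concerns (a).

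For fixed $u\ge 0$, define $\psi(x) = \Phi(u+x)-\Phi(u) - x/(u+1)$ for $x\in[0,1]$. Since $\Phi$ is concave and $x/(u+1)$ is linear, $\psi$ is concave on $[0,1]$ with $\psi(0)=0$. For any $a\in[0,1]$ concavity gives
\[
\psi(a) = \psi\bigl(a\cdot 1 + (1-a)\cdot 0\bigr) \ge a\,\psi(1) + (1-a)\,\psi(0) = a\,\psi(1),
\]
so it suffices to establish the single inequality $\psi(1)\ge 0$, i.e. $\Phi(u+1)-\Phi(u)\ge 1/(u+1)$.

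To verify $\psi(1)\ge 0$, I would write $u = n+f$ with $n=\lfloor u\rfloor\in\mathbb{Z}_{\ge 0}$ and $f\in[0,1)$. If $f=0$, the two sides are both exactly $1/(n+1)$ and we get equality. If $f>0$, the definition of $\Phi$ gives
\[
\Phi(u+1)-\Phi(u) = \tfrac{1}{n+1} - \tfrac{f}{(n+1)(n+2)},
\]
and clearing denominators in the desired inequality reduces it, after a short expansion, to $(n+f+1)(n+2-f)\ge (n+1)(n+2)$. Expanding the left side shows the difference equals $f(1-f)\ge 0$, which is the needed estimate.

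The main obstacle is the final algebraic check that $\psi(1)\ge 0$; everything else is a clean concavity reduction. I expect no difficulty combining the two halves: part (a) via the $\psi(1)\ge 0$ computation, and part (b) as an immediate corollary of concavity applied to part (a).
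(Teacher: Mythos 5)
Your proof is correct, and it takes a genuinely different route from the paper's. The paper proves (a) by a direct case split on whether $u_i(W)+u_i(c)$ crosses $\lceil u_i(W)\rceil$, introducing auxiliary variables $\delta,\lambda$ and doing a somewhat involved algebraic estimate in the second case; it then proves (b) separately via the short chain $\nabla_{i,c}(W)\ge u_i(c)/\lceil u_i(W)\rceil\ge u_i(c)/(u_i(W)+1)$. You instead make the concavity of $\Phi$ do all the work: once to derive (b) from (a) via decreasing marginals (which is valid here since $u=u_i(W)\ge u_i(c)=a$ by additivity and nonnegativity, so the interval $[u-a,u+a]$ sits in the domain), and a second time to reduce (a) from all $a\in[0,1]$ to the single endpoint $a=1$. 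Your endpoint computation $\psi(1)\ge 0$, expanding $(n+f+1)(n+2-f)-(n+1)(n+2)=f(1-f)\ge 0$, is correct and matches the piecewise form of $\Phi$. The net effect is a cleaner, more modular argument: the one nontrivial algebraic check is isolated to a single point, and everything else follows from the general fact that $\Phi$ has nonincreasing slopes $1/(n+1)$. The paper's more hands-on computation has the minor advantage of not needing to invoke concavity explicitly, but yours is shorter and conceptually crisper.
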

\begin{proof}
    \begin{itemize}
\item[(a)]
If $u_i(W)+u_i(c)\le \lceil u_i(W)\rceil$, we have $\Delta_{i,c}(W)=\frac{u_i(C)}{\lceil u_i(W)\rceil}\ge \frac{u_i(c)}{u_i(W)+1}=\Delta^*_{i,c}(W).$

If $u_i(W)+u_i(c)> \lceil u_i(W)\rceil$, denote $\delta=\lceil u_i(W)\rceil-u_i(W)$ and $\lambda=u_i(c)-\delta$. We have 
\begin{align*}
\Delta_{i,c}(W)-\Delta^*_{i,c}(W)&=\frac{\delta}{u_i(W)+\delta}+\frac{\lambda}{u_i(W)+\delta+1}-\frac{\delta+\lambda}{u_i(W)+1}\\
&\ge \frac{\delta(1-\delta)}{(u_i(W)+\delta)(u_i(W)+1)}-\frac{\lambda\cdot 
\delta}{(u_i(W)+1)(u_i(W)+\delta+1)}\\
&\ge\frac{\delta\cdot\lambda}{(u_i(W)+\delta)(u_i(W)+1)}-\frac{\lambda\cdot 
\delta}{(u_i(W)+1)(u_i(W)+\delta+1)}>0.
\end{align*}
Therefore, we have $\Delta_{i,c}(W)\ge \Delta^*_{i,c}(W)$.
\item[(b)] We have $\nabla_{i,c}(W)\ge \frac{u_i(c)}{\lceil u_i(W)\rceil}\ge \frac{u_i(c)}{u_i(W)+1}=\Delta^*_{i,c}(W),$ completing the proof.
\end{itemize}
\end{proof}

We now complete the proof of the upper bound of $2-\alpha$. By the local optimality of $W$, we have $ \Delta_{c}(W) \le \nabla_{c'}(W)$ for all $c \notin W, c' \in W$. Since  $S$ with $|S| \ge \alpha n$ deviates, 
there exists $T$ s.t. $|T| \le \alpha \cdot k$ and $u_i(T)\ge (2-\alpha) \cdot (u_i(W)+1)$ for all $i \in S$. 
Assume that $|T\cap W|=\beta \cdot k$, so that $|T \setminus W| \le (\alpha - \beta)\cdot k$. First, we have the following lower bound:
\begin{equation}
\sum_{c\in T}\Delta_{S,c}^{*}(W)= \sum_{i\in S}\sum_{c\in T}\frac{u_i(c)}{u_i(W)+1}\ge \sum_{i\in S}\frac{(2-\alpha)(u_i(W)+1)}{u_i(W)+1}=|S|\cdot (2-\alpha). \label{eqn:d*lower}
\end{equation}
Now we show an upper bound for $\sum_{c\in T}\Delta_{S,c}^{*}(W)$. Let $M_1^* = \sum_{c\in T\cap W}\Delta_{S,c}^{*}(W)$ and $M_2^* = \sum_{c\in T\setminus W}\Delta_{S,c}^{*}(W)$. We first upper bound $M_2^*$ by the following lemma:
\begin{lemma} 
\label{lem:M2}
    $M_2^*\le \frac{\alpha-\beta}{1-\beta}\cdot (n-M_1^*)$.
\end{lemma}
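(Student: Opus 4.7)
The plan is to combine the local optimality of $W$ with Lemmas~\ref{lem:nabla} and~\ref{lem:2_abc} via an averaging argument over pairs $(c,c')$ with $c \in T \setminus W$ and $c' \in W \setminus T$. Note that $|T \setminus W| = (\alpha-\beta)k$ and $|W \setminus T| = (1-\beta)k$.

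First I would establish that for every such pair, $\Delta^*_{S,c}(W) \le \nabla_{c'}(W)$. The subtlety is that local optimality controls $\Delta_c(W \setminus \{c'\})$, not $\Delta^*_{S,c}(W)$ directly. To bridge this, I use monotonicity of the $u_i$: since $u_i(W \setminus \{c'\}) \le u_i(W)$,
$$\Delta^*_{i,c}(W) \;=\; \frac{u_i(c)}{u_i(W)+1} \;\le\; \frac{u_i(c)}{u_i(W \setminus \{c'\})+1} \;=\; \Delta^*_{i,c}(W \setminus \{c'\}).$$
Applying Lemma~\ref{lem:2_abc}(a) at the committee $W \setminus \{c'\}$ (which does not contain $c$) gives $\Delta^*_{i,c}(W \setminus \{c'\}) \le \Delta_{i,c}(W \setminus \{c'\})$. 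Summing over $i \in S$ and then extending to all of $V$ (valid because $\Delta_{i,c}(\cdot) \ge 0$ by monotonicity of $u_i$ and $\Phi$) yields $\Delta^*_{S,c}(W) \le \Delta_c(W \setminus \{c'\})$. By local optimality of $W$ for the swap $c' \leftrightarrow c$, $\Delta_c(W \setminus \{c'\}) \le \nabla_{c'}(W)$, closing the chain.

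Second, I would average. Since the inequality $\Delta^*_{S,c}(W) \le \nabla_{c'}(W)$ holds for every $c' \in W \setminus T$, averaging over $W \setminus T$ gives $\Delta^*_{S,c}(W) \le \frac{1}{(1-\beta)k}\sum_{c'\in W\setminus T}\nabla_{c'}(W)$. Summing over the $(\alpha-\beta)k$ candidates $c \in T \setminus W$ produces
$$M_2^* \;\le\; \frac{\alpha-\beta}{1-\beta}\sum_{c'\in W\setminus T}\nabla_{c'}(W).$$

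Third, I would control the right-hand sum. Lemma~\ref{lem:nabla}, summed over all $i \in V$, yields $\sum_{c'\in W}\nabla_{c'}(W) \le n$. Splitting $W = (W \setminus T) \sqcup (W \cap T)$ and using Lemma~\ref{lem:2_abc}(b) (which gives $\Delta^*_{i,c}(W) \le \nabla_{i,c}(W)$ for $c \in W$, so $M_1^* \le \sum_{c\in W\cap T}\nabla_c(W)$) gives $\sum_{c'\in W\setminus T}\nabla_{c'}(W) \le n - M_1^*$, completing the proof.

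The main obstacle is the first step: the local optimality statement is $\Delta_c(W\setminus\{c'\}) \le \nabla_{c'}(W)$, and we must relate this to $\Delta^*_{S,c}(W)$ evaluated at $W$ rather than $W\setminus\{c'\}$. The monotonicity-based inequality $\Delta^*_{i,c}(W) \le \Delta^*_{i,c}(W\setminus\{c'\})$ combined with Lemma~\ref{lem:2_abc}(a) is precisely what lets us absorb the base-set shift; after that, the rest is a clean two-stage averaging over pairs together with the global budget from Lemma~\ref{lem:nabla}.
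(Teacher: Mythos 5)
Your proof is correct and follows essentially the same overall structure as the paper: a bound on $\Delta^*_{S,c}$ via Lemma~\ref{lem:2_abc}(a), a swap argument from local optimality, averaging over $W \setminus T$, and finally Lemma~\ref{lem:nabla} together with Lemma~\ref{lem:2_abc}(b) to control $\sum_{c'\in W\setminus T}\nabla_{c'}(W)$ by $n - M_1^*$.

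The one genuine (if small) difference is where the committee base-shift happens. The paper invokes $\Delta_c(W)\le \nabla_{c'}(W)$ directly; strictly speaking, local optimality of a swap gives $\Delta_c(W\setminus\{c'\})\le\nabla_{c'}(W)$, and getting from there to $\Delta_c(W)\le\nabla_{c'}(W)$ requires $\Delta_c(W)\le\Delta_c(W\setminus\{c'\})$, i.e.\ concavity of $\Phi$ (which holds, but is left implicit). You sidestep that by noting $\Delta^*_{i,c}(W) = \frac{u_i(c)}{u_i(W)+1}$ is decreasing in the utility of the base committee, so $\Delta^*_{i,c}(W)\le\Delta^*_{i,c}(W\setminus\{c'\})$ by monotonicity of $u_i$ alone, then applying Lemma~\ref{lem:2_abc}(a) at $W\setminus\{c'\}$ and using local optimality in exactly the form it comes in. This is a cleaner use of local optimality and removes an unstated dependence on concavity, while landing in the same place. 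One cosmetic point: you should say ``at most $(\alpha-\beta)k$'' for $|T\setminus W|$ rather than exactly that many, but since $\nabla_{c'}(W)\ge 0$ this does not affect the final inequality.
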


\begin{figure}
    \centering    \includegraphics{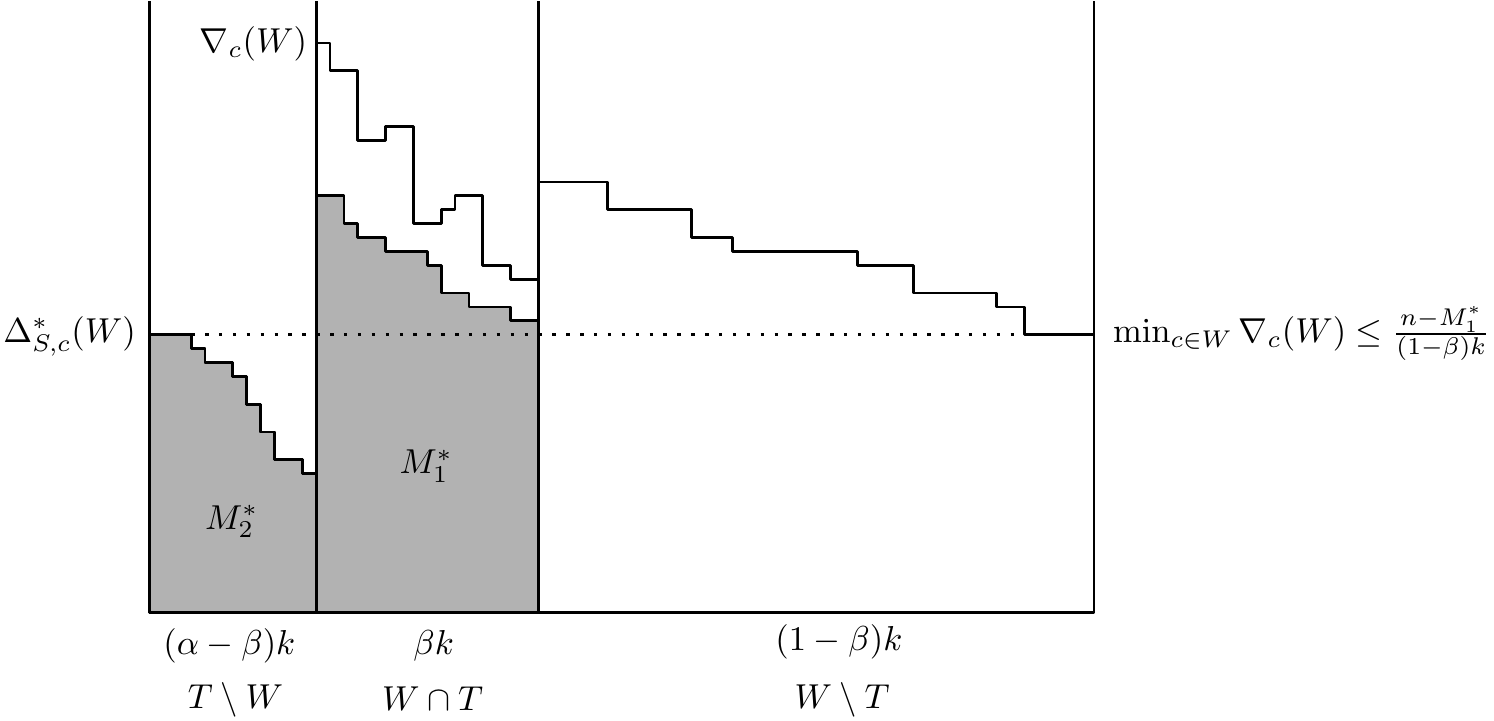}
    \caption{Illustration of $\Delta^*_{S,c}(W)$ and $\nabla_c(W)$ for all the candidates in $W\cup T$. We divide these candidates into three groups: $T\setminus W$, $W\cap T$ and $W\setminus T$.  The dotted line indicates that the lowest $\nabla_{c}(W)$ in $W \setminus T$ serves as an upper bound for  the highest $\Delta_{S,c}^*(W)$ in $T\setminus W$.} 
    \label{fig:dn}
\end{figure}
An proof sketch of \cref{lem:M2} can be obtained from  \cref{fig:dn}. First we have the white area (which is the sum of all $\nabla_c(W)$'s of agents in $W$) is $n$. The white area in the middle part is larger than $M_1^*$ (by \cref{lem:2_abc}), so the the white area on the right is upper bounded by $n-M_1^*$. The highest $\Delta_{S,c}^*(W)$ in $T\setminus W$ is upper bounded by the lowest $\nabla_c(W)$ in $W\setminus T$, which is at most $\frac{n-M_1^*}{(1-\beta)k}$. Therefore, we have area $M_2^*$ is upper-bounded by $\frac{n-M_1^*}{(1-\beta)k}\cdot (\alpha-\beta)k=\frac{\alpha-\beta}{1-\beta}\cdot (n-M_1^*)$. 
\begin{proof}[Proof of \cref{lem:M2}]
By part (a) in \cref{lem:2_abc}, we have \[M_2^* = \sum_{c\in T\setminus W}\Delta_{S,c}^{*}(W)
\le \sum_{c\in T\setminus W}\Delta_{c}(W).\]
Since $W$ locally optimizes $\gpv$, for any $c\in T\setminus W$ and $c' \in W\setminus T$, we have $\Delta_c(W)\le \nabla_{c'}(W)$. This gives us the following: 
\begin{equation*}
\sum_{c\in T\setminus W}\Delta_{c}(W)
\le |T\setminus W|\min_{c\in W\setminus T}\nabla_c(W)
\le (\alpha-\beta)\cdot k\cdot \frac{\sum_{c\in W \setminus T} \nabla_c(W)}{|W \setminus T|}.
\end{equation*}
Note that $\sum_{c\in W \setminus T} \nabla_c(W) = \sum_{c\in W \setminus T} \nabla_c(W) - \sum_{c\in W \cap T} \nabla_c(W)$. By part (b) in \cref{lem:2_abc}, we have $\sum_{c\in W \cap T} \nabla_c(W) \ge \sum_{c\in W \cap T} \Delta_{S,c}^*(W) = M_1^*$. We obtain the further upper bound as follows:
\begin{align}
(\alpha-\beta)\cdot k\cdot \frac{\sum_{c\in W \setminus T} \nabla_c(W)}{|W \setminus T|} &= (\alpha-\beta)\cdot k\cdot \frac{\sum_{i\in V}\left(\sum_{c\in W} \nabla_{i,c}(W)\right)-M_1^*}{(1-\beta)\cdot k} \notag\\
&\le \frac{\alpha-\beta}{1-\beta}\cdot (n-M_1^*), \notag
\end{align}
where the final inequality is by \cref{lem:nabla}.
\end{proof}
Adding back $M_1^*$ to the inequality in \cref{lem:M2}, we have the final upper bound on $\sum_{c\in T}\Delta_{S,c}^{*}(W)$:
\begin{align*}
\sum_{c\in T}\Delta_{S,c}^{*}(W) &\le M_1^*+\frac{\alpha-\beta}{1-\beta}\cdot (n-M_1^*)
= \frac{M_1^*\cdot (1-\alpha)+(\alpha-\beta)\cdot n}{1-\beta}\notag\\
&< \frac{(1-\alpha)\cdot \alpha \cdot n+(\alpha-\beta)\cdot n}{1-\beta} \tag{see below}\\
&=\frac{(2-\alpha)\cdot\alpha-\beta}{1-\beta}\cdot n \le (2-\alpha)\cdot \alpha\cdot n=(2-\alpha)\cdot |S|. \tag{Since $(2-\alpha)\cdot \alpha\in [0,1]$ and $\forall\, x\in[0,1], \frac{x-\beta}{1-\beta}$ is maximized when $\beta=0$}
\end{align*}
where the first inequality holds because
\begin{equation*} 
M_1^* = \sum_{c\in T\cap W}\Delta_{S,c}^{*}(W) \le \sum_{i\in S}\left(\sum_{c\in W}\Delta^*_{i,c}(W)\right) = \sum_{i\in S} \frac{\sum_{c \in W} u_i(c)}{u_i(W) + 1} = \sum_{i \in S} \frac{u_i(W)}{u_i(W)+1} < |S|.
\end{equation*}
The bound on $\sum_{c\in T}\Delta_{S,c}^{*}(W)$ contradicts \cref{eqn:d*lower}. Thus $W$ lies in the $(2-\alpha)$-approximate core, completing the proof of the upper bound. Note that if we stop {\sc Local} when $\gpv$ increases by at most $\frac{\epsilon}{nk}$, it runs in polynomial time for constant $\epsilon > 0$ and lies in the $(2-\alpha + \epsilon)$-approximate core.

\subsection{Lower Bound} 
We now show that given an $\alpha \in (0,1]$, there is an instance and a committee $W$ that is the output of {\sc Local}, such that $W$ cannot lie in the $(2-\alpha-\epsilon)$-core for any $\epsilon>0$ if deviations are restricted to sets of size at least $\alpha n$. Our construction is similar to that in~\cite{PetersS20} who show a lower bound of $2 - \epsilon$ if the deviating sets can become very small, that is, as $\alpha \rightarrow 0$. The setting is approval utilities. There are $n> \frac{2(1-\alpha)}{\eps\cdot \alpha}$ voters and let $y> \frac{2(2-\alpha)}{\eps}$. The instance is:
\begin{itemize}
\item Voter group $V_1$ has $\alpha \cdot n$ voters, they commonly approve $y$ candidates (set $C_1$); each of them also approves a disjoint set of $\alpha \cdot y$ candidates, denote these $\alpha^2 \cdot n \cdot y$ candidates as set $C_2$. 
\item Voter group $V_2$ has $(1-\alpha)\cdot n$ voters, each of them approves a disjoint set of $y$ candidates. Denote these $(1-\alpha)\cdot n\cdot y$ candidates as set $C_3$. 
\end{itemize}
Set $k=(1-\alpha)\cdot ny+y$. One local optimum of $\gpv$ (equivalently PAV) will output all the candidates in $C_3$ and $C_1$. Switching any candidate in $C_3$ for any candidate in $C_2$ strictly decreases the PAV score. Consider the subset $V_1$. This subset has budget $\alpha\cdot k =  \alpha (1-\alpha)\cdot ny+\alpha y$. They can deviate to choose all candidates in $C_1$ and at least $\dfrac{\alpha (1-\alpha)\cdot ny-(1-\alpha)y}{\alpha \cdot n}$ approved candidates in $C_2$ for each voter in $V_1$. Since $n>\frac{2(1-\alpha)}{\eps\cdot \alpha}$ and $y>\frac{2(2-\alpha)}{\eps}$, each voter in $V_1$ will obtain utility 
\begin{align*}
y+\frac{\alpha (1-\alpha) ny-(1-\alpha)y}{\alpha  n}&= (2-\alpha)  y-\frac{1-\alpha}{\alpha} \cdot \frac yn  
 \ge (2-\alpha-\eps) y + \frac \eps 2y 
 > (2-\alpha-\eps) (y +1). 
\end{align*}

Since voter $i \in V_1$ has initial utility $y$, {\sc Local} fails $(2-\alpha-\epsilon)$-core for any $\epsilon>0$. Following  \cite{PetersS20}, the above bound extends as is to rules that satisfy the Pigou-Dalton principle, which captures optimizing any monotone, symmetric, concave function of utilities.

\section{Participatory Budgeting with Self-Bounding Functions}
\label{sec:pb}
Recall that the Participatory Budgeting problem is a generalization of multiwinner elections where candidates can have arbitrary sizes. Let $s_j$ denote the size of candidate $j \in C$. There is a size constraint $b$, so that for any feasible committee $O$, we require $\cost(O) = \sum_{j \in O} s_j \le b$. As before, there is a utility function $u_i$ for each voter $i$ that is monotone, $1$-Lipschitz, and $\beta$-self bounding. 

We assume there are {\em no constraints on the allocation} except for the size constraint. We can generalize \cref{def:alphaCore} as follows:

\begin{definition}[$\gamma$-approximate core for Participatory Budgeting]
\label{def:alphaCore2}
A committee $W$ is in the \emph{$\gamma$-approximate core} if there is no $S \subseteq V$ and $T \subseteq C$ with $\sum_{j \in T} s_j \leq \frac{|S|}{n} \cdot b$, such that $u_i(T) \ge \gamma \cdot (u_i(W) + 1)$ for every $i \in S$. 
\end{definition}

We will present an upper bound first in \cref{sec:pb_upper} by proving \cref{thm:main2}. We will subsequently show the lower bound in \cref{app:lb00} by proving \cref{thm:lb00}.

\subsection{Upper Bound: Proof of Theorem~\ref{thm:main2}}
\label{sec:pb_upper}
We will show a reduction to a different notion of approximate core first considered in~\cite{DBLP:conf/stoc/JiangMW20}. This is called the ``endowment approximation''.

\begin{definition}
\label{def:endow}
A committee $W$ is in the \emph{$\theta$-approximate endowment core} for $\theta \ge 1$ if there is no $S \subseteq V$ and $T \subseteq C$ with $\cost(T) \leq \frac{1}{\theta} \cdot \frac{|S|}{n} \cdot b$, such that $u_i(T) \ge u_i(W)$ for every $i \in S$. 
\end{definition}

The following theorem is shown in~\cite{DBLP:conf/stoc/JiangMW20}:

\begin{theorem}[\cite{DBLP:conf/stoc/JiangMW20}] 
\label{thm:endow}
The $32$-approximate endowment core is non-empty for Participatory Budgeting with any monotone utility functions. 
\end{theorem}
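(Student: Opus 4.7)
The plan is to establish existence via a Lindahl-style continuous relaxation, followed by support reduction and a rounding step tailored to arbitrary monotone set functions.

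First, I would enlarge the solution space to probability distributions $\mu$ over feasible committees $\{T \subseteq C : \cost(T) \le b\}$, and define $\bar{u}_i(\mu) = \E_{T \sim \mu}[u_i(T)]$. Expectation is linear in $\mu$, so each $\bar{u}_i$ is a concave functional on the (convex) space of distributions, regardless of the combinatorial nature of $u_i$. I would then apply Kakutani's fixed-point theorem to a best-response correspondence in which each voter posts Lindahl-style per-candidate prices and the social planner selects the cost-minimizing distribution at those prices. A fixed point yields a distribution $\mu^*$ in the exact core of the continuous problem: no coalition $S$ can, using fractional endowment $\frac{|S|}{n}\cdot b$, strictly improve every member's expected utility.

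Second, I would invoke Caratheodory on the vector of achievable expected utilities to replace $\mu^*$ by a distribution $\mu'$ supported on at most $n+1$ integral feasible committees while preserving the fractional core property. Third, to obtain a single integral committee, I would augment the fixed-point construction with an inner loop that enforces \emph{utility regularity}: every $T \in \mathrm{supp}(\mu')$ must satisfy $u_i(T) \ge \frac{1}{c}\cdot \bar{u}_i(\mu')$ for every voter $i$ and some absolute constant $c$. Once such $\mu'$ is obtained, choosing any $W \in \mathrm{supp}(\mu')$ produces an integral committee whose utility profile is within a factor $c$ of $\bar{u}_i(\mu')$ for every voter. Combining this with the fractional core property, and with the standard budget-scaling identity that converts endowment $\frac{|S|}{n}\cdot b$ into $\frac{1}{\theta}\cdot \frac{|S|}{n}\cdot b$, yields a committee in the $\theta$-approximate endowment core for an explicit constant $\theta$. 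Careful bookkeeping of the slack in each conversion (Lindahl prices to support-reduced distribution, utility regularity across support points, and integrality rounding) bounds $\theta$ by $32$.

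The main obstacle will be step three. For utility functions lacking subadditivity, submodularity, or even XOS structure, naively sampling $W \sim \mu'$ and invoking a Chernoff-type concentration is unavailable, since the variance of $u_i(W)$ across $\mathrm{supp}(\mu')$ is uncontrolled. The utility-regularity enforcement is the device that sidesteps this; however, weaving it into the fixed-point argument without destroying upper-hemicontinuity of the best-response correspondence is delicate, and it is precisely here that the nontrivial constant $32$ is incurred. A secondary technical issue is that cost constraints $\cost(T) \le b$ with arbitrary integer $s_j$ prevent the feasible region from being down-closed in a useful way, so the Lindahl-price correspondence must be defined relative to a knapsack-feasible domain rather than a matroid-feasible one; this subtlety affects only the form of the cost-minimization subproblem at step one and does not change the overall structure of the argument.
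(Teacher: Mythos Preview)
This theorem is not proved in the present paper; it is quoted verbatim as a result of Jiang, Munagala, and Wang \cite{DBLP:conf/stoc/JiangMW20} and then used as a black box in the proof of \cref{thm:endow2}. There is therefore no in-paper proof to compare your proposal against—the paper simply cites the reference.

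Regarding the proposal on its own merits, steps one and two (a Lindahl-type fixed point over lotteries, followed by Carath\'eodory support reduction) are in the right spirit. The genuine gap is step three. You demand that every committee $T$ in $\mathrm{supp}(\mu')$ satisfy $u_i(T) \ge \tfrac{1}{c}\,\bar{u}_i(\mu')$ simultaneously for \emph{every} voter $i$, with an absolute constant $c$. For arbitrary monotone utilities this is impossible: take two voters with disjoint approval sets and two committees $T_1,T_2$ that give positive utility only to voter $1$ and voter $2$ respectively. Any lottery with positive expected utility for both voters must put mass on both $T_1$ and $T_2$, yet $u_1(T_2)=u_2(T_1)=0$, so no finite $c$ works. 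Folding this ``utility regularity'' requirement into the fixed-point correspondence does not rescue the argument—the correspondence would have empty values on such instances and Kakutani would not apply.

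The route in \cite{DBLP:conf/stoc/JiangMW20} does not attempt to lower-bound per-voter utility uniformly across support points. The leverage comes precisely from the \emph{endowment} relaxation: scaling the deviating coalition's budget down by a constant factor (rather than scaling voters' utilities up) is what allows a selection argument from the support of the core lottery without any uniform utility guarantee. Your proposal conflates the endowment-side relaxation with a utility-side one, and that is where it fails.
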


We will now show the following theorem, which will imply \cref{thm:main2} via \cref{thm:endow}.

\begin{theorem}
\label{thm:endow2}
For Participatory Budgeting with $\beta$-self bounding functions, where $\beta \ge 1$ is an integer, any $32$-approximate endowment core is a $c$-approximate core under \cref{def:alphaCore2} for $c \le 11.7 \cdot \beta \cdot 55^{\beta}$. 
\end{theorem}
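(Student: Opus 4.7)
The plan is to prove \cref{thm:endow2} by contraposition through a randomized reduction. Assume $W$ lies in the $32$-approximate endowment core but, toward contradiction, fails the $c$-approximate core (for $c = 11.7\,\beta\cdot 55^\beta$) under \cref{def:alphaCore2}: there exist a coalition $S\subseteq V$ and a committee $T\subseteq C$ with $\cost(T)\le \frac{|S|}{n}b$ and $u_i(T)\ge c(u_i(W)+1)$ for every $i\in S$. The goal is to exhibit a subcommittee $T^\star \subseteq T$ with $\cost(T^\star)\le \frac{1}{32}\cdot\frac{|S|}{n}b$ and $u_i(T^\star)\ge u_i(W)$ for all $i\in S$, which directly contradicts the endowment-core assumption via \cref{def:endow}.

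The construction is to include each $j\in T$ independently with probability $p\approx 1/55$ (the exact value to be tuned at the end), yielding a random subset $T_p$. Three estimates must then be combined. First, $\E[\cost(T_p)] = p\cdot\cost(T)\le \frac{p|S|}{n}b$, and a standard Chernoff bound on the independent size contributions gives $\cost(T_p)\le \frac{|S|}{32n}b$ with probability close to $1$ as long as $p$ is chosen with slack below $1/32$. Second, for each $i\in S$, applying the sampling lemma \cref{lem:XOS_2} to the $\beta$-self-bounding function $u_i$ yields a lower bound of order $p^{\beta}\,u_i(T)$ on $\E[u_i(T_p)]$; combined with $u_i(T)\ge c(u_i(W)+1)$ this gives $\E[u_i(T_p)] \gtrsim (c/55^{\beta})(u_i(W)+1)$. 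Third, since $c/55^\beta$ is of order $\beta$, this expectation exceeds a constant multiple of $\beta$ plus $u_i(W)$, which is exactly the regime in which the Chernoff-type lower tail for self-bounding functions (\cref{lem:tail} of \cite{Lugosi}) concentrates $u_i(T_p)$ above $u_i(W)$ with probability at least $1-\delta/|S|$. A union bound over the $|S|$ utility events and the single cost event then produces a deterministic realization $T^\star$ satisfying both requirements.

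The main obstacle lies in voters with $u_i(W)$ small, in particular $u_i(W)=0$: here one must still ensure $u_i(T^\star)\ge 1$, and a bare expectation bound does not suffice because the lower tail bound for a $\beta$-self-bounding function needs the mean to exceed $\Omega(\beta)$ before meaningful concentration kicks in. This is exactly what forces the linear-in-$\beta$ factor in the final constant: we need $c\cdot p^{\beta}\ge \Omega(\beta)$, so $c$ must grow at least like $\beta\cdot (1/p)^\beta$. Conversely, raising $p$ toward $1/32$ shrinks the $(1/p)^\beta$ factor but eats into the budget-concentration slack; the specific constants $11.7$ and $55$ fall out of balancing the cost Chernoff bound against the utility lower-tail bound and choosing $p$ optimally. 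The principal technical content beyond routine concentration is therefore the sampling lemma \cref{lem:XOS_2}, whose proof (handled separately) is what propagates the $\beta$-self-bounding property into a usable lower bound on $\E[u_i(T_p)]$.
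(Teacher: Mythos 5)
Your reduction has the right architecture---sample $T$ with a small probability, use \cref{lem:XOS_2} to lower bound the expected utility of the sample, use \cref{lem:tail} to get concentration, and use the resulting deviation to contradict the $32$-approximate endowment core---and the constants you anticipate ($p \approx 1/55$, a linear-in-$\beta$ factor) are in the right ballpark. But there are two genuine gaps.

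\textbf{The union bound over $S$ does not work.} You claim each voter $i\in S$ has $u_i(T_p) > u_i(W)$ with probability $\ge 1 - \delta/|S|$ and conclude by a union bound. But the per-voter failure probability from \cref{lem:tail} is $e^{-\delta^2 \mu_0/(2\beta)}$, and the best uniform lower bound on $\mu_0 = \E[u_i(T_p)]$ is $\Theta\bigl(\beta\,(u_i(W)+1)\bigr)$, which for a voter with $u_i(W)=0$ is only $\Theta(\beta)$. This makes the exponent $\Theta(1)$ regardless of $|S|$, so the failure probability is a \emph{constant}, not $O(1/|S|)$. A union bound over $|S|$ such events exceeds $1$ as soon as $|S|$ is large. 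There is no choice of $p$ or of the multiplier in $c$ (independent of $|S|$) that repairs this. The paper circumvents it by \emph{not} requiring every voter in $S$ to succeed: it sets $S' = \{i\in S : u_i(O) > u_i(W)\}$, uses the constant per-voter success probability together with Markov's inequality to show $|S'| \ge q|S|$ with positive probability for a suitable constant $q<1$, and then deviates with the sub-coalition $S'$ (whose endowment $\frac{|S'|}{32n}b$ is still, after tuning $q$ and the sampling rate $1/\gamma$ against each other, larger than $\cost(O)$). You must rewrite your argument in this form; otherwise the step ``a union bound\ldots produces a deterministic realization'' is simply false.

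\textbf{Chernoff on $\cost(T_p)$ does not hold as stated.} $T$ is only constrained by $\cost(T)\le \frac{|S|}{n}b$, so it may contain a single candidate of size close to $\frac{|S|}{n}b$. That single candidate, if sampled, already violates your target bound $\cost(T_p)\le \frac{|S|}{32n}b$, and standard Chernoff gives no concentration because the random contributions are not small relative to the target. The paper handles this by first deleting from $T$ every candidate of size exceeding $\frac{\phi}{\gamma}b$ (where $\phi = |S|/n$ and $1/\gamma$ is the sampling rate); by the $1$-Lipschitz property this discards at most $\gamma$ units of utility, which is absorbed into the constant $\eta$, and the remaining candidates are small enough for the multiplicative Chernoff bound $\Pr[\cost(O) > \kappa\cdot \E\,\cost(O)] \le e^{\kappa-1}/\kappa^\kappa$ to apply. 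You need this pre-filtering step, and it is not a cosmetic detail: it also shifts the threshold in the utility tail argument and interacts with the choice of $\eta$ in the final constant.

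Once both fixes are in place, your ``balancing'' intuition at the end is exactly what happens: the final constants $11.7$ and $55$ come from jointly optimizing the Markov/Chernoff slack $q$, the Chernoff parameter $\kappa$ for the cost, the sampling rate $1/\gamma$, and the utility slack $\eta$, subject to the constraint $q > 32\kappa/\gamma$ that makes the sub-coalition's endowment cover $\cost(O)$.
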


\paragraph{Random Sampling Bounds} We first show the following result for the expected utility of randomly sampled subsets for $\beta$-self bounding functions, which may be of independent interest.

\begin{lemma}
\label{lem:XOS_2}
We are given a $\beta$-self bounding  function $u_i$ where $\beta \ge 1$ is an integer. Given $T \subseteq C$, suppose we include each $j \in T$ 
 in $O$ independently with probability $\alpha \le 1$.  Then  $\E[u_i(O)] \ge \alpha^{\beta} u_i(T).$
\end{lemma}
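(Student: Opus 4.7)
}

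The plan is to split the random sampling into (i) choosing the cardinality $|O|$ and (ii) choosing a uniformly random subset of that cardinality, and then use the $\beta$-self-bounding inequality as a recursion on the conditional expected utility. Since each $j\in T$ lies in $O$ independently with probability $\alpha$, we have $|O|\sim\mathrm{Bin}(|T|,\alpha)$, and conditioned on $|O|=k$, the set $O$ is a uniformly random $k$-subset of $T$. Define
\[
h(k)\;=\;\mathbb{E}_{O\subseteq T,\ |O|=k}\big[u_i(O)\big].
\]
So $\mathbb{E}[u_i(O)]=\mathbb{E}_{k\sim\mathrm{Bin}(|T|,\alpha)}[h(k)]$, and it suffices to lower bound $h(k)$ well.

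The key recursion comes from the $\beta$-self-bounding property applied to any set $T'\subseteq T$: rewriting the defining inequality gives $\sum_{j\in T'}u_i(T'\setminus\{j\})\ge(|T'|-\beta)\,u_i(T')$, hence $\mathbb{E}_{j\in T'}[u_i(T'\setminus\{j\})]\ge(1-\beta/|T'|)\,u_i(T')$. Now, a uniformly random $k$-subset of $T$ can be sampled by first picking a uniformly random $(k{+}1)$-subset $T'$ and then removing a uniformly random element of $T'$; combining this coupling with the inequality above yields
\[
h(k)\;\ge\;\left(1-\frac{\beta}{k+1}\right)h(k+1).
\]
Iterating from $|T|$ down to $k$ (and using $h(k)\ge 0$ freely when the factor is non-positive) gives, for every $k\ge\beta$,
\[
h(k)\;\ge\;\prod_{m=k+1}^{|T|}\frac{m-\beta}{m}\cdot u_i(T)\;=\;\frac{\binom{k}{\beta}}{\binom{|T|}{\beta}}\,u_i(T),
\]
and this bound remains trivially true for $k<\beta$ since $\binom{k}{\beta}=0$ there.

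Finally I will take the expectation over $k\sim\mathrm{Bin}(|T|,\alpha)$. The crucial observation is that $\mathbb{E}[\binom{k}{\beta}]$ counts the expected number of $\beta$-subsets of $T$ that survive the independent sampling, so $\mathbb{E}[\binom{k}{\beta}]=\binom{|T|}{\beta}\alpha^{\beta}$ by linearity. Thus
\[
\mathbb{E}[u_i(O)]\;\ge\;\frac{u_i(T)}{\binom{|T|}{\beta}}\,\mathbb{E}\!\left[\binom{k}{\beta}\right]\;=\;\alpha^{\beta}\,u_i(T),
\]
which is the claim. The only step with any subtlety is handling the range $k<\beta$ where the one-step recursion is vacuous; this is clean here because integrality of $\beta$ makes $\binom{k}{\beta}=0$ exactly match the loss of information, so no extra case analysis is needed. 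I would expect the combinatorial identity $\mathbb{E}[\binom{k}{\beta}]=\binom{|T|}{\beta}\alpha^\beta$ and the telescoping product $\prod_{m=k+1}^{|T|}(m-\beta)/m=\binom{k}{\beta}/\binom{|T|}{\beta}$ to be the two lines where a reader double-checks the computation.
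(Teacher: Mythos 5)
Your proof is correct and is essentially the same argument as the paper's: both exploit the $\beta$-self-bounding inequality to derive a recursion between adjacent levels of the subset lattice of $T$, then sum against binomial weights. The paper works with the level sums $Z_\ell=\sum_{|W|=\ell}u_i(W)$ and verifies $\sum_{\ell\ge\beta}\alpha^\ell(1-\alpha)^{|T|-\ell}\binom{|T|-\beta}{|T|-\ell}=\alpha^\beta$ directly; you phrase the same recursion as a coupling on conditional expectations $h(k)=Z_k/\binom{|T|}{k}$ and close via the identity $\E\bigl[\binom{|O|}{\beta}\bigr]=\binom{|T|}{\beta}\alpha^\beta$, which is a clean repackaging of the same computation.
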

\begin{proof}
Let $T$ have $k$ candidates. For each $\ell \in [k]$, let $G_{\ell} = \{W \subseteq T, |W| = \ell\}$, and let  $Z_{\ell} = \sum_{W \in G_{\ell}}  u_i(W).$ Note that $Z_k =  u_i(T)$. Then, we can write
$$\E[u_i(O)] = \sum_{\ell = 1}^k \alpha^{\ell} (1-\alpha)^{k-\ell} Z_{\ell} \ge \sum_{\ell = \beta}^k \alpha^{\ell} (1-\alpha)^{k-\ell} Z_{\ell}.$$
For a $\beta$-self bounding function and any $W \in G_{\ell+1}$ for $\ell \ge \beta$, we have 
$$ \sum_{j \in W} u_i(W\setminus \{j\}) \ge (\ell+1-\beta) \cdot u_i(W).$$
Summing this over $W \in G_{\ell+1}$, we obtain:
$$ (k-\ell) \cdot Z_{\ell}  \ge (\ell + 1 - \beta) \cdot Z_{\ell+1}.$$
By telescoping, $Z_{\ell} \ge {k-\beta \choose k-\ell} Z_k$ for  $\beta \le  \ell \le k-1$. Plugging this into the expression for $\E[u_i(O)]$:
$$ \E[u_i(O)] \ge Z_k \sum_{\ell = \beta}^k \alpha^{\ell} (1-\alpha)^{k-\ell} {k-\beta \choose k-\ell} = Z_k \alpha^{\beta} = \alpha^{\beta} u_i(T).$$
This completes the proof.
\end{proof}

Another nice property of self-bounding functions is the existence of Chernoff-style lower-tail bounds~\cite{Lugosi}, whose proof uses the log-Sobolev inequality of entropy. 

\begin{lemma}[Lower Tail~\cite{Lugosi}]
\label{lem:tail}
Suppose $u_i$ is a $\beta$-self bounding $1$-Lipschitz function. Given $T \subseteq C$, suppose we include each $j \in T$ 
 in $O$ independently with probability $\alpha \le 1$. Let $\mu_0 = \E[u_i(O)]$. Then for any $\delta \in (0,1)$, we have: 
$$ \Pr\left[u_i(O) \le (1-\delta) \mu_0\right] \le e^{-\frac{\delta^2 \mu_0}{2 \beta}}.$$
\end{lemma}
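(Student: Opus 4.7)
} This is the classical lower-tail concentration inequality for self-bounding functions due to Boucheron--Lugosi--Massart, and I would prove it via the entropy (Herbst) method in three stages: (i) verify that $Z := u_i(O)$ fits the BLM self-bounding framework as a function of independent Bernoullis, (ii) use a modified log-Sobolev inequality to bound the log--moment-generating function of $Z - \mu_0$, and (iii) apply Chernoff's argument to turn the MGF bound into the stated tail bound.

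\emph{Stage 1 (setup and verification).} Index $T = \{j_1,\ldots,j_m\}$ and let $X_\ell = \mathbbm{1}[j_\ell \in O]$, which are i.i.d.\ $\mathrm{Bernoulli}(\alpha)$. Write $Z = u_i(O)$ as a function $f(X_1,\ldots,X_m)$ of independent variables. For each $\ell$ define the ``leave-one-out'' variable $Z^{(\ell)} = u_i(O \setminus \{j_\ell\})$, which by construction does not depend on $X_\ell$. Monotonicity and the $1$-Lipschitz hypothesis give $0 \le Z - Z^{(\ell)} \le 1$, and summing over $\ell$ only the terms with $j_\ell \in O$ contribute, so the $\beta$-self-bounding hypothesis applied to the set $O$ yields
\[
\sum_{\ell=1}^m \bigl(Z - Z^{(\ell)}\bigr) \;=\; \sum_{j \in O}\bigl(u_i(O) - u_i(O\setminus\{j\})\bigr) \;\le\; \beta\, u_i(O) \;=\; \beta Z.
\]
Thus $Z$ is weakly $(\beta,0)$-self-bounding in the sense of BLM.

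\emph{Stage 2 (entropy/Herbst argument).} For $\lambda \le 0$ set $F(\lambda) = \E[e^{\lambda Z}]$. Tensorization of entropy bounds $\mathrm{Ent}(e^{\lambda Z})$ coordinate-by-coordinate, and the elementary inequality $x\log x - x \log y \le x \cdot \tau(\log(y/x))$ together with the pointwise comparison between $Z$ and $Z^{(\ell)}$ yields, after summing over $\ell$ and invoking the Stage~1 bound $\sum_\ell (Z - Z^{(\ell)}) \le \beta Z$, the differential inequality
\[
\lambda F'(\lambda) - F(\lambda)\ln F(\lambda) \;\le\; \beta\,\phi(-\lambda)\, F'(\lambda)/\lambda,
\qquad \phi(x) = e^x - 1 - x.
\]
Applying Herbst's ODE argument (dividing by $\lambda^2 F(\lambda)$ and integrating from $0$ to $\lambda$, using $\lim_{\lambda\to 0}\frac{\ln F(\lambda)}{\lambda}=\mu_0$) gives the sub-Poisson bound
\[
\ln \E\!\left[e^{\lambda (Z - \mu_0)}\right] \;\le\; \beta\mu_0 \,\phi(\lambda) \qquad \text{for all } \lambda \le 0.
\]
Since $\phi(\lambda) \le \lambda^2/2$ whenever $\lambda \le 0$, this collapses to the sub-Gaussian bound $\ln \E[e^{\lambda(Z-\mu_0)}] \le \tfrac{1}{2}\beta\mu_0 \lambda^2$.

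\emph{Stage 3 (Chernoff).} For any $\lambda < 0$, Markov's inequality and the bound from Stage~2 give
\[
\Pr[Z \le (1-\delta)\mu_0] \;\le\; e^{\lambda \delta \mu_0}\,\E\!\left[e^{\lambda(Z-\mu_0)}\right] \;\le\; \exp\!\left(\lambda\delta\mu_0 + \tfrac{1}{2}\beta\mu_0\lambda^2\right),
\]
and the minimizing choice $\lambda = -\delta/\beta$ delivers $\exp(-\delta^2\mu_0/(2\beta))$, matching the statement. The one delicate step is Stage~2: producing the correct form of the modified log-Sobolev inequality and pairing the terms so that the weak self-bounding hypothesis kicks in at the right place. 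Because this calculation is exactly the content of the BLM lower-tail theorem for weakly self-bounding functions (see the \citet{Lugosi} monograph), I would present Stages~1 and~3 in full and invoke the BLM result as a black box for Stage~2.
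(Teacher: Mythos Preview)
The paper does not prove this lemma at all: it is stated with attribution to \cite{Lugosi} and the only comment on its proof is the single sentence ``whose proof uses the log-Sobolev inequality of entropy.'' Your proposal is precisely the standard Boucheron--Lugosi--Massart argument the paper is invoking, so there is nothing to compare; your Stage~1 verification that $Z=u_i(O)$ is weakly $(\beta,0)$-self-bounding, the Herbst/entropy step in Stage~2, and the Chernoff optimization in Stage~3 are all correct and constitute the proof the paper is citing as a black box.
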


\paragraph{Proof of \cref{thm:endow2}} We will show this by contradiction. Let $W$ denote the $32$-approximate endowment core found by \cref{thm:endow}, so that $\cost(W) \le b$. For the sake of contradiction, suppose there is $S \subseteq V$ of size $\phi \cdot n$ and a $T \subseteq C$ with $\cost(T) \le \phi \cdot b$, so that for all $i \in S$, we have $u_i(T) \ge \eta \cdot \beta \cdot \gamma^{\beta} \cdot (u_i(W) + 1)$.

Let $b' = \phi \cdot b$, and $b'' = \frac{\phi}{\gamma} \cdot \kappa \cdot b$. Any candidate in $T$ has size at most $b'$. Therefore, the number of candidates of size at least $\frac{\phi}{\gamma} \cdot b$ is at most $\gamma$, and these have a total utility of at most $\gamma$ for any voter, since the utility functions are $1$-Lipschitz. Let $T' = \{j \in T\mid s_j \le  \frac{\phi}{\gamma} \cdot b\}$. We have $u_i(T') \ge  (\eta-1)\cdot  \beta \cdot \gamma^{\beta} \cdot (u_i(W) + 1)$ for all $i \in S$.

Suppose we sample each candidate in $T'$ with probability $\frac{1}{\gamma}$. Let $O$ denote the sampled set. By \cref{lem:XOS_2}, we have
$ \E[u_i(O)] \ge \frac{1}{\gamma^{\beta}} \cdot u_i(T') \ge (\eta-1)\cdot \beta \cdot (u_i(W) + 1)$
for all $i \in S$. Since $\beta \ge 1$, by \cref{lem:tail}, we have
$$ \Pr[u_i(O) \le u_i(W)] \le  e^{-\frac{(\eta-2)^2}{2(\eta-1)}}.$$
for all $i \in S$.  Let $S' = \{i \in S| u_i(O) > u_i(W)\}$, so that $\E[|S'|] \ge \big(1-e^{-\frac{(\eta-2)^2}{2(\eta-1)}}\big)\cdot|S|$. Thus for any $0<q<1$, we have $\Pr[|S'|<q\cdot|S|]<\frac{1}{1-q}\cdot  e^{-\frac{(\eta-2)^2}{2(\eta-1)}}$.

Further, we have $\E[\cost(O)] = \frac{1}{\gamma} \cdot \cost(T') \le \frac{\phi}{\gamma} \cdot b$. Since each $j \in O$ has $s_j \le \frac{\phi}{\gamma} \cdot b$, recall that $b'' = \frac{\phi}{\gamma} \cdot \kappa \cdot b$, from Chernoff bounds, we have
$ \Pr[\cost(O) > b''] \le \frac{e^{\kappa-1}}{\kappa^\kappa}.$

Therefore, if $q>32\cdot \frac{\kappa}{\gamma}$ and $\Pr[|S'|< q\cdot |S|]+\Pr[\cost(O)>b'']<1$, there is a constant probability that both $|S'|>q\cdot|S|$ and $\cost(O)\le b''$ holds. This means such a set $S'$ and committee $O$ always exists by the probabilistic method. We have $|S'| \ge q\cdot |S| = q\cdot \phi \cdot n>32\cdot \frac{\kappa}{\gamma}\cdot \phi\cdot n$ and $\cost(O) \le b'' = \frac{\kappa}{\gamma} \cdot\phi\cdot b$.  This contradicts that $W$ is a $32$-approximate endowment core. By plugging in the two inequalities $q>32\cdot \frac{\kappa}{\gamma}$ and $\Pr[|S'|< q\cdot |S|]+ \Pr[\cost(O)>b'']<1$, we get $W$ is a $c$-approximate core under \cref{def:alphaCore2} for 
$$
c\le \eta\cdot \beta\cdot \left(\frac{32\kappa}{1-\frac{e^{-(\eta-2)^2/2(\eta-1)}}{1- e^{\kappa-1}/\kappa^\kappa}}\right)^\beta.
$$
By setting $\kappa=1.454$ and $\eta=11.63$, we get $c\le 11.63\cdot \beta\cdot 54.6^\beta$, completing the proof.

\subsection{Lower Bound: Proof of Theorem~\ref{thm:lb00}}
\label{app:lb00}
We finally show \cref{thm:lb00}. The exponential in $\beta$ lower bound holds even for multiwinner elections and even with no additional constraints, hence complementing both \cref{thm:main1,thm:main2}.

At a high level, the instance is similar to that in \cite{MunagalaSWW22}. Fix constant $\beta \ge 5$ and let $r$ be a large number. There are 6 parties, $a,b,c,d,e,f$ and $r$ candidates in each party.  Choose $k = 3r$  as the committee size. There are 6 voters denoted $v_{ab}, v_{bc}, v_{ca}, v_{de}, v_{ef}, v_{fd}$. 

Given committee $W$, let $r \cdot x_{a}$ denote the number of candidates belonging to party $a$ that are chosen in $W$, and analogously for the remaining parties. Note that these quantities are multiples of $\frac{1}{r}$ and further, $x_a + x_b + x_c + x_d + x_e + x_f = 3$. 

Fix a constant $z = \left(\frac{3}{4}\right)^{\beta/2}$. The utility function of voters $v_{ab}, v_{bc}, v_{ca}$ are 
$$u_{ab}(W)= \frac{r}{\beta} \cdot \left(x_a^{\beta} + z \cdot (1 - x_a^{\beta}) \cdot x_b^{\beta} \right).$$ 
Similarly, we define the utility functions of $v_{bc}, v_{ca}, v_{de}, v_{ef}, v_{fd}$. Note that these utility functions are monotone in each $x$ variable since $z \le 1$.

Focus on utility function $u_{ab}$. We will show it is $\beta$-self bounding and $1$-Lipschitz. Since $x_a, x_b$ are multiples of $1/r$, removing one candidate from party $a$ corresponds to decreasing $x_a$ by $1/r$. The decrease in utility is upper bounded by 
$$ \Delta u_{ab}(W) \le \frac{r}{\beta} \cdot \beta x_a^{\beta-1} \cdot (1-z \cdot x_b^{\beta} ) \cdot \frac{1}{r} = x_a^{\beta-1} \cdot (1-z \cdot x_b^{\beta}) \le x_a^{\beta-1} \le 1.$$ 
Similarly, the decrease in utility by removing any one candidate in $b$ is upper bounded by 
$$ \Delta u_{ab}(W) \le \frac{r}{\beta} \cdot z \cdot (1-x_a^{\beta}) \cdot \beta x_b^{\beta-1} \cdot \frac{1}{r} = z \cdot (1-x_a^{\beta}) \cdot x_b^{\beta-1} \le 1.$$ 
Therefore, the total decrease in utility from removing each candidate in $W$ is bounded by
$$ r \cdot x_a \cdot x_a^{\beta-1} + r \cdot x_b \cdot z \cdot (1-x_a^{\beta}) \cdot x_b^{\beta-1} \le \beta \cdot u_{ab}(W).$$
Therefore, the utility functions are $\beta$-self bounding. Further, since each decrease is utility is upper bounded by $1$, the function is $1$-Lipschitz. The other utility functions behave identically.

We will now argue the lower bound on approximation to the core. First observe that  in committee $W$, there exists a pair of parties from either $\{a,b,c\}$ or from $\{d,e,f\}$ so that their fractions are both at most $3/4$. Otherwise, there are at least two parties from either set whose fractions are strictly larger than $3/4$, which contradicts the total fraction being at most $3$. Suppose $x_b \le \frac{3}{4}$ and $x_c \le \frac{3}{4}$. 

Consider the utilities of voters $v_{bc}$ and $v_{ca}$. We have
$$ u_{bc} (W)  = \frac{r}{\beta} \cdot \left(x_b^{\beta} + z \cdot (1 - x_b^{\beta}) \cdot x_c^{\beta} \right) \le \frac{r}{\beta} \cdot  (1+z) \cdot \left(\frac{3}{4}\right)^{\beta}  \le \frac{2r}{\beta} \cdot \left(\frac{3}{4}\right)^{\beta},$$
where we used $z \le 1$. Since $x_a \le 1$, we have
$$ u_{ca}(W) = \frac{r}{\beta} \cdot \left(x_c^{\beta} + z \cdot (1 - x_c^{\beta}) \cdot x_a^{\beta} \right)  \le \frac{r}{\beta} \cdot (x_c^{\beta} + z \cdot 1) \le \frac{2r}{\beta} \cdot z,$$
where the final inequality holds since $z = \left(\frac{3}{4}\right)^{\beta/2}$ and therefore $x_c^{\beta} \le \left(\frac{3}{4}\right)^{\beta} \le z$.

Now suppose voters $v_{bc}$ and $v_{ca}$ use their endowment of $r$ candidates to choose a deviating committee $W'$ with $x_c = 1$ and $x_a = x_b = 0$. Then,
$$ u_{bc}(W') = \frac{r}{\beta}  \cdot z \qquad \mbox{and} \qquad u_{ca}(W') = \frac{r}{\beta}.$$

Therefore, the increase in utility for either voter is a multiplicative factor of 
$$ \min \left( \frac{ u_{bc}(W')}{ u_{bc}(W)}, \frac{u_{ca}(W')}{u_{ca}(W)} \right) =  \min \left(\frac{1}{2z}, \frac{z}{2} \left( \frac{4}{3} \right)^{\beta} \right) =  \frac{1}{2} \left(\frac{4}{3}\right)^{\beta/2},$$ 
which is larger than $1$ assuming $\beta \ge 5$. Since we assumed $r \rightarrow \infty$ and $\beta$ (and hence $z$) is a constant, the quantities $u_{bc}(W'), u_{ca}(W') \gg 1$ and therefore, the additive term in \cref{def:alphaCore} can be ignored. This shows the $c$-approximate core is empty for $c = \frac{1}{2} \left(\frac{4}{3}\right)^{\beta/2} - o(1)$, completing the proof.

\section{Conclusion}
Via the notion of $\beta$-self-bounding functions and the new notion of restrained core, we have shown that neither (discrete) convexity of utilities nor allocation constraints nor computational hardness are a barrier to showing fairness properties for the Nash Welfare allocation in multiwinner elections. 

There are several open questions that arise. First, can the results in \cref{thm:main1} be extended to general sub-additive functions? We note that though these functions need not be self-bounding~\cite{vondrak2010note}, they are amenable to constant-factor approximate welfare maximizing allocations~\cite{Feige-subadditive}.   
It  would also be interesting to extend the Participatory Budgeting results (\cref{thm:main2}) to the restrained core with constraints (\cref{def:restrained}). This will require fundamentally new ideas, since the $\gpav$ rule in \cref{thm:main1} does not work with arbitrary sizes, and it is not clear how to extend the endowment approximation in \cite{DBLP:conf/stoc/JiangMW20} that is used for proving \cref{thm:main2} to handle constraints without scaling them down.  
Finally, it would be interesting to study the restrained core as a standalone notion of fairness in other contexts, for instance, allocation of private goods or for justified representation under arbitrary constraints, extending \cref{thm:ejr} beyond matroids. 

\bibliographystyle{ACM-Reference-Format}
\bibliography{ref.bib}

\end{document}